\DeclarePairedDelimiter\bra{\langle}{\rvert}
\DeclarePairedDelimiter\ket{\lvert}{\rangle}
\newenvironment{proof}{\textbf{Proof:}}{\hfill$\Box$\newline}
\tikzstyle{env}=[copoint,regular polygon rotate=0,minimum width=0.2cm, fill=black]
\tikzstyle{every picture}=[baseline=-0.25em]
\tikzstyle{dotpic}=[scale=0.5]
\tikzstyle{diredges}=[every to/.style={diredge}]
\tikzstyle{dot graph}=[shorten <=-0.1mm,shorten >=-0.1mm,scale=0.6]
\tikzstyle{plot point}=[circle,fill=black,minimum width=2mm,inner sep=0]
\tikzstyle{braceedge}=[decorate,decoration={brace,amplitude=2mm,raise=-1mm}]
\tikzstyle{small braceedge}=[decorate,decoration={brace,amplitude=1mm,raise=-1mm}]
\tikzstyle{left hook arrow}=[left hook-latex]
\tikzstyle{right hook arrow}=[right hook-latex]
\tikzstyle{dtriangle}=[fill=yellow,draw=black,shape=isosceles triangle,shape border rotate=-90,isosceles triangle stretches=true,inner sep=0.8pt,minimum width=0.25cm,minimum height=2mm]
\tikzstyle{vtriang}=[fill=yellow,draw=black,shape=isosceles triangle,shape border rotate=180,isosceles triangle stretches=true,inner sep=0.8pt,minimum width=0.25cm,minimum height=2mm]
\tikzstyle{trigmc}=[fill=green,draw=black,shape=isosceles triangle,shape border rotate=90,isosceles triangle stretches=true,inner sep=0.8pt,minimum width=0.3cm,minimum height=2mm]
\tikzstyle{vrt}=[fill=yellow,draw=black,shape=isosceles triangle,shape border rotate=0,isosceles triangle stretches=true,inner sep=0.8pt,minimum width=0.25cm,minimum height=2mm]
\tikzstyle{H box}=[rectangle,fill=yellow,draw=black,xscale=0.8,yscale=0.8, inner sep=0.6pt]
\tikzstyle{gbox}=[rectangle,fill=green,draw=black,xscale=1.0,yscale=1.0, inner sep=1.pt]
\tikzstyle{rbox}=[rectangle,fill=red,draw=black,xscale=1.0,yscale=1.0, inner sep=1.pt]
\tikzstyle{zhbx}=[rectangle,fill=white,draw=black,xscale=1.0,yscale=1.0, inner sep=1.6pt]
\tikzstyle{newh}=[rectangle,fill=yellow,draw=black,xscale=2.0,yscale=2.0, inner sep=1.6pt]
\tikzstyle{triangle}=[fill=yellow,draw=black,shape=isosceles triangle,shape border rotate=90,isosceles triangle stretches=true,inner sep=0.8pt,minimum width=0.25cm,minimum height=2mm]
\tikzstyle{bn}=[circle,fill=black,draw=black,scale=.4]
\tikzstyle{wn}=[circle,fill=white,draw=black,scale=.6]
\tikzstyle{dn}=[circle,fill=none,draw=gray]
\tikzstyle{bspider}=[fill=black,draw=black,scale=1,shape=isosceles triangle,shape border rotate=-90,isosceles triangle stretches=true,inner sep=1pt,minimum width=0.4cm,minimum height=3mm]
\tikzstyle{dbspider}=[fill=black,draw=black,scale=1,shape=isosceles triangle,shape border rotate=90,isosceles triangle stretches=true,inner sep=1pt,minimum width=0.4cm,minimum height=3mm]
\tikzstyle{L}=[rectangle,shape=rectangle,fill=green,draw=black]
\tikzstyle{Z dot}=[inner sep=0mm, minimum size=2mm, shape=circle, draw=black, fill={rgb,255: red,221; green,255; blue,221}]
\tikzstyle{Z phase dot}=[minimum size=5mm, font={\footnotesize\boldmath}, shape=rectangle, rounded corners=2mm, inner sep=0.2mm, outer sep=-2mm, scale=0.8, draw=black, fill={rgb,255: red,221; green,255; blue,221}]
\tikzstyle{X dot}=[Z dot, shape=circle, draw=black, fill={rgb,255: red,255; green,136; blue,136}]
\tikzstyle{X phase dot}=[Z phase dot, fill={rgb,255: red,255; green,136; blue,136}, font={\footnotesize\boldmath}]
\tikzstyle{hadamard edge}=[-, dashed, dash pattern=on 2pt off 0.5pt, thick, draw={rgb,255: red,68; green,136; blue,255}]
\tikzstyle{black dot}=[inner sep=0.7mm,minimum width=0pt,minimum height=0pt,fill=black,draw=black,shape=circle]
\tikzstyle{dot}=[black dot]
\tikzstyle{smalldot}=[inner sep=0.4mm,minimum width=0pt,minimum height=0pt,fill=black,draw=black,shape=circle]
\tikzstyle{white dot}=[dot,fill=white]
\tikzstyle{antipode}=[white dot,inner sep=0.3mm,font=\footnotesize]
\tikzstyle{smallwhitedot}=[smalldot,fill=white]
\tikzstyle{alt white dot}=[white dot,label={[xshift=3.07mm,yshift=-0.05mm,font=\footnotesize]left:$*$}]
\tikzstyle{gray dot}=[dot,fill=gray!40!white]
\tikzstyle{smallgraydot}=[smalldot,fill=gray!40!white]
\tikzstyle{box vertex}=[draw=black,rectangle]
\tikzstyle{small box}=[box vertex,fill=white]
\tikzstyle{whitebg}=[fill=white,inner sep=2pt]
\tikzstyle{graph state vertex}=[sg vertex,fill=black]
\tikzstyle{wide copoint}=[fill=white,draw=black,shape=isosceles triangle,shape border rotate=90,isosceles triangle stretches=true,inner sep=1pt,minimum width=1.5cm,minimum height=5mm]
\tikzstyle{wide point}=[fill=white,draw=black,shape=isosceles triangle,shape border rotate=-90,isosceles triangle stretches=true,inner sep=1pt,minimum width=1.5cm,minimum height=4mm]
\tikzstyle{very wide copoint}=[fill=white,draw=black,shape=isosceles triangle,shape border rotate=-90,isosceles triangle stretches=true,inner sep=1pt,minimum width=2.5cm,minimum height=4mm]
\tikzstyle{very wide empty copoint}=[draw=black,shape=isosceles triangle,shape border rotate=-90,isosceles triangle stretches=true,inner sep=1pt,minimum width=2.5cm,minimum height=4mm]
\tikzstyle{symm}=[ultra thick,shorten <=-1mm,shorten >=-1mm]
\tikzstyle{square box}=[rectangle,fill=white,draw=black,minimum height=5mm,minimum width=5mm,font=\small]
\tikzstyle{square gray box}=[rectangle,fill=gray!30,draw=black,minimum height=6mm,minimum width=6mm]
\tikzstyle{copoint}=[regular polygon,regular polygon sides=3,draw=black,scale=0.75,inner sep=-0.5pt,minimum width=7mm,fill=white]
\tikzstyle{point}=[regular polygon,regular polygon sides=3,draw=black,scale=0.75,inner sep=-0.5pt,minimum width=7mm,fill=white,regular polygon rotate=180]
\tikzstyle{gray point}=[point,fill=gray!40!white]
\tikzstyle{gray copoint}=[copoint,fill=gray!40!white]
\newcommand{\edgearrow}{{\arrow[black]{>}}}
\newcommand{\edgetick}{{\arrow[black,scale=0.7,very thick]{|}}}
\tikzstyle{diredge}=[->]
\tikzstyle{rdiredge}=[<-]
\tikzstyle{medium diredge}=[->]
\tikzstyle{short diredge}=[->]
\tikzstyle{halfedge}=[-)]
\tikzstyle{other halfedge}=[(-]
\tikzstyle{freeedge}=[(-)]
\tikzstyle{white edge}=[line width=5pt,white]
\tikzstyle{tick}=[postaction=decorate,decoration={markings, mark=at position 0.5 with \edgetick}]
\tikzstyle{small map edge}=[|-latex, gray!60!blue, shorten <=0.9mm, shorten >=0.5mm]
\tikzstyle{thick dashed edge}=[very thick,dashed,gray!40]
\tikzstyle{map edge}=[|-latex,very thick, gray!40, shorten <=1mm, shorten >=0.5mm]
\tikzstyle{tickedge}=[postaction=decorate,
\tikzstyle{dirtickedge}=[postaction=decorate,
\tikzstyle{dirdoubletickedge}=[postaction=decorate,
\newcommand{\boxshape}[3]{%
\pgfdeclareshape{#1}{
\inheritsavedanchors[from=rectangle] 
\inheritanchorborder[from=rectangle]
\inheritanchor[from=rectangle]{center}
\inheritanchor[from=rectangle]{north}
\inheritanchor[from=rectangle]{south}
\inheritanchor[from=rectangle]{west}
\inheritanchor[from=rectangle]{east}
\backgroundpath{
\southwest \pgf@xa=\pgf@x \pgf@ya=\pgf@y
\northeast \pgf@xb=\pgf@x \pgf@yb=\pgf@y

\@tempdima=#2
\@tempdimb=#3

\pgfpathmoveto{\pgfpoint{\pgf@xa - 5pt + \@tempdima}{\pgf@ya}}
\pgfpathlineto{\pgfpoint{\pgf@xa - 5pt - \@tempdima}{\pgf@yb}}
\pgfpathlineto{\pgfpoint{\pgf@xb + 5pt + \@tempdimb}{\pgf@yb}}
\pgfpathlineto{\pgfpoint{\pgf@xb + 5pt - \@tempdimb}{\pgf@ya}}
\pgfpathlineto{\pgfpoint{\pgf@xa - 5pt + \@tempdima}{\pgf@ya}}
\pgfpathclose
}
}}
\tikzstyle{map}=[draw,shape=NEbox,inner sep=7pt]
\tikzstyle{mapdag}=[draw,shape=SEbox,inner sep=7pt]
\tikzstyle{maptrans}=[draw,shape=SWbox,inner sep=7pt]
\tikzstyle{mapconj}=[draw,shape=NWbox,inner sep=7pt]
\tikzstyle{probs}=[shape=semicircle,fill=gray!40!white,draw=black,shape border rotate=180,minimum width=1.2cm]
\tikzstyle{arrs}=[-latex,font=\small,auto]
\tikzstyle{arrow plain}=[arrs]
\tikzstyle{arrow dashed}=[dashed,arrs]
\tikzstyle{arrow bold}=[very thick,arrs]
\tikzstyle{arrow hide}=[draw=white!0,-]
\tikzstyle{arrow reverse}=[latex-]
\tikzstyle{cdnode}=[]
\tikzstyle{gn}=[dot,fill=green,minimum width=0.25cm,inner sep=0pt]
\tikzstyle{rno}=[dot,fill=red,inner sep=0pt,minimum width=0.25cm]
\tikzstyle{rn}=[dot,fill=pink,inner sep=0pt,minimum width=0.25cm]
\tikzstyle{rc}=[dot,thick,fill=white,draw = red,minimum width=0.3cm,inner sep=0pt]
\tikzstyle{gc}=[dot,thick,fill=white,draw= green,inner sep=0pt,minimum width=0.3cm]
\tikzstyle{bc}=[dot,thick,fill=white,draw= blue,minimum width=0.3cm]
\tikzstyle{label}=[circle,fill=white,minimum width=0.3cm]
\tikzstyle{clocklabel}=[dot,fill=yellow,draw=black,font=\tiny,inner sep=0.75pt]
\tikzstyle{rsn}=[circle split,draw,fill=red,font=\tiny,inner sep=0.75pt]
\tikzstyle{gsn}=[circle split,draw,fill=green,font=\tiny,inner sep=0.75pt]
\tikzstyle{bsn}=[circle split,draw,fill=blue,font=\tiny,inner sep=0.75pt]
\tikzstyle{rsc}=[circle split,thick,draw= red,draw,fill=white,font=\tiny,inner sep=0.75pt]
\tikzstyle{gsc}=[circle split,thick,draw= green,draw,fill=white,font=\tiny,inner sep=0.75pt]
\tikzstyle{bsc}=[circle split,thick,draw= blue,draw,fill=white,font=\tiny,inner sep=0.75pt]
\tikzstyle{cnot}=[fill=white,shape=circle,inner sep=-1.4pt]
\tikzstyle{wire label}=[font=\tiny, auto]
\tikzstyle{cdiag}=[matrix of math nodes, row sep=3em, column sep=3em, text height=1.5ex, text depth=0.25ex,inner sep=0.5em]
\tikzstyle{arrow above}=[transform canvas={yshift=0.5ex}]
\tikzstyle{arrow below}=[transform canvas={yshift=-0.5ex}]
\newtheorem{Th}{Theorem}[section]
\newtheorem{theorem}[Th]{Theorem}
\newtheorem{proposition}[Th]{Proposition} 
\newtheorem{lemma}[Th]{Lemma}
\newtheorem{corollary}[Th]{Corollary}
\newtheorem{remark}[Th]{Remark}
\newcommand{\vast}{\bBigg@{6.5}}
\newcommand{\vertrule}[1][1ex]{\rule{.4pt}{#1}}
\title{ Algebraic complete axiomatisation of ZX-calculus with a normal form via elementary matrix operations}
\author{Quanlong Wang\\
Cambridge Quantum Computing Ltd.}
\begin{document}
\date{}\maketitle
\begin{abstract}
In this paper we give a complete axiomatisation of qubit ZX-calculus via elementary transformations which are basic operations in linear algebra. This formalism has two main advantages. First, all the operations of the phases are algebraic ones without trigonometry functions involved, thus paved the way for generalising complete axiomatisation of qubit ZX-calculus to qudit ZX-calculus and ZX-calculus over commutative semirings. Second, we characterise elementary transformations in terms of ZX diagrams, so a lot of linear algebra stuff 
can be done purely diagrammatically.  
\end{abstract}


  \section{ Introduction}

 
ZX-calculus was introduced by Coecke and Duncan \cite{CoeckeDuncan} as a graphical language for quantum computing. It is quite intuitive but still mathematically restrict as formalised in the framework of compact closed categories. The diagrams of ZX-calculus are mainly generated by so-called Z-spiders and X-spiders (based on Z and X quantum observables with a spider-like shape).  Any operation performed on a ZX diagram is just a replacement of a part of the diagram with another diagram according to a diagrammatic equality. Such  an operation is called a rewriting which is local in the sense that other parts of the operated diagram won't be affected by the rewriting. The property of being local can be seen as an advantage of ZX-calculus, since one has no need to remember any previous steps when rewriting a diagram.
 
 Since its invention, ZX-calculus has been focused on unitaries which are the core of quantum computing \cite{Nielsen}. Actually, the first proof of universality of ZX-calculus (which means each matrix of size $2^m \times 2^n$ can be represented by a ZX diagram) was based on representation of unitaries in terms of Z and X phases \cite{CoeckeDuncan}.  Moreover, ZX-calculus has exhibited its power in the application field of quantum circuit optimisation \cite{debeaudrapbianwang, KissingerG20, Cowtan_2020}.  On the other hand, ZX-calculus has been first proved to be complete for overall qubit quantum computing in \cite{ngwang} and incorporated in \cite{amarngwang}, which means quantum computation done by matrices can purely be done in ZX-calculus. Afterwards, there came a few different complete axiomatisations of ZX-calculus  \cite{jpvbeyondlics, jpvnormfmlics, Renaudprulelics}.  However, all of these  universal axiomatisations are non-algebraic in the sense that there are always some rules with trigonometry functions involved, which makes the application of non-algebraic rules quite difficult. This inconvenience was rescued in \cite{wangalg2020} by introducing an algebraic complete axiomatisation of ZX-calculus. Nevertheless, 
  except for \cite{jpvnormfmlics} whose completeness was established internally via a normal form of ZX diagrams, the other complete axiomatisations obtained their completeness externally, either deployed the completeness of ZW-calculus \cite{amarngwang}, or utilised pre-existing completeness of ZX-calculus. Could there be a universally complete axiomatisation of ZX-calculus which is both algebraic and internally established?   
 
 In this paper, we give a definite answer to that question by providing an algebraic complete axiomatisation of ZX-calculus with a normal form via elementary matrix operations. The idea of using elementary matrices comes from a simple observation: all of the elementary matrices of size $2\times 2$ already existed in the previous complete  axiomatisations of ZX-calculus \cite{amarngwang} and \cite{wangalg2020}. So it is natural to generalise these expressions for  arbitrary elementary matrix of size $2^n\times 2^n$. To focus on the completeness proof, the representation of arbitrary elementary row switching operations
 in ZX-calculus as presented in \cite{qwangslides} is not given in this paper,  which will be included in the further work \cite{wanglinearstm}. Based on the representation of elementary matrices in ZX, we obtain a normal form for any vectors, then any matrix can be represented by a diagram due to the map-state duality. This normal form leads to a proof of completeness, though full of nontrivial techniques. In addition, we gain an advantage that no scalable techniques are needed, while seriously considered in  \cite{tdsscalarble}.

  \section{ Algebraic ZX-calculus}
  In this section, we give generators and rules for algebraic ZX-calculus which is shown to be complete via a normal form. We assume that we are working in a compact closed category $\mathfrak{C}$.
  \begin{table}[!h]
\begin{center} 
\begin{tabular}{|r@{~}r@{~}c@{~}c|r@{~}r@{~}c@{~}c|}
\hline
$R_{Z,a}^{(n,m)}$&$:$&$n\to m$ & %
	\beginpgfgraphicnamed{TikZit//generalgreenspider}
	\InputIfFileExists{TikZit//generalgreenspider.tikz}{}{\input{./figures/TikZit//generalgreenspider.tikz}}%
	\endpgfgraphicnamed
  &  $\mathbb I$&$:$&$1\to 1$&%
	\beginpgfgraphicnamed{TikZit//Id}
	\InputIfFileExists{TikZit//Id.tikz}{}{\input{./figures/TikZit//Id.tikz}}%
	\endpgfgraphicnamed
 \\\hline
$H$&$:$&$1\to 1$ &%
	\beginpgfgraphicnamed{TikZit//newhadamard}
	\InputIfFileExists{TikZit//newhadamard.tikz}{}{\input{./figures/TikZit//newhadamard.tikz}}%
	\endpgfgraphicnamed
&  $\sigma$&$:$&$ 2\to 2$& %
	\beginpgfgraphicnamed{TikZit//swap}
	\InputIfFileExists{TikZit//swap.tikz}{}{\input{./figures/TikZit//swap.tikz}}%
	\endpgfgraphicnamed
\\\hline
   $C_a$&$:$&$ 0\to 2$& %
	\beginpgfgraphicnamed{TikZit//cap}
	\InputIfFileExists{TikZit//cap.tikz}{}{\input{./figures/TikZit//cap.tikz}}%
	\endpgfgraphicnamed
 &$ C_u$&$:$&$ 2\to 0$&%
	\beginpgfgraphicnamed{TikZit//cup}
	\InputIfFileExists{TikZit//cup.tikz}{}{\input{./figures/TikZit//cup.tikz}}%
	\endpgfgraphicnamed
 \\\hline
  $T$&$:$&$1\to 1$&%
	\beginpgfgraphicnamed{TikZit//triangle}
	\InputIfFileExists{TikZit//triangle.tikz}{}{\input{./figures/TikZit//triangle.tikz}}%
	\endpgfgraphicnamed
  & $T^{-1}$&$:$&$1\to 1$&%
	\beginpgfgraphicnamed{TikZit//triangleinv}
	\InputIfFileExists{TikZit//triangleinv.tikz}{}{\input{./figures/TikZit//triangleinv.tikz}}%
	\endpgfgraphicnamed
 \\\hline
\end{tabular}\caption{Generators of algebraic ZX-calculus, where $m,n\in \mathbb N$, $ a  \in \mathbb C$.} \label{qbzxgenerator} 
\end{center}
\end{table}
\FloatBarrier
\begin{remark}
In Table \ref{qbzxgenerator}, $C_a$ and $ C_u$ compose the compact structure of $\mathfrak{C}$, while $\sigma$ comprises  the symmetric structure of $\mathfrak{C}$.
  \end{remark}

 For simplicity, we make the following conventions: 
\[
	\beginpgfgraphicnamed{TikZit//spider0denote3}
	\InputIfFileExists{TikZit//spider0denote3.tikz}{}{\input{./figures/TikZit//spider0denote3.tikz}}%
	\endpgfgraphicnamed
 
\]
 where $\alpha \in \mathbb R, \tau \in \{ 0, \pi \}$.
\begin{remark}\label{finitegeneratorsrk}
The generator $R_{Z,a}^{(n,m)}$ can be equivalently  expressed in terms of other generators and the following three diagrams:
 \begin{equation}\label{finitegenerators}
	\beginpgfgraphicnamed{TikZit//greencopy}
	\begin{tikzpicture}
	\begin{pgfonlayer}{nodelayer}
		\node [style=gn] (0) at (0, 0) {};
		\node [style=none] (1) at (0, 0.25) {};
		\node [style=none] (2) at (-0.25, -0.25) {};
		\node [style=none] (3) at (0.25, -0.25) {};
	\end{pgfonlayer}
	\begin{pgfonlayer}{edgelayer}
		\draw (0) to (1.center);
		\draw (0) to (2.center);
		\draw (0) to (3.center);
	\end{pgfonlayer}
\end{tikzpicture}}%
	\endpgfgraphicnamed
 \quad  %
	\beginpgfgraphicnamed{TikZit//greencodot}
	\begin{tikzpicture}
	\begin{pgfonlayer}{nodelayer}
		\node [style=none] (0) at (0, 0.25) {};
		\node [style=gn] (1) at (0, 0) {};
	\end{pgfonlayer}
	\begin{pgfonlayer}{edgelayer}
		\draw (1) to (0.center);
	\end{pgfonlayer}
\end{tikzpicture}}%
	\endpgfgraphicnamed
 \quad  %
	\beginpgfgraphicnamed{TikZit//greenbxa}
	\begin{tikzpicture}
	\begin{pgfonlayer}{nodelayer}
		\node [style=none] (0) at (0, 0.25) {};
		\node [style=gbox] (1) at (0, 0) {$a$};
		\node [style=none] (2) at (0, -0.25) {};
	\end{pgfonlayer}
	\begin{pgfonlayer}{edgelayer}
		\draw (0.center) to (2.center);
	\end{pgfonlayer}
\end{tikzpicture}}%
	\endpgfgraphicnamed
 
 \end{equation}
\end{remark}

\begin{remark}
Comparing to the original generators of ZX-calculus as in \cite{CoeckeDuncan}, the generators $R_{Z,a}^{(n,m)}$, $T$ and  $T^{-1}$ are new ones, but they have been firstly introduced in  \cite{ngwang2}, and  essentially shown to be representable in terms of original generators  in  \cite{ngwang} and  \cite{amarngwang}. The $H$ node has a slight difference with the normal Hadamard node in a global scalar, which leads to the X spider defined in (H) distinguished from the normal X spider by a global scalar, in addition that only X-phase angles of $0$ and $\pi$ are defined now.  
  \end{remark}

There is a standard interpretation $\left\llbracket \cdot \right\rrbracket$ for the ZX diagrams:
\[
\left\llbracket %
	\beginpgfgraphicnamed{TikZit//generalgreenspider}
	\InputIfFileExists{TikZit//generalgreenspider.tikz}{}{\input{./figures/TikZit//generalgreenspider.tikz}}%
	\endpgfgraphicnamed
 \right\rrbracket=\ket{0}^{\otimes m}\bra{0}^{\otimes n}+a\ket{1}^{\otimes m}\bra{1}^{\otimes n},
\left\llbracket %
	\beginpgfgraphicnamed{TikZit//redspider0p}
	\InputIfFileExists{TikZit//redspider0p.tikz}{}{\input{./figures/TikZit//redspider0p.tikz}}%
	\endpgfgraphicnamed
 \right\rrbracket=\sum_{\substack{0\leq i_1, \cdots, i_m,  j_1, \cdots, j_n\leq 1\\ i_1+\cdots+ i_m\equiv  j_1+\cdots +j_n(mod~ 2)}}\ket{i_1, \cdots, i_m}\bra{j_1, \cdots, j_n},
\]
\[
\left\llbracket%
	\beginpgfgraphicnamed{TikZit//newhadamard}
	\begin{tikzpicture}
	\begin{pgfonlayer}{nodelayer}
		\node [style=newh] (0) at (0, 0) {};
		\node [style=none] (1) at (0, 0.5) {};
		\node [style=none] (2) at (0, -0.5) {};
	\end{pgfonlayer}
	\begin{pgfonlayer}{edgelayer}
		\draw (1.center) to (2.center);
	\end{pgfonlayer}
\end{tikzpicture}}%
	\endpgfgraphicnamed
\right\rrbracket=\begin{pmatrix}
        1 & 1 \\
        1 & -1
 \end{pmatrix}, \quad \left\llbracket%
	\beginpgfgraphicnamed{TikZit//triangle}
	\begin{tikzpicture}
	\begin{pgfonlayer}{nodelayer}
		\node [style=none] (0) at (0, 0.5) {};
		\node [style=triangle] (1) at (0, 0) {};
		\node [style=none] (2) at (0, -0.5) {};
	\end{pgfonlayer}
	\begin{pgfonlayer}{edgelayer}
		\draw (0.center) to (2.center);
	\end{pgfonlayer}
\end{tikzpicture}}%
	\endpgfgraphicnamed
\right\rrbracket=\begin{pmatrix}
        1 & 1 \\
        0 & 1
 \end{pmatrix}, \quad \quad
  \left\llbracket%
	\beginpgfgraphicnamed{TikZit//triangleinv}
	\begin{tikzpicture}
	\begin{pgfonlayer}{nodelayer}
		\node [style=none] (0) at (0.25, 0.25) {-{\scriptsize1}};
		\node [style=triangle] (1) at (0, 0) {};
		\node [style=none] (2) at (0, -0.5) {};
		\node [style=none] (3) at (0, 0.5) {};
	\end{pgfonlayer}
	\begin{pgfonlayer}{edgelayer}
		\draw (3.center) to (2.center);
	\end{pgfonlayer}
\end{tikzpicture}}%
	\endpgfgraphicnamed
\right\rrbracket=\begin{pmatrix}
        1 & -1 \\
        0 & 1
 \end{pmatrix}, \quad
\left\llbracket%
	\beginpgfgraphicnamed{TikZit//singleredpi}
	\begin{tikzpicture}
	\begin{pgfonlayer}{nodelayer}
		\node [style=none] (0) at (0, 0.5) {};
		\node [style=rn] (1) at (0, 0) {$\pi$};
		\node [style=none] (2) at (0, -0.5) {};
	\end{pgfonlayer}
	\begin{pgfonlayer}{edgelayer}
		\draw (0.center) to (2.center);
	\end{pgfonlayer}
\end{tikzpicture}}%
	\endpgfgraphicnamed
\right\rrbracket=\begin{pmatrix}
        0 & 1 \\
        1 & 0
 \end{pmatrix}, \quad
\left\llbracket%
	\beginpgfgraphicnamed{TikZit//Id}
	\begin{tikzpicture}
	\begin{pgfonlayer}{nodelayer}
		\node [style=none] (1) at (0.5, 0.3) {};
		\node [style=none] (2) at (0.5, -0.3) {};
		\node [style=none] (3) at (0.5, -0.5) {};
		\node [style=none] (4) at (0.5, 0.5) {};
	\end{pgfonlayer}
	\begin{pgfonlayer}{edgelayer}
		\draw (1.center) to (2.center);
	\end{pgfonlayer}
\end{tikzpicture}}%
	\endpgfgraphicnamed
\right\rrbracket=\begin{pmatrix}
        1 & 0 \\
        0 & 1
 \end{pmatrix}, 
  \]

\[
 \left\llbracket%
	\beginpgfgraphicnamed{TikZit//swap}
	\InputIfFileExists{TikZit//swap.tikz}{}{\input{./figures/TikZit//swap.tikz}}%
	\endpgfgraphicnamed
\right\rrbracket=\begin{pmatrix}
        1 & 0 & 0 & 0 \\
        0 & 0 & 1 & 0 \\
        0 & 1 & 0 & 0 \\
        0 & 0 & 0 & 1 
 \end{pmatrix}, \quad
  \left\llbracket%
	\beginpgfgraphicnamed{TikZit//cap}
	\begin{tikzpicture}
	\begin{pgfonlayer}{nodelayer}
		\node [style=none] (0) at (0, -0) {};
		\node [style=none] (1) at (1, -0) {};
	\end{pgfonlayer}
	\begin{pgfonlayer}{edgelayer}
		\draw [bend left=90, looseness=1.50] (0.center) to (1.center);
	\end{pgfonlayer}
\end{tikzpicture}}%
	\endpgfgraphicnamed
\right\rrbracket=\begin{pmatrix}
        1  \\
        0  \\
        0  \\
        1  \\
 \end{pmatrix}, \quad
   \left\llbracket%
	\beginpgfgraphicnamed{TikZit//cup}
	\begin{tikzpicture}
	\begin{pgfonlayer}{nodelayer}
		\node [style=none] (0) at (0, 0.5) {};
		\node [style=none] (1) at (1, 0.5) {};
	\end{pgfonlayer}
	\begin{pgfonlayer}{edgelayer}
		\draw [bend right=90, looseness=1.50] (0.center) to (1.center);
	\end{pgfonlayer}
\end{tikzpicture}}%
	\endpgfgraphicnamed
\right\rrbracket=\begin{pmatrix}
        1 & 0 & 0 & 1 
         \end{pmatrix}, 
 \quad
  \left\llbracket%
	\beginpgfgraphicnamed{TikZit//emptysquare}
	\InputIfFileExists{TikZit//emptysquare.tikz}{}{\input{./figures/TikZit//emptysquare.tikz}}%
	\endpgfgraphicnamed
\right\rrbracket=1,  
   \]

\[  \llbracket D_1\otimes D_2  \rrbracket =  \llbracket D_1  \rrbracket \otimes  \llbracket  D_2  \rrbracket, \quad 
 \llbracket D_1\circ D_2  \rrbracket =  \llbracket D_1  \rrbracket \circ  \llbracket  D_2  \rrbracket,
  \]
where 
$$ a  \in \mathbb C, \quad \ket{0}= \begin{pmatrix}
        1  \\
        0  \\
 \end{pmatrix}, \quad 
 \bra{0}=\begin{pmatrix}
        1 & 0 
         \end{pmatrix},
 \quad  \ket{1}= \begin{pmatrix}
        0  \\
        1  \\
 \end{pmatrix}, \quad 
  \bra{1}=\begin{pmatrix}
     0 & 1 
         \end{pmatrix}.
 $$

 \begin{figure}[!h]
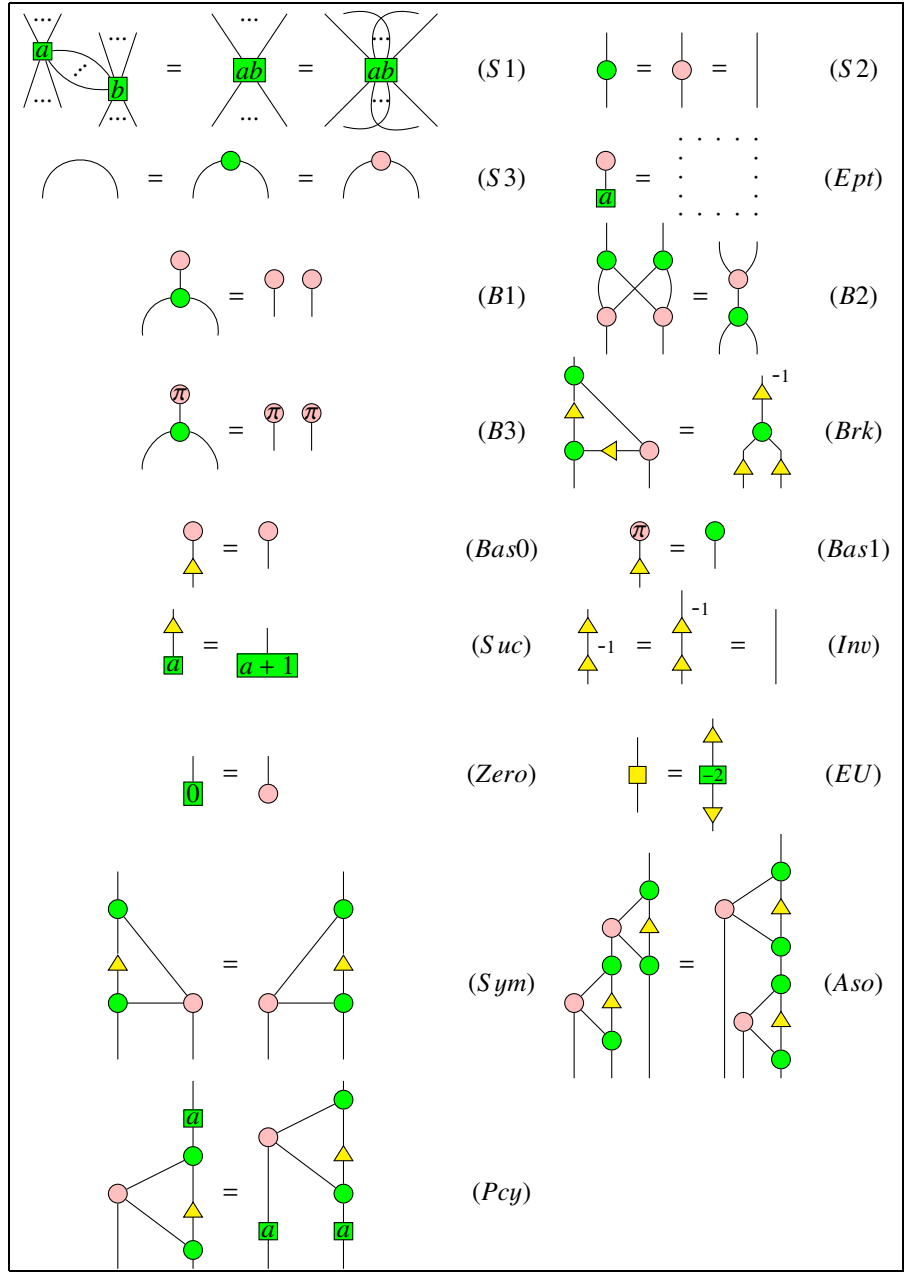

\begin{center}
\[
\quad \qquad\begin{array}{|cccc|}
\hline
	\beginpgfgraphicnamed{TikZit//generalgreenspiderfusesym}
	\InputIfFileExists{TikZit//generalgreenspiderfusesym.tikz}{}{\input{./figures/TikZit//generalgreenspiderfusesym.tikz}}%
	\endpgfgraphicnamed
&(S1) &%
	\beginpgfgraphicnamed{TikZit//s2new2}
	\InputIfFileExists{TikZit//s2new2.tikz}{}{\input{./figures/TikZit//s2new2.tikz}}%
	\endpgfgraphicnamed
 &(S2)\\
	\beginpgfgraphicnamed{TikZit//induced_compact_structure}
	\InputIfFileExists{TikZit//induced_compact_structure.tikz}{}{\input{./figures/TikZit//induced_compact_structure.tikz}}%
	\endpgfgraphicnamed
&(S3) & %
	\beginpgfgraphicnamed{TikZit//rdotaempty}
	\InputIfFileExists{TikZit//rdotaempty.tikz}{}{\input{./figures/TikZit//rdotaempty.tikz}}%
	\endpgfgraphicnamed
  &(Ept) \\
	\beginpgfgraphicnamed{TikZit//b1ring}
	\InputIfFileExists{TikZit//b1ring.tikz}{}{\input{./figures/TikZit//b1ring.tikz}}%
	\endpgfgraphicnamed
&(B1)  & %
	\beginpgfgraphicnamed{TikZit//b2ring}
	\InputIfFileExists{TikZit//b2ring.tikz}{}{\input{./figures/TikZit//b2ring.tikz}}%
	\endpgfgraphicnamed
&(B2)\\ 
  %
	\beginpgfgraphicnamed{TikZit//rpicopyns}
	\InputIfFileExists{TikZit//rpicopyns.tikz}{}{\input{./figures/TikZit//rpicopyns.tikz}}%
	\endpgfgraphicnamed
 &(B3)& %
	\beginpgfgraphicnamed{TikZit//anddflipns}
	\InputIfFileExists{TikZit//anddflipns.tikz}{}{\input{./figures/TikZit//anddflipns.tikz}}%
	\endpgfgraphicnamed
&(Brk) \\
 & &&\\
	\beginpgfgraphicnamed{TikZit//triangleocopy}
	\InputIfFileExists{TikZit//triangleocopy.tikz}{}{\input{./figures/TikZit//triangleocopy.tikz}}%
	\endpgfgraphicnamed
 &(Bas0) &%
	\beginpgfgraphicnamed{TikZit//trianglepicopyns}
	\InputIfFileExists{TikZit//trianglepicopyns.tikz}{}{\input{./figures/TikZit//trianglepicopyns.tikz}}%
	\endpgfgraphicnamed
&(Bas1)\\
	\beginpgfgraphicnamed{TikZit//plus1}
	\InputIfFileExists{TikZit//plus1.tikz}{}{\input{./figures/TikZit//plus1.tikz}}%
	\endpgfgraphicnamed
&(Suc)& %
	\beginpgfgraphicnamed{TikZit//triangleinvers}
	\InputIfFileExists{TikZit//triangleinvers.tikz}{}{\input{./figures/TikZit//triangleinvers.tikz}}%
	\endpgfgraphicnamed
  & (Inv) \\
   & &&\\
	\beginpgfgraphicnamed{TikZit//zerotoredns}
	\InputIfFileExists{TikZit//zerotoredns.tikz}{}{\input{./figures/TikZit//zerotoredns.tikz}}%
	\endpgfgraphicnamed
&(Zero)& %
	\beginpgfgraphicnamed{TikZit//eunoscalar}
	\InputIfFileExists{TikZit//eunoscalar.tikz}{}{\input{./figures/TikZit//eunoscalar.tikz}}%
	\endpgfgraphicnamed
&(EU) \\
	\beginpgfgraphicnamed{TikZit//lemma4}
	\InputIfFileExists{TikZit//lemma4.tikz}{}{\input{./figures/TikZit//lemma4.tikz}}%
	\endpgfgraphicnamed
&(Sym) &  %
	\beginpgfgraphicnamed{TikZit//associate}
	\InputIfFileExists{TikZit//associate.tikz}{}{\input{./figures/TikZit//associate.tikz}}%
	\endpgfgraphicnamed
 &(Aso)\\ 
	\beginpgfgraphicnamed{TikZit//TR1314combine2}
	\InputIfFileExists{TikZit//TR1314combine2.tikz}{}{\input{./figures/TikZit//TR1314combine2.tikz}}%
	\endpgfgraphicnamed
&(Pcy) &&\\ 
  		  		\hline
  		\end{array}\]  
  	\end{center}
  	\caption{Algebraic rules, $a, b \in \mathbb C.$ The upside-down flipped versions of the rules are assumed to hold as well. }\label{figurealgebra}
  \end{figure}
 \FloatBarrier
  \begin{remark}
As pointed out in \cite{wangalg2020}, the last three rules (Sym), (Aso), and (Pcy) are all about the properties of the W state (in a map form via map-state duality) %
	\beginpgfgraphicnamed{TikZit//wstatezx}
	\InputIfFileExists{TikZit//wstatezx.tikz}{}{\input{./figures/TikZit//wstatezx.tikz}}%
	\endpgfgraphicnamed
: (Sym) means the W state is symmetric, (Aso) means the W state is associative, and (Pcy) means any phase can be copied by the W state.
 \end{remark}

It can be verified that the rues in Figure \ref{figurealgebra} still hold under the standard interpretation $ \llbracket \cdot  \rrbracket $, which means the algebraic ZX-calculus is sound.

 \section{Derivable equalities mainly from previous reuslts}
 In this section, we derive equalities from algebraic rules in Figure \ref{figurealgebra}.  Some equalities have been essentially derived (up to scalars) in previous papers, but we prove them again due to generators and rules changed in the current formalism.
 
  For simplicity, we give two denotations as follows:
 \begin{equation}    \label{andshortnotationeq}
	\beginpgfgraphicnamed{TikZit//andshortnote}
	\InputIfFileExists{TikZit//andshortnote.tikz}{}{\input{./figures/TikZit//andshortnote.tikz}}%
	\endpgfgraphicnamed
 
      \end{equation}  

  Clearly, they have the  following relation
  \begin{equation}    \label{andshortnoterelat}
	\beginpgfgraphicnamed{TikZit//andshortnoterelation}
	\InputIfFileExists{TikZit//andshortnoterelation.tikz}{}{\input{./figures/TikZit//andshortnoterelation.tikz}}%
	\endpgfgraphicnamed
 
      \end{equation}  

  \begin{lemma}\cite{wangalg2020}
	\beginpgfgraphicnamed{TikZit//scalartimes2}
	\begin{tikzpicture}
	\begin{pgfonlayer}{nodelayer}
		\node [style=none] (0) at (0, 0) {$=$};
		\node [style=rn] (1) at (-0.5, 0.25) {$\pi$};
		\node [style=gbox] (2) at (0.5, 0) {${\scriptstyle a-1}$};
		\node [style=gbox] (3) at (-0.5, -0.25) {$a$};
	\end{pgfonlayer}
	\begin{pgfonlayer}{edgelayer}
		\draw (3) to (1);
	\end{pgfonlayer}
\end{tikzpicture}}%
	\endpgfgraphicnamed
 (Sca) 
      \end{lemma}    
   \begin{proof}
	\beginpgfgraphicnamed{TikZit//scalartimesprf2}
	\InputIfFileExists{TikZit//scalartimesprf2.tikz}{}{\input{./figures/TikZit//scalartimesprf2.tikz}}%
	\endpgfgraphicnamed
  
   \end{proof}

 \begin{corollary}\cite{wangalg2020}\label{zeroiscalarempty}
	\beginpgfgraphicnamed{TikZit//zeroscalarempty2}
	\InputIfFileExists{TikZit//zeroscalarempty2.tikz}{}{\input{./figures/TikZit//zeroscalarempty2.tikz}}%
	\endpgfgraphicnamed
  (Zos)
   \end{corollary} 

 \begin{proof}
	\beginpgfgraphicnamed{TikZit//zeroscalarempty2prf}
	\InputIfFileExists{TikZit//zeroscalarempty2prf.tikz}{}{\input{./figures/TikZit//zeroscalarempty2prf.tikz}}%
	\endpgfgraphicnamed
  
     \end{proof}   

   \begin{lemma}\cite{wangalg2020}
	\beginpgfgraphicnamed{TikZit//scalartimesgeneral}
	\begin{tikzpicture}
	\begin{pgfonlayer}{nodelayer}
		\node [style=gbox] (0) at (-1, 0) {$a$};
		\node [style=gbox] (1) at (-0.5, 0) {$b$};
		\node [style=gbox] (2) at (1, 0) {${\scriptstyle (a+1)(b+1)-1}$};
		\node [style=none] (3) at (0, 0) {$=$};
	\end{pgfonlayer}
\end{tikzpicture}}%
	\endpgfgraphicnamed
 (Sml)  
      \end{lemma}   
      
 \begin{proof}
	\beginpgfgraphicnamed{TikZit//scalartimesgeneralprf2}
	\InputIfFileExists{TikZit//scalartimesgeneralprf2.tikz}{}{\input{./figures/TikZit//scalartimesgeneralprf2.tikz}}%
	\endpgfgraphicnamed
  
   \end{proof}   
   
    \begin{corollary}\label{halfinverse}
	\beginpgfgraphicnamed{TikZit//halfinversedm}
	\InputIfFileExists{TikZit//halfinversedm.tikz}{}{\input{./figures/TikZit//halfinversedm.tikz}}%
	\endpgfgraphicnamed
  (Siv)
   \end{corollary} 
 \begin{proof}
	\beginpgfgraphicnamed{TikZit//halfinversedmprf}
	\InputIfFileExists{TikZit//halfinversedmprf.tikz}{}{\input{./figures/TikZit//halfinversedmprf.tikz}}%
	\endpgfgraphicnamed
  
   \end{proof}  
   
     \begin{lemma}
	\beginpgfgraphicnamed{TikZit//nhsquare}
	\InputIfFileExists{TikZit//nhsquare.tikz}{}{\input{./figures/TikZit//nhsquare.tikz}}%
	\endpgfgraphicnamed
 (H2)  
      \end{lemma} 
    \begin{proof}
	\beginpgfgraphicnamed{TikZit//nhsquareprf}
	\InputIfFileExists{TikZit//nhsquareprf.tikz}{}{\input{./figures/TikZit//nhsquareprf.tikz}}%
	\endpgfgraphicnamed
  
   \end{proof}  
   
      \begin{lemma} Suppose $\tau \in \{ 0, \pi \}$. Then
	\beginpgfgraphicnamed{TikZit//colorchanges}
	\InputIfFileExists{TikZit//colorchanges.tikz}{}{\input{./figures/TikZit//colorchanges.tikz}}%
	\endpgfgraphicnamed
 (H)  
      \end{lemma} 
   \begin{proof}
The proof directly follows from the definition of pink nodes (H), (H2), (Siv) and (Sca) . We also call this derived equality (H), since together with the definition of pink nodes (H) they compose the colour-change rule. 
   \end{proof}    
   
   \begin{lemma} Suppose $\tau, \sigma \in \{ 0, \pi \}$. Then
	\beginpgfgraphicnamed{TikZit//redspider0pifusion}
	\InputIfFileExists{TikZit//redspider0pifusion.tikz}{}{\input{./figures/TikZit//redspider0pifusion.tikz}}%
	\endpgfgraphicnamed
 (S1)  
      \end{lemma} 
   \begin{proof}
The proof directly follows from (S1), (H), (H2) and (Siv). We also call this derived equality (S1), since together with the green version of  (S1) they compose the spider fusion rule. 
   \end{proof}    
   
    \begin{lemma}\cite{msw2017}\label{hopfnslm}
	\beginpgfgraphicnamed{TikZit//hopfns}
	\InputIfFileExists{TikZit//hopfns.tikz}{}{\input{./figures/TikZit//hopfns.tikz}}%
	\endpgfgraphicnamed
 (Hopf)
\end{lemma}
   \begin{proof}
	\beginpgfgraphicnamed{TikZit//hopfnsprf}
	\InputIfFileExists{TikZit//hopfnsprf.tikz}{}{\input{./figures/TikZit//hopfnsprf.tikz}}%
	\endpgfgraphicnamed
  
   \end{proof} 
   
     \begin{lemma}\cite{wangalg2020}
	\beginpgfgraphicnamed{TikZit//redpitogreen2}
	\InputIfFileExists{TikZit//redpitogreen2.tikz}{}{\input{./figures/TikZit//redpitogreen2.tikz}}%
	\endpgfgraphicnamed
 (Bas1')
  \end{lemma} 
   \begin{proof}
	\beginpgfgraphicnamed{TikZit//redpitogreen2prf}
	\InputIfFileExists{TikZit//redpitogreen2prf.tikz}{}{\input{./figures/TikZit//redpitogreen2prf.tikz}}%
	\endpgfgraphicnamed
  
   \end{proof} 
   
    \begin{lemma}\cite{wangalg2020}\label{2eprf}
	\beginpgfgraphicnamed{TikZit//zx2e}
	\InputIfFileExists{TikZit//zx2e.tikz}{}{\input{./figures/TikZit//zx2e.tikz}}%
	\endpgfgraphicnamed
 
    \end{lemma}   
  \begin{proof}
$$ %
	\beginpgfgraphicnamed{TikZit//tr8primeprf32}
	\InputIfFileExists{TikZit//tr8primeprf32.tikz}{}{\input{./figures/TikZit//tr8primeprf32.tikz}}%
	\endpgfgraphicnamed
  $$
    \end{proof}     
  
   \begin{lemma}
\[ %
	\beginpgfgraphicnamed{TikZit//brkvariant}
	\InputIfFileExists{TikZit//brkvariant.tikz}{}{\input{./figures/TikZit//brkvariant.tikz}}%
	\endpgfgraphicnamed
   (Brk) \]
    \end{lemma}
     \begin{proof}
\[  %
	\beginpgfgraphicnamed{TikZit//brkvariantprf}
	\InputIfFileExists{TikZit//brkvariantprf.tikz}{}{\input{./figures/TikZit//brkvariantprf.tikz}}%
	\endpgfgraphicnamed
  \]
  The second equality can be obtained via the symmetry of green and pink spiders. We  call this derived equality (Brk) as well, since it is a variant of the  (Brk).
   \end{proof}

  \begin{lemma}\cite{wangalg2020}\label{trianglehopflm}
\[ %
	\beginpgfgraphicnamed{TikZit//trianglehopfns}
	\InputIfFileExists{TikZit//trianglehopfns.tikz}{}{\input{./figures/TikZit//trianglehopfns.tikz}}%
	\endpgfgraphicnamed
  \]
    \end{lemma}
  \begin{proof}
\[ %
	\beginpgfgraphicnamed{TikZit//Hopftrprf2}
	\InputIfFileExists{TikZit//Hopftrprf2.tikz}{}{\input{./figures/TikZit//Hopftrprf2.tikz}}%
	\endpgfgraphicnamed
  \]
 The second equality can be obtained via the symmetry of green and pink spiders. 
   \end{proof}
 
  \begin{lemma}\cite{wangalg2020}\label{2mprf}
	\beginpgfgraphicnamed{TikZit//2triangledeloopnopi2}
	\InputIfFileExists{TikZit//2triangledeloopnopi2.tikz}{}{\input{./figures/TikZit//2triangledeloopnopi2.tikz}}%
	\endpgfgraphicnamed
 
 \end{lemma}  
 
  \begin{proof}
 $$%
	\beginpgfgraphicnamed{TikZit//brk1prf2}
	\InputIfFileExists{TikZit//brk1prf2.tikz}{}{\input{./figures/TikZit//brk1prf2.tikz}}%
	\endpgfgraphicnamed
  $$
  The second equality can be obtained via the symmetry of green and pink spiders. 
   \end{proof}

    \begin{lemma}\cite{wangalg2020}\label{triangleonreddotlm}
	\beginpgfgraphicnamed{TikZit//triangleonreddot}
	\InputIfFileExists{TikZit//triangleonreddot.tikz}{}{\input{./figures/TikZit//triangleonreddot.tikz}}%
	\endpgfgraphicnamed
 
 \end{lemma}  
    \begin{proof}
	\beginpgfgraphicnamed{TikZit//triangleonreddotprf}
	\InputIfFileExists{TikZit//triangleonreddotprf.tikz}{}{\input{./figures/TikZit//triangleonreddotprf.tikz}}%
	\endpgfgraphicnamed
  
   \end{proof} 
   
   \begin{lemma}\label{2trianglebw2gnlm} 
\[ %
	\beginpgfgraphicnamed{TikZit//2trianglebw2gnlmdm}
	\InputIfFileExists{TikZit//2trianglebw2gnlmdm.tikz}{}{\input{./figures/TikZit//2trianglebw2gnlmdm.tikz}}%
	\endpgfgraphicnamed
  \]
 \end{lemma}
 \begin{proof}
 \[ %
	\beginpgfgraphicnamed{TikZit//2trianglebw2gnlmdmprf2}
	\InputIfFileExists{TikZit//2trianglebw2gnlmdmprf2.tikz}{}{\input{./figures/TikZit//2trianglebw2gnlmdmprf2.tikz}}%
	\endpgfgraphicnamed
  \]
 \end{proof}
 
 \begin{corollary} \label{andcopy}
	\beginpgfgraphicnamed{TikZit//andcopymet}
	\InputIfFileExists{TikZit//andcopymet.tikz}{}{\input{./figures/TikZit//andcopymet.tikz}}%
	\endpgfgraphicnamed

  \end{corollary}

   \begin{lemma}\cite{wangalg2020}\label{TR4g}
  \begin{equation}\label{TR4geq}
	\beginpgfgraphicnamed{TikZit//tr4g2}
	\InputIfFileExists{TikZit//tr4g2.tikz}{}{\input{./figures/TikZit//tr4g2.tikz}}%
	\endpgfgraphicnamed
 
   \end{equation}
    \end{lemma}
 \begin{proof}
 $$%
	\beginpgfgraphicnamed{TikZit//tr4gprf2}
	\InputIfFileExists{TikZit//tr4gprf2.tikz}{}{\input{./figures/TikZit//tr4gprf2.tikz}}%
	\endpgfgraphicnamed
$$  
 The other part can be obtained by the symmetry of green spider.
  \end{proof}  
  
    \begin{lemma}\cite{wangalg2020}\label{Hopfgtr}
  \begin{equation}\label{Hopfgtreq}
	\beginpgfgraphicnamed{TikZit//trianglehopfgreen2}
	\InputIfFileExists{TikZit//trianglehopfgreen2.tikz}{}{\input{./figures/TikZit//trianglehopfgreen2.tikz}}%
	\endpgfgraphicnamed

  \end{equation}
    \end{lemma}
 \begin{proof}
$$ %
	\beginpgfgraphicnamed{TikZit//trianglehopfgreenprf2}
	\InputIfFileExists{TikZit//trianglehopfgreenprf2.tikz}{}{\input{./figures/TikZit//trianglehopfgreenprf2.tikz}}%
	\endpgfgraphicnamed
  $$
 The second equality  can be obtained by the symmetry of green spider.
  \end{proof}

  \begin{lemma}\cite{msw2017}\label{gpiinhadalm}
	\beginpgfgraphicnamed{TikZit//gpiinhada}
	\InputIfFileExists{TikZit//gpiinhada.tikz}{}{\input{./figures/TikZit//gpiinhada.tikz}}%
	\endpgfgraphicnamed

\end{lemma}

    \begin{proof}
	\beginpgfgraphicnamed{TikZit//gpiinhadaprf}
	\InputIfFileExists{TikZit//gpiinhadaprf.tikz}{}{\input{./figures/TikZit//gpiinhadaprf.tikz}}%
	\endpgfgraphicnamed
  
   \end{proof}  
   
\begin{lemma}\cite{msw2017}\label{pimultiplecplm}
Suppose $m \geq 0$. Then
	\beginpgfgraphicnamed{TikZit//pigrcopy}
	\InputIfFileExists{TikZit//pigrcopy.tikz}{}{\input{./figures/TikZit//pigrcopy.tikz}}%
	\endpgfgraphicnamed
 (Pic)
\end{lemma}
   \begin{proof}
\[  %
	\beginpgfgraphicnamed{TikZit//pigrcopyprf}
	\InputIfFileExists{TikZit//pigrcopyprf.tikz}{}{\input{./figures/TikZit//pigrcopyprf.tikz}}%
	\endpgfgraphicnamed
  \]
The general case follows directly from the above two special cases.
   \end{proof}  

 \begin{corollary}
	\beginpgfgraphicnamed{TikZit//pimultiplecp}
	\InputIfFileExists{TikZit//pimultiplecp.tikz}{}{\input{./figures/TikZit//pimultiplecp.tikz}}%
	\endpgfgraphicnamed
  (Pic)
   \end{corollary} 
 \begin{proof}
It follows directly from Lemma \ref{pimultiplecplm} and the colour-change rule (H). We call this equality  (Pic) as well.
   \end{proof}  
   
       \begin{corollary}
  \begin{equation}
	\beginpgfgraphicnamed{TikZit//2triangledeloopnopiflipns}
	\InputIfFileExists{TikZit//2triangledeloopnopiflipns.tikz}{}{\input{./figures/TikZit//2triangledeloopnopiflipns.tikz}}%
	\endpgfgraphicnamed
 \quad (Brk1') 
   \end{equation}  
    \end{corollary}  
     \begin{proof}
 $$%
	\beginpgfgraphicnamed{TikZit//brk1primeprf2}
	\InputIfFileExists{TikZit//brk1primeprf2.tikz}{}{\input{./figures/TikZit//brk1primeprf2.tikz}}%
	\endpgfgraphicnamed
  $$
 The second equality  can be proved by symmetry of green spider. 
   \end{proof}

    \begin{lemma}\label{trianglerpidotlm}
	\beginpgfgraphicnamed{TikZit//trianglerpidot}
	\InputIfFileExists{TikZit//trianglerpidot.tikz}{}{\input{./figures/TikZit//trianglerpidot.tikz}}%
	\endpgfgraphicnamed
 
    \end{lemma}     
     \begin{proof}
	\beginpgfgraphicnamed{TikZit//trianglerpidotprf}
	\InputIfFileExists{TikZit//trianglerpidotprf.tikz}{}{\input{./figures/TikZit//trianglerpidotprf.tikz}}%
	\endpgfgraphicnamed
  
    \end{proof}   

 \begin{lemma}\cite{wangalg2020}
	\beginpgfgraphicnamed{TikZit//zerodecom2}
	\InputIfFileExists{TikZit//zerodecom2.tikz}{}{\input{./figures/TikZit//zerodecom2.tikz}}%
	\endpgfgraphicnamed
 (Zero') 
  \end{lemma}
 \begin{proof}
$$ %
	\beginpgfgraphicnamed{TikZit//zerodecomprf2ns}
	\InputIfFileExists{TikZit//zerodecomprf2ns.tikz}{}{\input{./figures/TikZit//zerodecomprf2ns.tikz}}%
	\endpgfgraphicnamed
  $$
     \end{proof}

\begin{lemma}\cite{wangalg2020}\label{tr5primelm}
\[%
	\beginpgfgraphicnamed{TikZit//tr5prime2}
	\InputIfFileExists{TikZit//tr5prime2.tikz}{}{\input{./figures/TikZit//tr5prime2.tikz}}%
	\endpgfgraphicnamed
\]
\end{lemma}
  \begin{proof}
$$ %
	\beginpgfgraphicnamed{TikZit//tr5primeprf2}
	\InputIfFileExists{TikZit//tr5primeprf2.tikz}{}{\input{./figures/TikZit//tr5primeprf2.tikz}}%
	\endpgfgraphicnamed
  $$
    \end{proof}

\begin{lemma}\label{1triangle1pibw2gnlm} 
\[ %
	\beginpgfgraphicnamed{TikZit//1triangle1pibw2gndm}
	\InputIfFileExists{TikZit//1triangle1pibw2gndm.tikz}{}{\input{./figures/TikZit//1triangle1pibw2gndm.tikz}}%
	\endpgfgraphicnamed
  \]
 \end{lemma}
 \begin{proof}
 \[ %
	\beginpgfgraphicnamed{TikZit//1triangle1pibw2gndmprf}
	\InputIfFileExists{TikZit//1triangle1pibw2gndmprf.tikz}{}{\input{./figures/TikZit//1triangle1pibw2gndmprf.tikz}}%
	\endpgfgraphicnamed
  \]
 \end{proof}

 \begin{lemma}\label{1tricpto2redlm} 
\[ %
	\beginpgfgraphicnamed{TikZit//1tricpto2red}
	\InputIfFileExists{TikZit//1tricpto2red.tikz}{}{\input{./figures/TikZit//1tricpto2red.tikz}}%
	\endpgfgraphicnamed
  \]
 \end{lemma}
 \begin{proof}
 \[ %
	\beginpgfgraphicnamed{TikZit//1tricpto2redprf}
	\InputIfFileExists{TikZit//1tricpto2redprf.tikz}{}{\input{./figures/TikZit//1tricpto2redprf.tikz}}%
	\endpgfgraphicnamed
  \]
 \end{proof}
 
   \begin{lemma}\cite{wangalg2020}\label{trianglecopylrlm}
\[  %
	\beginpgfgraphicnamed{TikZit//trianglecopylr}
	\InputIfFileExists{TikZit//trianglecopylr.tikz}{}{\input{./figures/TikZit//trianglecopylr.tikz}}%
	\endpgfgraphicnamed
 \]
 \end{lemma}    
   \begin{proof}
\[ %
	\beginpgfgraphicnamed{TikZit//trianglecopylrprf2}
	\InputIfFileExists{TikZit//trianglecopylrprf2.tikz}{}{\input{./figures/TikZit//trianglecopylrprf2.tikz}}%
	\endpgfgraphicnamed
  \]
       \end{proof}

  \begin{lemma}\cite{wangalg2020}
   \begin{equation*}
	\beginpgfgraphicnamed{TikZit//equivalentaddrulens}
	\InputIfFileExists{TikZit//equivalentaddrulens.tikz}{}{\input{./figures/TikZit//equivalentaddrulens.tikz}}%
	\endpgfgraphicnamed
 \quad (AD') 
   \end{equation*}    
     \end{lemma}
    \begin{proof}
    If  $b\neq 0$, then 
$$%
	\beginpgfgraphicnamed{TikZit//equivalentaddruleprf2}
	\InputIfFileExists{TikZit//equivalentaddruleprf2.tikz}{}{\input{./figures/TikZit//equivalentaddruleprf2.tikz}}%
	\endpgfgraphicnamed
  $$
 If $b= 0$, then
$$%
	\beginpgfgraphicnamed{TikZit//equivalentaddruleprf0ns}
	\InputIfFileExists{TikZit//equivalentaddruleprf0ns.tikz}{}{\input{./figures/TikZit//equivalentaddruleprf0ns.tikz}}%
	\endpgfgraphicnamed
  $$
   \end{proof}   
 
 \begin{lemma}\cite{wangalg2020}\label{additiongbxlm}
	\beginpgfgraphicnamed{TikZit//additiongbx}
	\InputIfFileExists{TikZit//additiongbx.tikz}{}{\input{./figures/TikZit//additiongbx.tikz}}%
	\endpgfgraphicnamed
 
  \end{lemma}
  \begin{proof}
$$ %
	\beginpgfgraphicnamed{TikZit//addprf}
	\InputIfFileExists{TikZit//addprf.tikz}{}{\input{./figures/TikZit//addprf.tikz}}%
	\endpgfgraphicnamed
  $$
   \end{proof}

 \begin{lemma}\cite{wangalg2020}
	\beginpgfgraphicnamed{TikZit//definitionTriangleInverse2}
	\InputIfFileExists{TikZit//definitionTriangleInverse2.tikz}{}{\input{./figures/TikZit//definitionTriangleInverse2.tikz}}%
	\endpgfgraphicnamed
 (Ivt)
   \end{lemma}
  \begin{proof}
$$ %
	\beginpgfgraphicnamed{TikZit//invtriprf}
	\InputIfFileExists{TikZit//invtriprf.tikz}{}{\input{./figures/TikZit//invtriprf.tikz}}%
	\endpgfgraphicnamed
  $$
    \end{proof}   

\begin{lemma}\cite{wangalg2020}\label{1iprf}
 %
	\beginpgfgraphicnamed{TikZit//picommutationdm}
	\InputIfFileExists{TikZit//picommutationdm.tikz}{}{\input{./figures/TikZit//picommutationdm.tikz}}%
	\endpgfgraphicnamed
 
   \end{lemma} 
 \begin{proof}
\[ %
	\beginpgfgraphicnamed{TikZit//k2prf}
	\InputIfFileExists{TikZit//k2prf.tikz}{}{\input{./figures/TikZit//k2prf.tikz}}%
	\endpgfgraphicnamed
  \]
The proof completes when setting $a=e^{i\alpha}$.
 \end{proof}
 
  \begin{lemma}\cite{wangalg2020}\label{gpiintriangleslm}
	\beginpgfgraphicnamed{TikZit//gpiintriangles}
	\InputIfFileExists{TikZit//gpiintriangles.tikz}{}{\input{./figures/TikZit//gpiintriangles.tikz}}%
	\endpgfgraphicnamed
 
     \end{lemma}
  \begin{proof}
$$ %
	\beginpgfgraphicnamed{TikZit//gpiintrianglesprf}
	\InputIfFileExists{TikZit//gpiintrianglesprf.tikz}{}{\input{./figures/TikZit//gpiintrianglesprf.tikz}}%
	\endpgfgraphicnamed
  $$
    \end{proof}

 \begin{corollary}\cite{wangalg2020}\label{pitinvcomut}
   \begin{equation}\label{pitinvcomuteq}
	\beginpgfgraphicnamed{TikZit/rpitrinverse}
	\InputIfFileExists{TikZit/rpitrinverse.tikz}{}{\input{./figures/TikZit/rpitrinverse.tikz}}%
	\endpgfgraphicnamed
 
     \end{equation}
 \end{corollary}
  \begin{proof}
$$ %
	\beginpgfgraphicnamed{TikZit//rpitrinverseprf}
	\InputIfFileExists{TikZit//rpitrinverseprf.tikz}{}{\input{./figures/TikZit//rpitrinverseprf.tikz}}%
	\endpgfgraphicnamed
  $$
    \end{proof} 

  \begin{lemma}\cite{wangalg2020}\label{andgate2v}
	\beginpgfgraphicnamed{TikZit//andgate2vs}
	\InputIfFileExists{TikZit//andgate2vs.tikz}{}{\input{./figures/TikZit//andgate2vs.tikz}}%
	\endpgfgraphicnamed
 
 \end{lemma}
   \begin{proof}
   $$  %
	\beginpgfgraphicnamed{TikZit//andgate2vsprf}
	\InputIfFileExists{TikZit//andgate2vsprf.tikz}{}{\input{./figures/TikZit//andgate2vsprf.tikz}}%
	\endpgfgraphicnamed
 $$
   The second equality can be obtained via symmetry.
  \end{proof}

  \begin{lemma}\cite{wangalg2020}\label{trianglehopflip}
 \begin{equation}\label{trianglehopflipeq}
	\beginpgfgraphicnamed{TikZit//trianglehopfflip}
	\InputIfFileExists{TikZit//trianglehopfflip.tikz}{}{\input{./figures/TikZit//trianglehopfflip.tikz}}%
	\endpgfgraphicnamed
 
   \end{equation} 
    \end{lemma}
  \begin{proof}
 $$%
	\beginpgfgraphicnamed{TikZit//trianglehopfflipprf}
	\InputIfFileExists{TikZit//trianglehopfflipprf.tikz}{}{\input{./figures/TikZit//trianglehopfflipprf.tikz}}%
	\endpgfgraphicnamed
  $$
   \end{proof}   

 \begin{lemma}\cite{wangalg2020}\label{2kprf}
	\beginpgfgraphicnamed{TikZit//2triangleup}
	\InputIfFileExists{TikZit//2triangleup.tikz}{}{\input{./figures/TikZit//2triangleup.tikz}}%
	\endpgfgraphicnamed
 
   \end{lemma}  
   
 \begin{proof}
$$  %
	\beginpgfgraphicnamed{TikZit//2triangleupprf}
	\InputIfFileExists{TikZit//2triangleupprf.tikz}{}{\input{./figures/TikZit//2triangleupprf.tikz}}%
	\endpgfgraphicnamed
$$
  \end{proof}

      \begin{lemma}\cite{wangalg2020}
	\beginpgfgraphicnamed{TikZit//anddflipwitha2}
	\InputIfFileExists{TikZit//anddflipwitha2.tikz}{}{\input{./figures/TikZit//anddflipwitha2.tikz}}%
	\endpgfgraphicnamed
 (Brkp)      
   \end{lemma} 
   \begin{proof}
   \[  %
	\beginpgfgraphicnamed{TikZit//anddflipwitha2prf}
	\InputIfFileExists{TikZit//anddflipwitha2prf.tikz}{}{\input{./figures/TikZit//anddflipwitha2prf.tikz}}%
	\endpgfgraphicnamed
 \]
   Therefore,
      \[  %
	\beginpgfgraphicnamed{TikZit//anddflipwitha2prf2}
	\InputIfFileExists{TikZit//anddflipwitha2prf2.tikz}{}{\input{./figures/TikZit//anddflipwitha2prf2.tikz}}%
	\endpgfgraphicnamed
 \]
     \end{proof}   

  \begin{lemma}\cite{wangalg2020}\label{andbial}
	\beginpgfgraphicnamed{TikZit//ruleA3}
	\InputIfFileExists{TikZit//ruleA3.tikz}{}{\input{./figures/TikZit//ruleA3.tikz}}%
	\endpgfgraphicnamed
 (BiA)
 \end{lemma}
  
   \begin{proof}
 $$  %
	\beginpgfgraphicnamed{TikZit//andbialgebraprf}
	\InputIfFileExists{TikZit//andbialgebraprf.tikz}{}{\input{./figures/TikZit//andbialgebraprf.tikz}}%
	\endpgfgraphicnamed
 $$
  \end{proof}
  
    \begin{corollary}\cite{wangalg2020}\label{generalbialgebra}
	\beginpgfgraphicnamed{TikZit//generalBiA}
	\InputIfFileExists{TikZit//generalBiA.tikz}{}{\input{./figures/TikZit//generalBiA.tikz}}%
	\endpgfgraphicnamed
 
 \end{corollary}
 
 \begin{corollary}[\cite{bobanthonywang}]\label{generalbialgebra}
   For any $k\geq 0$, we have   
$$%
	\beginpgfgraphicnamed{TikZit//generalBiAvariant}
	\InputIfFileExists{TikZit//generalBiAvariant.tikz}{}{\input{./figures/TikZit//generalBiAvariant.tikz}}%
	\endpgfgraphicnamed
 $$
 or equivalently,
$$%
	\beginpgfgraphicnamed{TikZit//appendixL32eqv}
	\InputIfFileExists{TikZit//appendixL32eqv.tikz}{}{\input{./figures/TikZit//appendixL32eqv.tikz}}%
	\endpgfgraphicnamed
 $$
where
$$%
	\beginpgfgraphicnamed{TikZit//appendixL32a}
	\InputIfFileExists{TikZit//appendixL32a.tikz}{}{\input{./figures/TikZit//appendixL32a.tikz}}%
	\endpgfgraphicnamed
, \quad\quad\quad %
	\beginpgfgraphicnamed{TikZit//appendixL32b}
	\InputIfFileExists{TikZit//appendixL32b.tikz}{}{\input{./figures/TikZit//appendixL32b.tikz}}%
	\endpgfgraphicnamed
 $$
 \end{corollary}

\begin{corollary}  \label{andadditionco}
$$%
	\beginpgfgraphicnamed{TikZit//andadditioncor}
	\InputIfFileExists{TikZit//andadditioncor.tikz}{}{\input{./figures/TikZit//andadditioncor.tikz}}%
	\endpgfgraphicnamed
 $$ 
 \end{corollary}


   \begin{lemma}\cite{wangalg2020}\label{andgatehadam}
	\beginpgfgraphicnamed{TikZit//andgatehadama}
	\InputIfFileExists{TikZit//andgatehadama.tikz}{}{\input{./figures/TikZit//andgatehadama.tikz}}%
	\endpgfgraphicnamed
 
 \end{lemma}
  \begin{proof}
 $$  %
	\beginpgfgraphicnamed{TikZit//andgatehadamaprf}
	\InputIfFileExists{TikZit//andgatehadamaprf.tikz}{}{\input{./figures/TikZit//andgatehadamaprf.tikz}}%
	\endpgfgraphicnamed
 $$
 The second equality can be obtained by symmetry. 
  \end{proof}
  
 \begin{lemma}\cite{wangalg2020}\label{distribute}
	\beginpgfgraphicnamed{TikZit//ruleA1_Draft}
	\InputIfFileExists{TikZit//ruleA1_Draft.tikz}{}{\input{./figures/TikZit//ruleA1_Draft.tikz}}%
	\endpgfgraphicnamed
 (Dis)
 \end{lemma}
 \begin{proof}
 $$  %
	\beginpgfgraphicnamed{TikZit//distributionprf}
	\InputIfFileExists{TikZit//distributionprf.tikz}{}{\input{./figures/TikZit//distributionprf.tikz}}%
	\endpgfgraphicnamed
 $$
  \end{proof}
  
   \begin{corollary}\cite{wangalg2020}\label{distribute2}
	\beginpgfgraphicnamed{TikZit//distribute2}
	\InputIfFileExists{TikZit//distribute2.tikz}{}{\input{./figures/TikZit//distribute2.tikz}}%
	\endpgfgraphicnamed
 
 \end{corollary}
  \begin{proof}
 $$  %
	\beginpgfgraphicnamed{TikZit//distribute2prf}
	\InputIfFileExists{TikZit//distribute2prf.tikz}{}{\input{./figures/TikZit//distribute2prf.tikz}}%
	\endpgfgraphicnamed
 $$
  \end{proof}

\section{Newly derivable equalities}
In this section, we derive all the equalities needed for the proof of completeness while relatively new to previous results. 

  \begin{proposition}\label{picntcommut}
  Let $i, j_1, \cdots, j_t, \cdots,  j_s \in \{1, \cdots, m\}, i \notin \{j_1, \cdots, j_t, \cdots,  j_s\}$. Then we have
 $$%
	\beginpgfgraphicnamed{TikZit//picntcommute}
	\InputIfFileExists{TikZit//picntcommute.tikz}{}{\input{./figures/TikZit//picntcommute.tikz}}%
	\endpgfgraphicnamed
$$
 where the  node $a$ is connected to $ j_1, \cdots, j_t, \cdots,  j_s $ via pink nodes. 
 \end{proposition}   
   \begin{proof}
  $$  %
	\beginpgfgraphicnamed{TikZit//picntcommuteprf}
	\InputIfFileExists{TikZit//picntcommuteprf.tikz}{}{\input{./figures/TikZit//picntcommuteprf.tikz}}%
	\endpgfgraphicnamed
 $$
  \end{proof}
  
   \begin{corollary}\label{picntcommutcro}
 $$ %
	\beginpgfgraphicnamed{TikZit//picntcommutecro}
	\InputIfFileExists{TikZit//picntcommutecro.tikz}{}{\input{./figures/TikZit//picntcommutecro.tikz}}%
	\endpgfgraphicnamed
$$
 \end{corollary}

   \begin{proposition}\label{picntcommutesam}
  Let $i, j_1, \cdots, j_t, \cdots,  j_s \in \{1, \cdots, m\}, i \notin \{j_1, \cdots, j_t, \cdots,  j_s\}$. Then we have
 $$%
	\beginpgfgraphicnamed{TikZit//picntcommutesame}
	\InputIfFileExists{TikZit//picntcommutesame.tikz}{}{\input{./figures/TikZit//picntcommutesame.tikz}}%
	\endpgfgraphicnamed
$$
 where the  node $a$ is connected to $ j_1, \cdots, j_t, \cdots,  j_s $ via pink nodes. 
 \end{proposition}   
  \begin{proof}
  $$  %
	\beginpgfgraphicnamed{TikZit//picntcommutesameprf}
	\InputIfFileExists{TikZit//picntcommutesameprf.tikz}{}{\input{./figures/TikZit//picntcommutesameprf.tikz}}%
	\endpgfgraphicnamed
 $$
  \end{proof}
Similarly, we have 
    \begin{proposition}\label{picntcommutesamgrn}
  Let $i, k, j_1, \cdots,  j_s \in \{1, \cdots, m\}, i, k \notin \{j_1, \cdots,   j_s\}$. Then we have
 $$%
	\beginpgfgraphicnamed{TikZit//picntcommutesamegr}
	\InputIfFileExists{TikZit//picntcommutesamegr.tikz}{}{\input{./figures/TikZit//picntcommutesamegr.tikz}}%
	\endpgfgraphicnamed
$$
 where the  node $a$ is connected to $ j_1, \cdots,  j_s $ via pink nodes. 
 \end{proposition}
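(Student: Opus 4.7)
The plan is to mirror the proof of Proposition \ref{picntcommutesam}, treating this as a two-index version of the same commutation phenomenon, and to exploit that both indices $i$ and $k$ satisfy the same disjointness condition with respect to $\{j_1,\ldots,j_s\}$. The hint ``Similarly'' suggests the author intends the same template to apply, so I would begin by identifying the analogous steps in the pink-case proof and checking that each one survives the replacement of the single index $i$ by the pair $(i,k)$.

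First I would unfold the green spider structure sitting on top of the wires indexed by $i$ and $k$ into a pair of elementary spider copies via (S1), so that what remains to push across is essentially two copies of a small green primitive, one per index. Then for each of $i$ and $k$ separately, I would invoke Proposition \ref{picntcommutesam} to slide that copy past the pink-mediated connections between node $a$ and the targets $j_1,\ldots,j_s$; the disjointness $i,k\notin\{j_1,\ldots,j_s\}$ is what licenses each slide. Whenever a pink node is crossed during this manoeuvre, I would apply the bialgebra rule (BiA) of Lemma \ref{andbial} (or its iterated form in Corollary \ref{generalbialgebra}) to commute green past pink, and (Pic) of Lemma \ref{pimultiplecplm} to copy through any $\pi$ phases that surface.

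After both crossings are done I would refuse the green spiders on the far side via (S1), re-collect the pink nodes on the $j_1,\ldots,j_s$ wires, and rebalance the scalars using (Sca), (Sml), and (Siv), so that the final diagram is the one claimed. If it turns out cleaner, an alternative packaging is to use the generalised bialgebra form in Corollary \ref{generalbialgebra} directly, with $k$ set so as to absorb both the $i$ and $k$ wires at once, rather than iterating the single-index case.

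The main obstacle will be the interaction between the two indices $i$ and $k$ as they traverse the pink-connected block simultaneously: unlike the single-index case, one has to ensure that the two green copies do not accidentally fuse with each other or with the targets along the way, and that the order of the two slides does not affect the intermediate scalars. Bookkeeping of Hadamards and global scalars coming from repeated use of (H), (Sca) and (Siv) is where most of the friction will live; the core combinatorial step is already supplied by Proposition \ref{picntcommutesam} and the generalised bialgebra.
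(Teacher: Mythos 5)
Your plan is consistent with the paper, which gives no explicit derivation here: Proposition \ref{picntcommutesamgrn} is introduced only with ``Similarly, we have'' after the proof of Proposition \ref{picntcommutesam}, so the intended argument is exactly the adaptation you describe — rerun that proof's rewrite chain with the single index replaced by the pair $(i,k)$, using the same disjointness from $\{j_1,\ldots,j_s\}$ to license each slide. Your elaboration (spider unfusion via (S1), bialgebra/(Pic) crossings, refusion and scalar bookkeeping) is a reasonable filling-in of the omitted details rather than a departure from the paper's route.
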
  
 
   \begin{corollary}\label{picntcommutcro2}
 $$ %
	\beginpgfgraphicnamed{TikZit//picntcommutecro2}
	\InputIfFileExists{TikZit//picntcommutecro2.tikz}{}{\input{./figures/TikZit//picntcommutecro2.tikz}}%
	\endpgfgraphicnamed
$$
 \end{corollary}
   
In the same way, we have 
    \begin{proposition}\label{picntcommuteand}
     $$%
	\beginpgfgraphicnamed{TikZit//picntcommuteandgt}
	\InputIfFileExists{TikZit//picntcommuteandgt.tikz}{}{\input{./figures/TikZit//picntcommuteandgt.tikz}}%
	\endpgfgraphicnamed
$$
   \end{proposition}  
  
    \begin{corollary}\label{picntcommuteandcr1}
     $$%
	\beginpgfgraphicnamed{TikZit//picntcommuteandgtcr1}
	\InputIfFileExists{TikZit//picntcommuteandgtcr1.tikz}{}{\input{./figures/TikZit//picntcommuteandgtcr1.tikz}}%
	\endpgfgraphicnamed
$$
   \end{corollary}

    \begin{lemma}\label{hopfvar2}
$$ %
	\beginpgfgraphicnamed{TikZit//hopfvariant2}
	\InputIfFileExists{TikZit//hopfvariant2.tikz}{}{\input{./figures/TikZit//hopfvariant2.tikz}}%
	\endpgfgraphicnamed
$$
 \end{lemma}
   \begin{proof}
   $$  %
	\beginpgfgraphicnamed{TikZit//hopfvariant2prf}
	\InputIfFileExists{TikZit//hopfvariant2prf.tikz}{}{\input{./figures/TikZit//hopfvariant2prf.tikz}}%
	\endpgfgraphicnamed
 $$
  \end{proof}

  \begin{lemma}\label{ruletensorad}
$$%
	\beginpgfgraphicnamed{TikZit//ruletensoradd}
	\InputIfFileExists{TikZit//ruletensoradd.tikz}{}{\input{./figures/TikZit//ruletensoradd.tikz}}%
	\endpgfgraphicnamed
 $$
 \end{lemma}
   \begin{proof}
   $$  %
	\beginpgfgraphicnamed{TikZit//ruletensoraddprf}
	\InputIfFileExists{TikZit//ruletensoraddprf.tikz}{}{\input{./figures/TikZit//ruletensoraddprf.tikz}}%
	\endpgfgraphicnamed
 $$
  \end{proof}

\begin{proposition}\label{prop1}
 For any $k\geq 1$, we have
    \begin{equation}\label{prop1eq}
	\beginpgfgraphicnamed{TikZit//propo1}
	\InputIfFileExists{TikZit//propo1.tikz}{}{\input{./figures/TikZit//propo1.tikz}}%
	\endpgfgraphicnamed

 \end{equation}
 
 \end{proposition}
 \begin{proof}
 $$ %
	\beginpgfgraphicnamed{TikZit//propo1prf}
	\InputIfFileExists{TikZit//propo1prf.tikz}{}{\input{./figures/TikZit//propo1prf.tikz}}%
	\endpgfgraphicnamed
$$
 \end{proof}

 \begin{corollary}\label{propo1cro1}
 $$ %
	\beginpgfgraphicnamed{TikZit//propo1cr1}
	\InputIfFileExists{TikZit//propo1cr1.tikz}{}{\input{./figures/TikZit//propo1cr1.tikz}}%
	\endpgfgraphicnamed
$$
 \end{corollary}
This can be immediately obtained by plugging pink $\pi$ phase gates from the top and the bottom of the left-most line of diagrams on both sides of  (\ref{prop1eq}).

 \begin{corollary}\label{propo1cro2}
 $$ %
	\beginpgfgraphicnamed{TikZit//propo1cr2}
	\InputIfFileExists{TikZit//propo1cr2.tikz}{}{\input{./figures/TikZit//propo1cr2.tikz}}%
	\endpgfgraphicnamed
$$
 \end{corollary}
 This can be obtained by swapping the $1$-th and the $j$-th lines of diagrams on both side of  (\ref{prop1eq}).
 
 \begin{corollary}
 $$ %
	\beginpgfgraphicnamed{TikZit//propo1cr3}
	\InputIfFileExists{TikZit//propo1cr3.tikz}{}{\input{./figures/TikZit//propo1cr3.tikz}}%
	\endpgfgraphicnamed
$$
 \end{corollary} 
  This follows directly from Proposition \ref{prop1} and Corollary \ref{picntcommutcro}.
  
   \begin{corollary}\label{nlinestensornormalform}
 \begin{equation}\label{nlinestensornormalformeq}
	\beginpgfgraphicnamed{TikZit//nlinetensornormalform}
	\InputIfFileExists{TikZit//nlinetensornormalform.tikz}{}{\input{./figures/TikZit//nlinetensornormalform.tikz}}%
	\endpgfgraphicnamed

  \end{equation}
 where on the RHD of (\ref{nlinestensornormalformeq}), there are $2^n$ green triangles labeled by $a$ on the right-most $m$ wires, each green triangle  is connected to the left-most $n$ wires via green dots which are surrounded by $k$ pairs of red $\pi$s with $0 \leq k  \leq n$, and different green triangles have different  distribution of pairs of red $\pi$s, that's why there are $\binom{n}{0}+\binom{n}{1} +\cdots + \binom{n}{n}=2^n$ green triangles labeled by $a$.
 \end{corollary}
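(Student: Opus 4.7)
The plan is to prove Corollary \ref{nlinestensornormalform} by induction on $n$, using Proposition \ref{prop1} as the workhorse that splits a single wire into the two ``with/without red $\pi$'' alternatives. Base case: for $n=1$ the statement reduces to Proposition \ref{prop1} (and its variant Corollary \ref{propo1cro1}), which already exhibits the two-term expansion $\binom{1}{0}+\binom{1}{1}=2$ of a single green-triangle-labelled node into one term with a bare wire plus one term surrounded by a pair of red $\pi$s. This establishes the normal form on the first wire.

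Inductive step: assuming the statement for $n-1$, I would apply Proposition \ref{prop1} (or Corollary \ref{propo1cro2} when the wire being expanded is not the top one) to each of the $2^{n-1}$ green-triangle terms already produced by the inductive hypothesis, now expanding along the $n$-th wire. Each of the $2^{n-1}$ terms splits into two, giving $2^n$ terms in total, and the distribution of red $\pi$ pairs on the $n$ wires now ranges over all subsets of $\{1,\dots,n\}$, matching $\sum_{k=0}^{n}\binom{n}{k}=2^n$.

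A key routine subtlety is that the inductive hypothesis builds the sum out of green dots on the left-most $n-1$ wires which must commute past the structure being introduced by the new wire's expansion. I would use the commutation results already established, in particular Propositions \ref{picntcommut}, \ref{picntcommutesam}, \ref{picntcommutesamgrn}, \ref{picntcommuteand} and their Corollaries \ref{picntcommutcro}, \ref{picntcommutcro2}, \ref{picntcommuteandcr1}, together with the addition and tensor-addition lemmas (Lemmas \ref{additiongbxlm} and \ref{ruletensorad}), to slide the new red $\pi$ pair past the previously installed ones and to collect the triangles on the right-most $m$ wires. The third corollary following Proposition \ref{prop1} is precisely the ``swap the $1$-st and $j$-th line'' variant I would invoke to make the expansion on any chosen wire.

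The main obstacle will not be any single diagrammatic rewrite but rather the combinatorial bookkeeping: after $n$ applications of the splitting move, one must verify that the resulting $2^n$ summands correspond bijectively to subsets of $\{1,\dots,n\}$ indexing the positions of red $\pi$ pairs, and that no two distinct summands have been identified or double-counted. I would organise this by indexing terms by their subset $S \subseteq \{1,\dots,n\}$ and, at each induction step, showing that the two branches produced correspond to $S$ and $S\cup\{n\}$ as $S$ ranges over subsets of $\{1,\dots,n-1\}$. Once this indexing is set up, the rest is a straightforward combination of the spider, commutation, and addition equalities listed above.
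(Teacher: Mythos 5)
Your proposal is correct and follows essentially the route the paper intends: the paper states Corollary \ref{nlinestensornormalform} without an explicit proof, treating it as the evident $n$-fold iteration of the single-wire splitting in Proposition \ref{prop1} (together with its line-swapped variant, Corollary \ref{propo1cro2}), which is precisely the induction you spell out. Your additional care with the subset indexing of the $2^n$ summands and the commutation bookkeeping only makes explicit what the paper leaves implicit.
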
 

 \begin{corollary}\label{normalformtensornlines}
 \begin{equation}\label{normalformtensornlineseq}
	\beginpgfgraphicnamed{TikZit//normalformtensorsnlines}
	\InputIfFileExists{TikZit//normalformtensorsnlines.tikz}{}{\input{./figures/TikZit//normalformtensorsnlines.tikz}}%
	\endpgfgraphicnamed

  \end{equation}
 where on the RHD of (\ref{nlinestensornormalformeq}), there are $2^n$ green triangles labeled by $a$ on the left-most $m$ wires, each green triangle  is connected to the right-most $n$ wires via green dots which are surrounded by $k$ pairs of red $\pi$s with $0 \leq k  \leq n$, and different green triangles have different  distribution of pairs of red $\pi$s, that's why there are $\binom{n}{0}+\binom{n}{1} +\cdots + \binom{n}{n}=2^n$ green triangles labeled by $a$.
 \end{corollary}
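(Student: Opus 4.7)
The plan is to obtain Corollary \ref{normalformtensornlines} directly from Corollary \ref{nlinestensornormalform} by applying a block permutation of wires on both sides of the equation, exploiting the symmetric structure $\sigma$ of $\mathfrak{C}$. The two corollaries differ only in the relative placement of the block of $m$ wires carrying the green triangles and the block of $n$ wires carrying the red $\pi$ decorations: in (\ref{nlinestensornormalformeq}) the triangles sit on the right-most $m$ wires, whereas in (\ref{normalformtensornlineseq}) they must sit on the left-most $m$ wires. Since all the combinatorial data on the right-hand side (the $2^n$ green triangles indexed by subsets of $\{1,\dots,n\}$, each decorated by pairs of red $\pi$'s on the chosen subset) is symmetric with respect to the two blocks, swapping the blocks yields precisely the desired form.

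Concretely, I would compose both sides of (\ref{nlinestensornormalformeq}) with the braiding $\sigma_{n,m}: m\otimes n \to n\otimes m$ at the top and with $\sigma_{m,n}: n\otimes m \to m\otimes n$ at the bottom (constructed by iterating the generator $\sigma$). On the left-hand side of (\ref{nlinestensornormalformeq}) this simply reorders the tensor factors of the input diagram into the shape appearing in Corollary \ref{normalformtensornlines}. On the right-hand side, naturality of the symmetry together with the spider fusion rule (S1) allows the braiding to be absorbed into the connections between the green triangles and the $n$ wires: each of the $2^n$ triangles remains connected via green dots and the same pattern of red $\pi$ pairs, but now to the right-most $n$ wires, while the triangles themselves sit on the left-most $m$ wires.

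The main obstacle is purely bookkeeping rather than mathematical: one must verify that after transporting through the braiding, the bijection between the $2^n$ distributions of red $\pi$ pairs on the two sides is preserved, i.e.\ that a subset $S \subseteq \{1,\dots,n\}$ labeling a triangle on the original right-hand side corresponds unambiguously to the same subset $S$ (up to the chosen indexing of the $n$ wires) on the transformed side. Since the red $\pi$ pairs appear symmetrically on their respective wires and the green dot legs pass freely through swaps by naturality of $\sigma$, this bijection is immediate once a consistent indexing convention is fixed. Therefore I would simply state that Corollary \ref{normalformtensornlines} follows from Corollary \ref{nlinestensornormalform} by pre- and post-composition with the appropriate block braidings, and omit the routine diagrammatic manipulation.
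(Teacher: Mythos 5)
Your proposal is correct and is essentially the argument the paper intends: the paper states Corollary \ref{normalformtensornlines} without proof as the mirror image of Corollary \ref{nlinestensornormalform}, and elsewhere justifies such mirror statements exactly by ``swapping wires'' (e.g.\ Corollaries \ref{addpipairmulcommutprop30bcro} and \ref{addpipairmulcommutprop30ccro}). Conjugating both sides of (\ref{nlinestensornormalformeq}) by the block braidings and invoking naturality of $\sigma$, as you do, is the standard and correct way to make that precise.
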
 



   \begin{proposition}\label{propadprime}
 For any $k\geq 1$, we have
 $$%
	\beginpgfgraphicnamed{TikZit//propaddprime}
	\InputIfFileExists{TikZit//propaddprime.tikz}{}{\input{./figures/TikZit//propaddprime.tikz}}%
	\endpgfgraphicnamed
$$
 \end{proposition}

    \begin{proof}
$$%
	\beginpgfgraphicnamed{TikZit//propaddprimeprf}
	\InputIfFileExists{TikZit//propaddprimeprf.tikz}{}{\input{./figures/TikZit//propaddprimeprf.tikz}}%
	\endpgfgraphicnamed
  $$
   \end{proof}      
   
 \begin{corollary}\label{propadprimecro}
 $$ %
	\beginpgfgraphicnamed{TikZit//propaddprimecro}
	\InputIfFileExists{TikZit//propaddprimecro.tikz}{}{\input{./figures/TikZit//propaddprimecro.tikz}}%
	\endpgfgraphicnamed
$$
 \end{corollary} 
This can be directly obtained from Proposition \ref{propadprime} and Corollary \ref{picntcommutcro}.

   
   \begin{lemma}\label{ruletensorLsimpler}
	\beginpgfgraphicnamed{TikZit//ruletensorLsim}
	\InputIfFileExists{TikZit//ruletensorLsim.tikz}{}{\input{./figures/TikZit//ruletensorLsim.tikz}}%
	\endpgfgraphicnamed

 \end{lemma}
  \begin{proof}
   $$  %
	\beginpgfgraphicnamed{TikZit//ruletensorLsimprf}
	\InputIfFileExists{TikZit//ruletensorLsimprf.tikz}{}{\input{./figures/TikZit//ruletensorLsimprf.tikz}}%
	\endpgfgraphicnamed
 $$
  \end{proof}

     \begin{lemma}\label{ruletensor}
	\beginpgfgraphicnamed{TikZit//ruletensorL}
	\InputIfFileExists{TikZit//ruletensorL.tikz}{}{\input{./figures/TikZit//ruletensorL.tikz}}%
	\endpgfgraphicnamed
 
 \end{lemma}
  
   \begin{proof}
   $$  %
	\beginpgfgraphicnamed{TikZit//ruletensorLprf}
	\InputIfFileExists{TikZit//ruletensorLprf.tikz}{}{\input{./figures/TikZit//ruletensorLprf.tikz}}%
	\endpgfgraphicnamed
 $$
  \end{proof}

 \begin{proposition}\label{itensorand}
 For any $k\geq 1$, we have
 $$%
	\beginpgfgraphicnamed{TikZit//itensorandgt}
	\InputIfFileExists{TikZit//itensorandgt.tikz}{}{\input{./figures/TikZit//itensorandgt.tikz}}%
	\endpgfgraphicnamed
$$
 \end{proposition}   
     \begin{proof}
   $$  %
	\beginpgfgraphicnamed{TikZit//itensorandgtprf}
	\InputIfFileExists{TikZit//itensorandgtprf.tikz}{}{\input{./figures/TikZit//itensorandgtprf.tikz}}%
	\endpgfgraphicnamed
 $$
  \end{proof}
  \begin{corollary}\label{nlinestensornormalformadd}
 \begin{equation}\label{nlinestensornormalformaddeq}
	\beginpgfgraphicnamed{TikZit//nlinetensornormalformadd}
	\InputIfFileExists{TikZit//nlinetensornormalformadd.tikz}{}{\input{./figures/TikZit//nlinetensornormalformadd.tikz}}%
	\endpgfgraphicnamed

  \end{equation}
 where on the RHD of (\ref{nlinestensornormalformaddeq}), there are $2^n$ AND gates on the left-most $m$ wires, each AND gate  is accompanied by $k$ pairs of red $\pi$s on  the left-most $n$ wires with $0 \leq k  \leq n$, and different AND gates have different  distribution of pairs of red $\pi$s, that's why there are $\binom{n}{0}+\binom{n}{1} +\cdots + \binom{n}{n}=2^n$ AND gates.
 \end{corollary}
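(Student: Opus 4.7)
The plan is to prove this by induction on $n$, using Proposition \ref{itensorand} as the one-wire building block, in direct analogy with the way Corollary \ref{nlinestensornormalform} follows from Proposition \ref{prop1}. Proposition \ref{itensorand} already exhibits the pattern on a single wire: an AND gate tensored with a wire decomposes as a sum of two AND gates, one carrying a pair of red $\pi$s on that wire and one not. Iterating this expansion wire by wire will produce all $2^n$ terms indexed by the subsets of $\{1,\dots,n\}$ on which the pairs of red $\pi$s are placed.

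Concretely, the base case $n=0$ is trivial (one AND gate, no red $\pi$s, matching $\binom{0}{0}=1$), and for $n=1$ the claim is exactly Proposition \ref{itensorand}. For the inductive step, I would take the $n$-wire normal form given by the induction hypothesis and apply Proposition \ref{itensorand} to a single new wire separately on each of the already present $2^n$ AND gates. Each such application doubles the number of AND gates, producing a version with the new pair of red $\pi$s and a version without, yielding the $2^{n+1}$ terms. The binomial identity $\binom{n+1}{k}=\binom{n}{k}+\binom{n}{k-1}$ then gives the bookkeeping: the terms with $k$ pairs among the first $n+1$ wires come from either keeping $k$ pairs (and not adding one) or from keeping $k-1$ pairs (and adding one on the new wire).

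The main obstacle I expect is purely presentational rather than conceptual: after applying Proposition \ref{itensorand} to an inner wire, the local rewrite leaves the red $\pi$s and the wire connecting the AND gate back to the main bundle in a position that has to be reshuffled to match the schematic form of the corollary. To legalise these rearrangements I would lean on Corollary \ref{picntcommuteandcr1} (and, if needed, Proposition \ref{picntcommuteand}) to commute pairs of red $\pi$s past AND-gate wires on distinct indices, together with spider fusion (S1) to merge green dots on the shared wire when two successive expansions introduce connections to the same wire. Once these commutation moves are in place, each wire can be expanded independently, and the induction closes cleanly.
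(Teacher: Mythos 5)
Your proposal is correct and matches the paper's (implicit) argument: the corollary is stated without a separate proof precisely because it is the wire-by-wire iteration of Proposition \ref{itensorand}, exactly parallel to how Corollary \ref{nlinestensornormalform} follows from Proposition \ref{prop1}. The inductive doubling of AND gates, the binomial bookkeeping, and the use of Proposition \ref{picntcommuteand} and Corollary \ref{picntcommuteandcr1} to commute the red $\pi$ pairs into position are all the intended mechanism.
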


 \begin{corollary}\label{nlinestensormmultiply}
 \begin{equation}\label{nlinestensormmultiplywq}
	\beginpgfgraphicnamed{TikZit//nlinestensormmultiplydm}
	\InputIfFileExists{TikZit//nlinestensormmultiplydm.tikz}{}{\input{./figures/TikZit//nlinestensormmultiplydm.tikz}}%
	\endpgfgraphicnamed

  \end{equation}
 where on the RHD of (\ref{nlinestensormmultiplywq}), there are $2^n$ AND gates on the right-most $m$ wires,each AND gate  is accompanied by $k$ pairs of red $\pi$s on  the left-most $n$ wires with $0 \leq k  \leq n$, and different AND gates have different  distribution of pairs of red $\pi$s, that's why there are $\binom{n}{0}+\binom{n}{1} +\cdots + \binom{n}{n}=2^n$ AND gates.
 \end{corollary}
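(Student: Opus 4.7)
The plan is to obtain Corollary~\ref{nlinestensormmultiply} as the right–left mirror of Corollary~\ref{nlinestensornormalformadd}, in exactly the same way that Corollary~\ref{normalformtensornlines} mirrors Corollary~\ref{nlinestensornormalform}. Both corollaries share the same right-hand data ($2^n$ AND gates with the $2^n$ possible pair-of-$\pi$ decorations over $n$ wires), so the only thing that has to move is the location of the AND gates: from the left-most $m$ wires to the right-most $m$ wires. What changes on the left-hand side is merely which wire of the multiplication is tensored with identities and which sits next to the AND-gate stack, and this symmetry is exactly what I want to exploit.

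First, I would write out the LHS of (\ref{nlinestensormmultiplywq}) and use the compact/symmetric structure of $\mathfrak{C}$ (namely $\sigma$ together with $C_a, C_u$) to re-expose the multiplication on the opposite side, turning it into a configuration to which Corollary~\ref{nlinestensornormalformadd} applies directly. This reduces the problem to proving that the normal form on the right-hand side of Corollary~\ref{nlinestensornormalformadd}, after this permutation of outputs, agrees with the claimed normal form on the right-hand side of (\ref{nlinestensormmultiplywq}).

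Next, I would commute the AND gates through the crossings using Proposition~\ref{itensorand} together with Corollary~\ref{picntcommuteandcr1} (and, if needed, Proposition~\ref{picntcommutesamgrn} for the green-copy moves). Because each AND gate is decorated with a particular pattern of red-$\pi$ pairs, the key point is that commuting these gates to the other side of the diagram preserves the decoration pattern up to the same enumeration: the map between the two $2^n$-enumerated families of AND gates is just the identity on decoration patterns, so the total count $\binom{n}{0}+\binom{n}{1}+\cdots+\binom{n}{n}=2^n$ is unchanged.

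The main obstacle I anticipate is bookkeeping rather than conceptual: ensuring that when the AND gates are commuted across the multiplication/swap structure the combinatorial labelling of $\pi$-pair subsets is preserved, and that no spurious scalars appear. To avoid mistakes here, I would handle the base case $n=1$ explicitly using Proposition~\ref{itensorand} directly, and then proceed by induction on $n$, at each step splitting off one wire and applying Corollary~\ref{picntcommuteandcr1} (and the variant in Corollary~\ref{picntcommutcro2}) to migrate the newly produced AND gate to the correct side, so that the inductive hypothesis (applied to the remaining $n-1$ wires) delivers the stated normal form.
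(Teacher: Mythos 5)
Your proposal is correct and matches the route the paper intends: the paper states this corollary without proof, leaving it as the evident iteration of Proposition~\ref{itensorand} over the $n$ wires (each wire contributing the with-$\pi$-pair/without-$\pi$-pair branching that yields the $\binom{n}{0}+\cdots+\binom{n}{n}=2^n$ count), combined with the wire-swapping symmetry relating it to Corollary~\ref{nlinestensornormalformadd}, exactly as Corollary~\ref{normalformtensornlines} mirrors Corollary~\ref{nlinestensornormalform}. Your induction on $n$ with Proposition~\ref{itensorand} as the base case simply makes that implicit argument explicit.
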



     \begin{lemma}\label{raddcomplex}
	\beginpgfgraphicnamed{TikZit//raddcomutecomplex}
	\InputIfFileExists{TikZit//raddcomutecomplex.tikz}{}{\input{./figures/TikZit//raddcomutecomplex.tikz}}%
	\endpgfgraphicnamed

 \end{lemma}
   \begin{proof}
   $$  %
	\beginpgfgraphicnamed{TikZit//raddcomutecomplexprf}
	\InputIfFileExists{TikZit//raddcomutecomplexprf.tikz}{}{\input{./figures/TikZit//raddcomutecomplexprf.tikz}}%
	\endpgfgraphicnamed
 $$
  \end{proof}

    \begin{lemma}\label{raddcomplexsym}
	\beginpgfgraphicnamed{TikZit//raddcomutecomplexsymtry}
	\InputIfFileExists{TikZit//raddcomutecomplexsymtry.tikz}{}{\input{./figures/TikZit//raddcomutecomplexsymtry.tikz}}%
	\endpgfgraphicnamed

 \end{lemma}
   \begin{proof}
   $$  %
	\beginpgfgraphicnamed{TikZit//raddcomutecomplexsymtryprf}
	\InputIfFileExists{TikZit//raddcomutecomplexsymtryprf.tikz}{}{\input{./figures/TikZit//raddcomutecomplexsymtryprf.tikz}}%
	\endpgfgraphicnamed
 $$
  \end{proof}

     \begin{proposition}\label{addcommutat}
	\beginpgfgraphicnamed{TikZit//addcommutation}
	\InputIfFileExists{TikZit//addcommutation.tikz}{}{\input{./figures/TikZit//addcommutation.tikz}}%
	\endpgfgraphicnamed
 \end{proposition}
  
   \begin{proof}
    If $a=0$ or $b=0$, then the equality holds trivially. Now we assume $ab\neq 0$. Then
   $$  %
	\beginpgfgraphicnamed{TikZit//addcommutationprof}
	\InputIfFileExists{TikZit//addcommutationprof.tikz}{}{\input{./figures/TikZit//addcommutationprof.tikz}}%
	\endpgfgraphicnamed
 $$
  \end{proof}

     \begin{proposition}\label{addcommutatgen}
     Let $n \geq 0$. Then we have
$$%
	\beginpgfgraphicnamed{TikZit//addcommutationgen}
	\InputIfFileExists{TikZit//addcommutationgen.tikz}{}{\input{./figures/TikZit//addcommutationgen.tikz}}%
	\endpgfgraphicnamed
 $$
 \end{proposition}
  \begin{proof}
    $$  %
	\beginpgfgraphicnamed{TikZit//addcommutationgenprf}
	\InputIfFileExists{TikZit//addcommutationgenprf.tikz}{}{\input{./figures/TikZit//addcommutationgenprf.tikz}}%
	\endpgfgraphicnamed
 $$
  \end{proof}

       \begin{proposition}\label{addcommutatgencont}
     Assume that  node $a$ is connected to $ j_1, \cdots,  j_s $ via pink nodes and node $b$  is connected to $ i_1, \cdots,  i_t $ via pink nodes, where $i_1, \cdots,  i_t,  j_1, \cdots,  j_s \in \{1, \cdots, m\}$, and $\{i_1, \cdots,  i_t\} \neq \emptyset, \{ j_1, \cdots,  j_s\}\neq \emptyset$. Then we have
       \begin{equation}\label{addcommutatgenconteq}
	\beginpgfgraphicnamed{TikZit//addcommutationgenconct}
	\InputIfFileExists{TikZit//addcommutationgenconct.tikz}{}{\input{./figures/TikZit//addcommutationgenconct.tikz}}%
	\endpgfgraphicnamed
 
\end{equation}
 \end{proposition}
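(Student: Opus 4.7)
The plan is to reduce this proposition to Proposition \ref{addcommutatgen}, which handles the special case where $a$ and $b$ are controlled by pink nodes on the \emph{same} set of wires. Let $K = I \cup J$ and partition $K$ as $(I \setminus J) \sqcup (J \setminus I) \sqcup (I \cap J)$. The strategy is to use the various commutation lemmas to isolate the shared support $I \cap J$, on which Proposition \ref{addcommutatgen} applies directly, and to let the non-shared parts commute freely past each other.

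First, I would apply Propositions \ref{picntcommut}, \ref{picntcommutesam}, and \ref{picntcommutesamgrn} together with Corollaries \ref{picntcommutcro} and \ref{picntcommutcro2} to slide the $b$-structure past the $a$-pink-nodes on every wire in $J \setminus I$, and symmetrically to slide the $a$-structure past the $b$-pink-nodes on every wire in $I \setminus J$. These commutations are of the ``disjoint-support'' type and do not alter the diagram's meaning. After this reshuffling, the only non-trivial interaction between the two controlled additions occurs on the shared wires $I \cap J$.

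Second, on the restricted support $I \cap J$ the configuration matches exactly the hypothesis of Proposition \ref{addcommutatgen}, so I would invoke it to perform the core swap of $a$ and $b$. Finally, I would unwind the sliding operations performed in the first step, putting the $a$-pink-nodes on $J \setminus I$ and the $b$-pink-nodes on $I \setminus J$ back into their original positions, which now yields the right-hand side of \eqref{addcommutatgenconteq}.

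The main obstacle I anticipate is the bookkeeping when $I \cap J$ is a proper subset of both $I$ and $J$: the shared and non-shared legs of each AND-structure have to be carefully separated before \ref{addcommutatgen} can be applied, and recombined afterwards in precisely the correct order. A secondary concern is the edge case $I \cap J = \emptyset$, where there is no shared support on which to invoke Proposition \ref{addcommutatgen}; here I would instead appeal directly to Lemmas \ref{raddcomplex} and \ref{raddcomplexsym}, which govern how two disjointly supported additions commute through their red $\pi$ phase decorations, to close the argument in a single application.
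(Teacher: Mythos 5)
There is a genuine gap in the first step. The row-addition gadget labelled $a$ is a single connected diagram: the node $a$ is attached through one AND-structure to \emph{all} of the wires $j_1,\dots,j_s$ simultaneously, so its pink nodes on $J\setminus I$ cannot be detached from its pink nodes on $I\cap J$ and slid independently past the $b$-gadget. Consequently there is no well-defined "restriction to the shared support $I\cap J$" on which Proposition \ref{addcommutatgen} could be invoked -- that proposition (like Corollary \ref{propadprimecro}, which is what the paper actually uses for the equal-support case) concerns two gadgets whose \emph{entire} supports are arranged in a specific way, not gadgets with extra legs dangling onto other wires. The lemmas you cite for the sliding step (Propositions \ref{picntcommut}, \ref{picntcommutesam}, \ref{picntcommutesamgrn} and their corollaries) only let a red $\pi$ node commute past an addition gadget on a wire \emph{outside} that gadget's support; they say nothing about moving one whole addition gadget past a proper part of another, which is exactly the content of the proposition you are trying to prove. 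So the reduction is circular in spirit and the "unwind" step has nothing well-defined to unwind.

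The paper's proof instead proceeds by a four-way case analysis on the relation between $I=\{i_1,\dots,i_t\}$ and $J=\{j_1,\dots,j_s\}$: if $I=J$ it quotes Corollary \ref{propadprimecro}; if $I\subsetneq J$ (or symmetrically $J\subsetneq I$) it picks a witness wire $j_2\in J\setminus I$ and performs a direct diagrammatic rewrite; and if the two sets are incomparable it picks witnesses $j_1\notin I$ and $i_1\notin J$ and rewrites directly. Note also that your proposed edge case $I\cap J=\emptyset$ is already covered by the incomparable case (since both sets are assumed nonempty), so the appeal to Lemmas \ref{raddcomplex} and \ref{raddcomplexsym} as a separate branch is unnecessary; those lemmas are ingredients inside the diagrammatic rewrites rather than a self-contained disposal of the disjoint case. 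To repair your argument you would need to replace the slide-and-restrict step with explicit manipulations of the AND-structures along the lines of the paper's case analysis.
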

   \begin{proof}
   If $\{i_1, \cdots,  i_t\}=\{ j_1, \cdots,  j_s\} $, then  (\ref{addcommutatgenconteq}) follows directly from Corollary \ref{propadprimecro}. Below we assume that $\{i_1, \cdots,  i_t\}\neq\{ j_1, \cdots,  j_s\} $. If $\{i_1, \cdots,  i_t\} \subseteq\{ j_1, \cdots,  j_s\} $, then we assume w.l.o.g. that  $i_1=j_1, i_t=j_s,  j_2\notin \{i_1, \cdots,  i_t\}$, then we have 
        $$ %
	\beginpgfgraphicnamed{TikZit//addcommutationgenconctpf1}
	\InputIfFileExists{TikZit//addcommutationgenconctpf1.tikz}{}{\input{./figures/TikZit//addcommutationgenconctpf1.tikz}}%
	\endpgfgraphicnamed
 $$
     If symmetrically $\{ j_1, \cdots,  j_s\}\subseteq \{i_1, \cdots,  i_t\} $, 
then it can be proved similarly as the last case.
      If $\{i_1, \cdots,  i_t\} \nsubseteq\{ j_1, \cdots,  j_s\} $ and $\{ j_1, \cdots,  j_s\}\nsubseteq \{i_1, \cdots,  i_t\} $, then we assume w.l.o.g. that $j_1\notin \{i_1, \cdots,  i_t\}, i_1\notin \{ j_1, \cdots,  j_s\} $. Then we have 
  $$ %
	\beginpgfgraphicnamed{TikZit//addcommutationgenconctpf}
	\InputIfFileExists{TikZit//addcommutationgenconctpf.tikz}{}{\input{./figures/TikZit//addcommutationgenconctpf.tikz}}%
	\endpgfgraphicnamed
 $$
  \end{proof}
  



 \begin{proposition}\label{multiplypimulticommutg}
 Assume that $n \geq 0$. Then
$$%
	\beginpgfgraphicnamed{TikZit//multiplypimulticommuteg}
	\InputIfFileExists{TikZit//multiplypimulticommuteg.tikz}{}{\input{./figures/TikZit//multiplypimulticommuteg.tikz}}%
	\endpgfgraphicnamed
  $$
 \end{proposition}
\begin{proof}
Let $x=\frac{a+b}{2}, \quad y=\frac{b-a}{2}$. Then $a=x-y,\quad b=x+y$.
$$  %
	\beginpgfgraphicnamed{TikZit//multiplypimulticommutegprf}
	\InputIfFileExists{TikZit//multiplypimulticommutegprf.tikz}{}{\input{./figures/TikZit//multiplypimulticommutegprf.tikz}}%
	\endpgfgraphicnamed
 $$
 \end{proof}
 
 \begin{corollary}\label{multiplypimulticommutgcro}
  Assume that $n \geq 0$. Then
 $$ %
	\beginpgfgraphicnamed{TikZit//multiplypimulticommutegcro}
	\InputIfFileExists{TikZit//multiplypimulticommutegcro.tikz}{}{\input{./figures/TikZit//multiplypimulticommutegcro.tikz}}%
	\endpgfgraphicnamed
$$
 \end{corollary}
This follows directly from Proposition \ref{multiplypimulticommutg} and Corollary \ref{propo1cro2}.

 \begin{corollary}\label{multiplypimulticommutgcro2}
$$ %
	\beginpgfgraphicnamed{TikZit//multiplypimulticommutegcro2}
	\InputIfFileExists{TikZit//multiplypimulticommutegcro2.tikz}{}{\input{./figures/TikZit//multiplypimulticommutegcro2.tikz}}%
	\endpgfgraphicnamed
$$
where the node $a$ and $b$ are connected to $ j_1, \cdots,  j_s $ via pink nodes,  and two red $\pi$ nodes are located on the $i$-th line, $i\notin \{j_1, \cdots,  j_s\}$ or $i\in \{j_1, \cdots,  j_s\}, |\{j_1, \cdots,  j_s\}|\geq 2$. 
 \end{corollary}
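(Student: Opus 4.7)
The plan is to derive Corollary \ref{multiplypimulticommutgcro2} from Proposition \ref{multiplypimulticommutg} via the pi-CNOT commutation rules established earlier, in particular Propositions \ref{picntcommutesam} and \ref{picntcommutesamgrn} together with Corollaries \ref{picntcommutcro} and \ref{picntcommutcro2}. The hypothesis is stated as a disjunction, so I would treat the two cases separately, in each case reducing the multi-line pink-connected configuration to the single-line setting handled by Proposition \ref{multiplypimulticommutg}.

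In the first case, $i \notin \{j_1,\ldots,j_s\}$, the two red $\pi$ nodes live on a wire disjoint from the $s$ pink-connected wires. Here I would slide the pink connections between $a,b$ and the wires $j_1,\ldots,j_s$ past the red $\pi$'s on line $i$ using Proposition \ref{picntcommutesamgrn} (applied once per pink connection); once the pink $\pi$s are disentangled from line $i$, the residual configuration on line $i$ together with the $a,b$-dots is exactly the one already handled by Proposition \ref{multiplypimulticommutg}. I would then reverse the commutations to restore the original wiring.

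In the second case, $i \in \{j_1,\ldots,j_s\}$ with $|\{j_1,\ldots,j_s\}|\geq 2$, pick $k \in \{j_1,\ldots,j_s\}$ with $k \neq i$. I would use Proposition \ref{picntcommutesam} to commute the two red $\pi$'s on line $i$ through the pink $\pi$'s connecting $a$ and $b$ to line $k$, effectively transferring a copy of the $\pi$-structure onto a wire that is no longer of the ambiguous type. After this rerouting the situation matches the first case (the ``active'' red $\pi$'s now sitting on a wire disjoint from at least one remaining pink connection), and the argument from that case finishes the proof.

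The main obstacle is the second case: because the red $\pi$'s start on a wire that is itself pink-connected to $a$ and $b$, the direct commutation is unavailable, so one has to carefully use the presence of a second pink-connected wire to reroute, and ensure that no spurious phases or Hopf loops (cf.\ Lemma \ref{hopfnslm}) are introduced in the process. The hypothesis $|\{j_1,\ldots,j_s\}|\geq 2$ is precisely what makes this rerouting possible, and keeping the bookkeeping of which pink $\pi$'s get moved where is the fiddly technical step.
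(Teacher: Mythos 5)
Your case split and the role you assign to the hypothesis $|\{j_1,\ldots,j_s\}|\geq 2$ both match the paper: it also treats $i\notin\{j_1,\ldots,j_s\}$ and $i\in\{j_1,\ldots,j_s\}$ separately, and in the second case assumes w.l.o.g.\ that $i=j_s\neq j_1$ so that a second pink-connected wire is available. Your first case, where the red $\pi$ pair and the pink connections live on disjoint wires and everything is pushed back to Proposition \ref{multiplypimulticommutg} via the $\pi$--commutation results, is in line with what the paper does.

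The gap is in your second case. Proposition \ref{picntcommutesam} (like Proposition \ref{picntcommut} and Corollary \ref{picntcommutcro}) is stated under the hypothesis $i\notin\{j_1,\ldots,j_t,\ldots,j_s\}$: it only lets a red $\pi$ commute past a node whose pink connections avoid the line the $\pi$ sits on. In case 2 the red $\pi$ pair sits on $i=j_s$, a line that \emph{is} pink-connected to both $a$ and $b$, so the lemma you invoke does not apply to the $\pi$'s you want to move, and no earlier result transports a $\pi$ pair off a connected wire onto another connected wire. Moreover, even granting some rerouting, the configuration you claim to reach --- red $\pi$'s ``disjoint from at least one remaining pink connection'' --- does not satisfy the hypothesis of your first case, which requires $i$ to be disjoint from \emph{all} of $j_1,\ldots,j_s$; so the proposed reduction of case 2 to case 1 does not close. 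The paper does not perform such a reduction: after the w.l.o.g.\ step $i=j_s\neq j_1$ it gives a separate direct diagrammatic computation exploiting the interaction of the $\pi$ pair with the green node it separates on the connected wire, and that computation is the piece your plan is missing.
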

\begin{proof}
If $i\notin \{j_1, \cdots,  j_s\}$, then we have
$$  %
	\beginpgfgraphicnamed{TikZit//multiplypimulticommutegcro2prf}
	\InputIfFileExists{TikZit//multiplypimulticommutegcro2prf.tikz}{}{\input{./figures/TikZit//multiplypimulticommutegcro2prf.tikz}}%
	\endpgfgraphicnamed
 $$
If $i\in \{j_1, \cdots,  j_s\}, |\{j_1, \cdots,  j_s\}|\geq 2$, we assume w.l.g that $i=j_s \neq j_1$. Then we have
$$  %
	\beginpgfgraphicnamed{TikZit//multiplypimulticommutegcro2prfc2}
	\InputIfFileExists{TikZit//multiplypimulticommutegcro2prfc2.tikz}{}{\input{./figures/TikZit//multiplypimulticommutegcro2prfc2.tikz}}%
	\endpgfgraphicnamed
 $$
 \end{proof}


 \begin{proposition}\label{andpicomt}
 $$  %
	\beginpgfgraphicnamed{TikZit//andpicomute}
	\InputIfFileExists{TikZit//andpicomute.tikz}{}{\input{./figures/TikZit//andpicomute.tikz}}%
	\endpgfgraphicnamed
 $$
\end{proposition}
\begin{proof}
$$  %
	\beginpgfgraphicnamed{TikZit//andpicomuteprf}
	\InputIfFileExists{TikZit//andpicomuteprf.tikz}{}{\input{./figures/TikZit//andpicomuteprf.tikz}}%
	\endpgfgraphicnamed
 $$
 \end{proof}

  \begin{proposition}\label{addpidoublecom}
Suppose  the nodes $a$ and $b$ are connected to $ j_1, \cdots,  j_s $ via pink nodes,  pairs of red $\pi$ nodes separated by green nodes connected to $a$ are located on  $ i_1, \cdots,  i_t $, and pairs of red $\pi$ nodes separated by green nodes connected to $b$ are located on  $ k_1, \cdots,  k_l $, $ \{i_1, \cdots,  i_t\} \cap \{j_1, \cdots,  j_s\}= \emptyset, \{k_1, \cdots,  k_l\} \cap \{j_1, \cdots,  j_s\}= \emptyset. $ Then we have
 $$  %
	\beginpgfgraphicnamed{TikZit//addpidoublecomte}
	\InputIfFileExists{TikZit//addpidoublecomte.tikz}{}{\input{./figures/TikZit//addpidoublecomte.tikz}}%
	\endpgfgraphicnamed
 $$
\end{proposition}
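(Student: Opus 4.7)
The plan is to reduce Proposition \ref{addpidoublecom} to the pink-node commutation already established in Proposition \ref{addcommutatgencont} by first disentangling the red $\pi$ pair decorations on the $i$ and $k$ lines from the central $a$ and $b$ addition structures. The disjointness hypotheses $\{i_1,\ldots,i_t\} \cap \{j_1,\ldots,j_s\} = \emptyset$ and $\{k_1,\ldots,k_l\} \cap \{j_1,\ldots,j_s\} = \emptyset$ are precisely what is needed to ensure that these decorations live on wires that are separate from the pink-node connections to the $j$ lines, so they can be pushed outside the region governed by the pink-node interactions.

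First, I would use the bilinearity-style commutations of Proposition \ref{multiplypimulticommutg} together with Corollaries \ref{multiplypimulticommutgcro} and \ref{multiplypimulticommutgcro2} to rearrange the green-node-separated red $\pi$ pairs. The key identity used repeatedly is that a pair of red $\pi$s on line $i$ bracketed around a green node connected to $a$ commutes past any structure supported on lines disjoint from $i$. When two such pairs coming from $a$ and $b$ sit on the same line, one applies Corollary \ref{multiplypimulticommutgcro2} to interleave them, and when they sit on distinct lines they commute as independent tensor factors.

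After this disentangling step, the remaining diagram is precisely two addition structures centred at $a$ and $b$ each connected to $\{j_1,\ldots,j_s\}$ via pink nodes, which is the setting of Proposition \ref{addcommutatgencont}. Applying that proposition swaps the order of $a$ and $b$; reassembling the red $\pi$ pair decorations on their respective index sets then yields the right-hand side. Much as in the proof of Proposition \ref{addcommutatgencont}, one should organise this as a case split on how the index sets $\{i_1,\ldots,i_t\}$ and $\{k_1,\ldots,k_l\}$ sit relative to each other: fully disjoint, one contained in the other, or partially overlapping.

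The main obstacle is the case where $\{i_1,\ldots,i_t\}$ and $\{k_1,\ldots,k_l\}$ overlap, since then the red $\pi$ pairs from $a$ and $b$ coincide on some lines and cannot be treated as commuting tensor factors on those lines. Handling this requires careful bookkeeping: on each shared line, Corollary \ref{multiplypimulticommutgcro2} reorders the pairs while the green spiders separating them fuse or split according to (S1), and one must track how the pink-node connections to the $j$ lines survive these local manipulations unchanged. Once this subcase is disposed of, the disjoint-index case is essentially routine tensor-factor commutation combined with the cited earlier results.
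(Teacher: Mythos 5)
There is a genuine gap in your reduction. The ``pairs of red $\pi$ nodes separated by green nodes connected to $a$'' are not detachable decorations: each such pair sandwiches a green dot that is itself wired into the $a$ addition gadget, so even if you conjugate the red $\pi$s outward, what remains is a gadget connected to the $j$ lines via pink nodes \emph{and} to the $i$ lines (resp.\ $k$ lines for $b$) via green dots. That configuration is not the setting of Proposition \ref{addcommutatgencont}, which treats two gadgets attached to wires purely via pink nodes; so the claim that ``the remaining diagram is precisely two addition structures\dots which is the setting of Proposition \ref{addcommutatgencont}'' does not hold, and the subsequent ``reassemble the decorations'' step has nothing to reattach them to in a controlled way. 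The lemmas you lean on for the disentangling (Proposition \ref{multiplypimulticommutg} and Corollaries \ref{multiplypimulticommutgcro}, \ref{multiplypimulticommutgcro2}) also do not supply such a stripping move: Corollary \ref{multiplypimulticommutgcro2}, for instance, concerns a single red $\pi$ pair on one designated line interacting with two gadgets that are both attached to the same $j$ set, not a general mechanism for sliding an arbitrary family of $\pi$-wrapped green-dot legs off a gadget. Your treatment of the overlapping case $\{i_1,\dots,i_t\}\cap\{k_1,\dots,k_l\}\neq\emptyset$ inherits the same problem and remains an unproved bookkeeping claim.

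For comparison, the paper's proof is much more direct: if $\{i_1,\dots,i_t\}\cup\{k_1,\dots,k_l\}=\emptyset$ the statement is exactly Corollary \ref{propadprimecro} (two addition gadgets on the same pink-node support commute); otherwise one assumes w.l.o.g.\ that $|\{k_1,\dots,k_l\}|\geq 1$ and performs a single diagrammatic computation that exploits the existence of a $\pi$-wrapped green leg on a line disjoint from $j_1,\dots,j_s$. If you want to salvage your plan, you would need to prove, as a separate lemma, a commutation statement for gadgets carrying green-dot legs in addition to pink-node legs; as written, the appeal to Proposition \ref{addcommutatgencont} is where the argument breaks.
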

 \begin{proof}
 If $\{i_1, \cdots,  i_t\} \cup \{k_1, \cdots,  k_l\} = \emptyset$, then it is just the case of Corollary \ref{propadprimecro}. Otherwise, we assume w.l.g that $| \{k_1, \cdots,  k_l\} |\geq 1$. Then we have
$$  %
	\beginpgfgraphicnamed{TikZit//addpidoublecomteprf}
	\InputIfFileExists{TikZit//addpidoublecomteprf.tikz}{}{\input{./figures/TikZit//addpidoublecomteprf.tikz}}%
	\endpgfgraphicnamed
 $$
 \end{proof}
 
  \begin{proposition}\label{multipidoublecom}
Suppose  the  pairs of red $\pi$ nodes separated by green nodes connected to $a$ are located on  $ i_1, \cdots,  i_t $, and pairs of red $\pi$ nodes separated by green nodes connected to $b$ are located on  $ j_1, \cdots,  j_s $. Then we have
 $$  %
	\beginpgfgraphicnamed{TikZit//multipidoublecomte}
	\InputIfFileExists{TikZit//multipidoublecomte.tikz}{}{\input{./figures/TikZit//multipidoublecomte.tikz}}%
	\endpgfgraphicnamed
 $$
\end{proposition}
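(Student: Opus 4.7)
The plan is to mimic the case analysis used for Proposition \ref{addpidoublecom}, but with the multiplicative commutation tools developed earlier in this section rather than the additive ones. Set $I := \{i_1, \dots, i_t\}$ and $J := \{j_1, \dots, j_s\}$, and split on whether these index sets are empty, disjoint, or overlapping.

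If $I \cup J = \emptyset$ the desired equality reduces to Proposition \ref{multiplypimulticommutg} (with $n = 0$), so this subcase is immediate. When $I \cap J = \emptyset$ with at least one of $I, J$ nonempty, I would induct on $|I| + |J|$: pick a line in $I \bigtriangleup J$ on which only one of $a, b$ carries a pi-pair, apply Corollary \ref{multiplypimulticommutgcro2} to slide that pi-pair past the other node's contribution, and invoke the inductive hypothesis on the remaining lines. The hypothesis of Corollary \ref{multiplypimulticommutgcro2} is satisfied on every line of $I \bigtriangleup J$ precisely because only a single pi-pair is present there.

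The main case is $I \cap J \neq \emptyset$. My plan is first to use Corollary \ref{multiplypimulticommutgcro2} iteratively to dispose of the lines in $I \bigtriangleup J$, reducing to the situation $I = J$. On each shared line both $a$ and $b$ then contribute a pi-pair separated by a green node; I would fuse these green nodes via (S1) so that the four red $\pi$ nodes on that wire become two adjacent pi-pairs, then collapse the inner pair using Lemma \ref{hopfnslm} and (Pic), leaving a single pi-pair. Finally I would apply Proposition \ref{multiplypimulticommutg} line by line to complete the commutation.

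The main obstacle will be the shared-line collapse: controlling the scalars produced during the green-dot fusion and the Hopf-type simplification so that the resulting equality is exact, not just correct up to a global scalar. I expect this will require careful bookkeeping with (Sca), (Siv) and (Hopf) in the order dictated by the algebraic axiomatisation, much as in the proof of Proposition \ref{addpidoublecom}.
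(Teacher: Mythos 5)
Your overall architecture does not match the paper's, and two of your steps have genuine problems. The paper disposes of this proposition in one line: the empty case is Corollary~\ref{propadprimecro}, and otherwise one picks a line carrying a red $\pi$ pair and runs the same manipulation as in Proposition~\ref{addpidoublecom}, with Proposition~\ref{andpicomt} (the commutation of the AND-gate structure with $\pi$) as the key ingredient. You never invoke Proposition~\ref{andpicomt} at all; instead you lean on Corollary~\ref{multiplypimulticommutgcro2}, whose hypotheses concern nodes $a$ and $b$ \emph{connected via pink nodes} --- i.e.\ row-addition diagrams --- whereas the diagrams in this proposition are row-multiplication diagrams built from AND gates with no pink-node connections. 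That corollary is not the right tool here, and the extra disjoint/overlapping case split it forces on you is not needed in the paper's argument.

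More seriously, your main case contains two steps that do not go through as described. First, ``disposing of the lines in $I \bigtriangleup J$ to reduce to $I=J$'' is not a well-defined reduction: the $\pi$ pairs on those lines are part of the data of the two diagrams whose commutation you are proving, and sliding one past the other diagram's contribution does not remove it from the statement --- pushing a red $\pi$ through the green tap of the \emph{other} diagram on that wire copies a $\pi$ onto that tap and changes the diagram. Second, on a shared line you propose to fuse the green dot belonging to $a$'s diagram with the green dot belonging to $b$'s diagram; this merges the two factors of the composite $D_a\circ D_b$ into one connected blob, destroying exactly the factorisation whose two orderings you must compare. (Also, the two adjacent inner red $\pi$'s cancel by red spider fusion, $\pi+\pi=0$, not by Lemma~\ref{hopfnslm}; neither Hopf nor (Pic) is relevant there, and no scalar bookkeeping arises from spider fusion in this presentation.) To repair the proof, drop the reduction to $I=J$, treat the nonempty case uniformly by isolating one wire that carries a $\pi$ pair, and commute the AND-gate structures past the $\pi$'s via Proposition~\ref{andpicomt}, exactly as in the proof of Proposition~\ref{addpidoublecom}.
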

 \begin{proof}  
 If $\{i_1, \cdots,  i_t\} \cup \{j_1, \cdots,  j_s\} = \emptyset$, then it is just the case of Corollary \ref{propadprimecro}. Otherwise, it follows from Proposition\ref{andpicomt} using the same techniques as in the proof of  Proposition \ref{addpidoublecom}.
  \end{proof}

  \begin{proposition}\label{addpimultiplycommut}
	\beginpgfgraphicnamed{TikZit//addpimultiplycommute}
	\InputIfFileExists{TikZit//addpimultiplycommute.tikz}{}{\input{./figures/TikZit//addpimultiplycommute.tikz}}%
	\endpgfgraphicnamed
 
 \end{proposition}
    \begin{proof}
     If $a=0$, then the equality holds trivially. Now we assume $a\neq 0$. Then
   $$  %
	\beginpgfgraphicnamed{TikZit//addpimultiplycommuteprf}
	\InputIfFileExists{TikZit//addpimultiplycommuteprf.tikz}{}{\input{./figures/TikZit//addpimultiplycommuteprf.tikz}}%
	\endpgfgraphicnamed
 $$
     $$  %
	\beginpgfgraphicnamed{TikZit//addpimultiplycommuteprf2}
	\InputIfFileExists{TikZit//addpimultiplycommuteprf2.tikz}{}{\input{./figures/TikZit//addpimultiplycommuteprf2.tikz}}%
	\endpgfgraphicnamed
 $$
  \end{proof}

  \begin{proposition}\label{addpimultiplycommutg}
 Suppose  the node $a$ is connected to $ j_1, \cdots,  j_s $ via pink nodes,  and a pair of red $\pi$ nodes separated by green nodes connected to $b$ are located on  $ k$, where $k\notin  \{j_1, \cdots,  j_s\},  |\{j_1, \cdots,  j_s\}|\geq 1$, or   $k\in  \{j_1, \cdots,  j_s\},  |\{j_1, \cdots,  j_s\}|\geq 2$. Then we have
  \begin{equation}\label{addpimultiplycommutgeq}
	\beginpgfgraphicnamed{TikZit//addpimultiplycommutegen}
	\InputIfFileExists{TikZit//addpimultiplycommutegen.tikz}{}{\input{./figures/TikZit//addpimultiplycommutegen.tikz}}%
	\endpgfgraphicnamed
 
  \end{equation}
 \end{proposition}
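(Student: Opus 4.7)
The plan is to reduce (\ref{addpimultiplycommutgeq}) to the already-proven base case Proposition \ref{addpimultiplycommut} by using the various commutation and copy results established earlier in this section to isolate the local interaction that the base case handles. The hypothesis splits naturally into two subcases, and I would treat them in turn.

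First I would handle the case $k \notin \{j_1, \ldots, j_s\}$. Since $s \geq 1$, pick any $j_i$; the pair of red $\pi$s on wire $k$ (separated by a green node feeding into $b$) sits on a different wire from every pink connection of $a$. Using Corollary \ref{picntcommutcro} I slide the red-$\pi$ structure attached to $b$ past every pink node connecting $a$ to a $j_i$, and using Corollary \ref{multiplypimulticommutgcro} (or Proposition \ref{multiplypimulticommutg}) I commute the wire-$k$ structure past the central green/pink decoration of $a$. After these rearrangements, the only nontrivial interaction between $a$ and $b$ occurs on the central spider where addition takes place, which is exactly the configuration addressed by Proposition \ref{addpimultiplycommut}. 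Applying that proposition and then reversing the sliding yields the right-hand side of (\ref{addpimultiplycommutgeq}).

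Next I would treat the case $k \in \{j_1, \ldots, j_s\}$ with $s \geq 2$. Here the red-$\pi$ pair on wire $k$ lives on the same wire as one of the pink connections of $a$, so the previous disjointness argument fails directly. The key idea is to use $s \geq 2$ to pick some $j_i \neq k$, then use Proposition \ref{picntcommutesamgrn} and Corollary \ref{multiplypimulticommutgcro2} to push the interaction at wire $k$ onto wire $j_i$ (essentially trading a pink-on-$k$ with red-$\pi$-on-$k$ for a pink-on-$j_i$ configuration, via the fact that a pair of red $\pi$s separated by a green node can be copied along pink connections thanks to Corollary \ref{andcopy} and Lemma \ref{trianglecopylrlm}). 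This moves us into a configuration where $k'$ (the new wire carrying the $b$-pair) is disjoint from the new pink connections of $a$, so Case 1 applies. Undoing the slide gives the desired equality.

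The main obstacle is the bookkeeping in Case 2: when the red-$\pi$ pair sits on a wire already touched by a pink connection of $a$, naively commuting creates extra red $\pi$s that must be absorbed by the remaining pink connections without changing the overall diagram. This is precisely where the hypothesis $|\{j_1,\ldots,j_s\}| \geq 2$ is essential, since it provides a second wire along which the parasitic $\pi$s can be pushed and cancelled using Proposition \ref{picntcommutesamgrn} and the Hopf/copy identities (Lemma \ref{hopfvar2}, Corollary \ref{andcopy}). Once the reduction to Case 1 is achieved, the rest is routine rewriting.
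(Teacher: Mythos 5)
Your case split matches the paper's exactly, and your treatment of the first case ($k\notin\{j_1,\ldots,j_s\}$) is in the right spirit: the paper also normalises the wire order by swaps and then resolves the disjoint-wire configuration, with Proposition \ref{addpimultiplycommut} doing the essential work on the wires where the two diagrams actually meet. So far so good.

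The gap is in your second case. You claim that when $k\in\{j_1,\ldots,j_s\}$ you can "trade a pink-on-$k$ with red-$\pi$-on-$k$ for a pink-on-$j_i$ configuration" and thereby reduce to Case 1, but no lemma in the paper licenses such a wire transfer, and the results you cite do not combine to give one. Corollary \ref{multiplypimulticommutgcro2} is about two row-\emph{multiplication} nodes $a$ and $b$ both attached via pink nodes with a free-standing red $\pi$ pair; here $a$ is a row-\emph{addition} diagram and $b$ is a multiplication diagram carrying the $\pi$ pair, which is a different shape. Proposition \ref{picntcommutesamgrn} only commutes a single red $\pi$ past a row-addition diagram on a wire it does not touch; it does not relocate the green-node-plus-$\pi$-pair structure of $b$ from wire $k$ to another wire. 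The real difficulty in Case 2 is the on-wire collision between $a$'s pink node and $b$'s green node sandwiched by red $\pi$s on the \emph{same} wire $k$: commuting them necessarily invokes the bialgebra rule (B2) and the copy rules, which create cross-connections that must then be cancelled, and this is exactly what the paper's second computation (after normalising $k=j_s$ by swaps) does directly. The hypothesis $s\geq 2$ is indeed used to absorb the resulting extra structure on a second wire, but via an explicit rewrite chain, not via a reduction to the disjoint case. As written, your Case 2 asserts the hard step rather than proving it.
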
 
  \begin{proof}
 If  $k\notin  \{j_1, \cdots,  j_s\},  |\{j_1, \cdots,  j_s\}|\geq 1$, then by rearranging the order of lines by swapping, (\ref{addpimultiplycommutgeq}) is equivalent to
 
  \begin{equation}\label{addpimultiplycommutegen2eq}
  $$%
	\beginpgfgraphicnamed{TikZit//addpimultiplycommutegen2}
	\InputIfFileExists{TikZit//addpimultiplycommutegen2.tikz}{}{\input{./figures/TikZit//addpimultiplycommutegen2.tikz}}%
	\endpgfgraphicnamed
 $$
    \end{equation}
 We have
   $$%
	\beginpgfgraphicnamed{TikZit//addpimultiplycommutegen2prf}
	\InputIfFileExists{TikZit//addpimultiplycommutegen2prf.tikz}{}{\input{./figures/TikZit//addpimultiplycommutegen2prf.tikz}}%
	\endpgfgraphicnamed
 $$
If   $k\in  \{j_1, \cdots,  j_s\},  |\{j_1, \cdots,  j_s\}|\geq 2$,  we 
assume w.l.g that $k=j_s$, then (\ref{addpimultiplycommutgeq}) is equivalent to
 
  \begin{equation}\label{addpimultiplycommutegen3eq}
  $$%
	\beginpgfgraphicnamed{TikZit//addpimultiplycommutegen3}
	\InputIfFileExists{TikZit//addpimultiplycommutegen3.tikz}{}{\input{./figures/TikZit//addpimultiplycommutegen3.tikz}}%
	\endpgfgraphicnamed
 $$
    \end{equation}
We have
$$%
	\beginpgfgraphicnamed{TikZit//addpimultiplycommutegen3prf}
	\InputIfFileExists{TikZit//addpimultiplycommutegen3prf.tikz}{}{\input{./figures/TikZit//addpimultiplycommutegen3prf.tikz}}%
	\endpgfgraphicnamed
 $$
\end{proof}

 \begin{proposition}\label{addpipairmultiplycommutgp}
 Suppose  the node $a$ is connected to $ j_1, \cdots,  j_s $ via pink nodes,  pairs of red $\pi$ nodes separated by green nodes connected to $a$ are located on  $ i_1, \cdots,  i_t $, and pairs of red $\pi$ nodes separated by green nodes connected to $b$ are located on  $ k_1, \cdots,  k_l , \{i_1, \cdots,  i_t\} \neq  \{k_1, \cdots,  k_l\} $. Either 
$ \{k_1, \cdots,  k_l\} \neq \emptyset, \{k_1, \cdots,  k_l\} \cap \{j_1, \cdots,  j_s\}= \emptyset,  |\{j_1, \cdots,  j_s\}|\geq 1$; or $ \{k_1, \cdots,  k_l\} \neq \emptyset, |\{k_1, \cdots,  k_l\} \cap \{j_1, \cdots,  j_s\}|= 1,  |\{j_1, \cdots,  j_s\}|\geq 2$; or symmetrically, 
$ \{i_1, \cdots,  i_t\} \neq \emptyset, \{i_1, \cdots,  i_t\} \cap \{j_1, \cdots,  j_s\}= \emptyset,  |\{j_1, \cdots,  j_s\}|\geq 1$; or $ \{i_1, \cdots,  i_t\} \neq \emptyset, |\{i_1, \cdots,  i_t\} \cap \{j_1, \cdots,  j_s\}|= 1,  |\{j_1, \cdots,  j_s\}|\geq 2$.
  Then we have
$$%
	\beginpgfgraphicnamed{TikZit//addpipairmultiplycommutg}
	\InputIfFileExists{TikZit//addpipairmultiplycommutg.tikz}{}{\input{./figures/TikZit//addpipairmultiplycommutg.tikz}}%
	\endpgfgraphicnamed
 $$
 \end{proposition}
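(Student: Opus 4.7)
The plan is to reduce to the single--pi-pair case established in Proposition \ref{addpimultiplycommutg} by induction on $l$, the number of pi pairs attached to the $b$--side. First I would exploit the commutativity of addition (Proposition \ref{addcommutat}) to observe that the two ``symmetric'' clauses in the hypothesis (one on the $k$--side and one on the $i$--side) are interchangeable, so without loss of generality I may assume $\{k_1, \ldots, k_l\} \neq \emptyset$ and that it is $\{k_1, \ldots, k_l\}$ whose intersection with $\{j_1, \ldots, j_s\}$ is either empty (with $|\{j_1, \ldots, j_s\}| \geq 1$) or a singleton (with $|\{j_1, \ldots, j_s\}| \geq 2$). Next I would dispatch the base case $l = 1$: when additionally $\{i_1, \ldots, i_t\} = \emptyset$ the statement is exactly Proposition \ref{addpimultiplycommutg}, while for $\{i_1, \ldots, i_t\} \neq \emptyset$ the pi pairs on the $a$--side can first be separated out by Proposition \ref{multipidoublecom} and then the single-pair commutation of Proposition \ref{addpimultiplycommutg} applied.

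For the inductive step ($l \geq 2$), I would split the pi--pair structure on $b$ as a composition of a factor carrying a pi pair at one distinguished index $k_\ast$ and a factor carrying pi pairs at the remaining $l-1$ indices, using the bialgebra--style decomposition of the sum that underlies the proof of Proposition \ref{propadprime} together with the separation technique of Proposition \ref{addpidoublecom}. The inductive hypothesis commutes the $a$--structure past the $(l-1)$--factor, after which the base case commutes it past the single factor at $k_\ast$. The choice of $k_\ast$ is dictated by the intersection $\{k_1, \ldots, k_l\} \cap \{j_1, \ldots, j_s\}$: if it is empty, any $k_\ast$ works and every intermediate sub--configuration inherits empty intersection; if it is a singleton $\{k_\ast\}$ (with $|\{j_1, \ldots, j_s\}| \geq 2$), I would peel $k_\ast$ off \emph{last}, so that the inductive hypothesis is invoked on the disjoint--intersection sub--configuration at $\{k_1, \ldots, k_l\} \setminus \{k_\ast\}$ and only the final commutation past $k_\ast$ uses the singleton--intersection clause of Proposition \ref{addpimultiplycommutg}.

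The main obstacle will be the case analysis needed to keep the side--conditions of Proposition \ref{addpimultiplycommutg} valid at every step: peeling off an index must preserve the disjunction ``empty intersection with $|\{j_1, \ldots, j_s\}| \geq 1$, or single-point intersection with $|\{j_1, \ldots, j_s\}| \geq 2$'' for each intermediate sub--problem, which forces the ordering of peeled indices described above. When $\{i_1, \ldots, i_t\}$ is also nonempty, the interactions between $a$'s own pi pairs and the factored $b$--structure must be re--organised using Proposition \ref{multipidoublecom} and Corollary \ref{propadprimecro} before the inductive hypothesis can be invoked, and one should verify that the hypothesis $\{i_1, \ldots, i_t\} \neq \{k_1, \ldots, k_l\}$ is preserved (or, when it becomes vacuous after peeling, that Corollary \ref{propadprimecro} applies directly). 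Beyond these bookkeeping points, the remaining rewriting consists of routine applications of the commutation and absorption lemmas already established.
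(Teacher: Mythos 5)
Your overall strategy diverges from the paper's: the paper does not induct on $l$ at all. It uses the hypothesis $\{i_1, \cdots,  i_t\} \neq  \{k_1, \cdots,  k_l\}$ immediately to fix a single distinguished index $k_l\notin \{i_1, \cdots,  i_t\}$, and then performs the whole commutation in one direct diagrammatic computation, split only into the two cases $k_l\notin\{j_1,\cdots,j_s\}$ and $k_l\in\{j_1,\cdots,j_s\}$ with $|\{j_1,\cdots,j_s\}|\geq 2$; all the remaining $\pi$ pairs are carried along passively inside that single rewrite. Your proposal never uses $\{i_1, \cdots,  i_t\} \neq  \{k_1, \cdots,  k_l\}$ to choose anything, yet that hypothesis is precisely what licenses the paper's opening move.

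The more serious problem is the inductive step itself. The diagram labelled $b$ with $\pi$ pairs at $k_1,\cdots,k_l$ is a single connected diagram: the $\pi$ pairs sandwich green legs that all attach to the one node $b$, and under the standard interpretation the whole thing is one elementary operation acting on the single component indexed by the full set $\{k_1,\cdots,k_l\}$. It is not the composite of ``a factor carrying a $\pi$ pair at $k_\ast$'' and ``a factor carrying $\pi$ pairs at the remaining $l-1$ indices'': two such elementary diagrams would act on the components indexed by $\{k_\ast\}$ and by $\{k_1,\cdots,k_l\}\setminus\{k_\ast\}$ respectively, neither of which is the component indexed by the whole set, so the factorisation is false semantically and hence not derivable. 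If you instead mean peeling off only the outer conjugating red $\pi$'s at $k_\ast$, the inner diagram retains an unsandwiched green leg at $k_\ast$ attached to $b$ and so is not an instance of the proposition with parameter $l-1$; moreover the peeled $\pi$'s fail to commute cleanly with the $a$-diagram exactly when $k_\ast\in\{j_1,\cdots,j_s\}$, which is the delicate case. Deferring that index to the end does not rescue this, because at the final stage the $a$-diagram still carries its own $\pi$ pairs at $i_1,\cdots,i_t$, so Proposition \ref{addpimultiplycommutg} (stated for an $a$-node with no $\pi$ pairs of its own) does not apply; and Proposition \ref{multipidoublecom}, which you invoke to ``separate out'' the $a$-side pairs, commutes two multiplication-type diagrams with each other rather than detaching $\pi$ pairs from an addition diagram. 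As written, neither the base case with $\{i_1,\cdots,i_t\}\neq\emptyset$ nor the inductive step goes through.
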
   
   \begin{proof}
    We assume w.l.g that $k_l\notin \{i_1, \cdots,  i_t\}$. If $k_l\notin  \{j_1, \cdots,  j_s\}$. Then we have 
      \begin{equation}\label{addpipairmultiplycommutgeq}
	\beginpgfgraphicnamed{TikZit//addpipairmultiplycommutgprf1}
	\InputIfFileExists{TikZit//addpipairmultiplycommutgprf1.tikz}{}{\input{./figures/TikZit//addpipairmultiplycommutgprf1.tikz}}%
	\endpgfgraphicnamed
 
  \end{equation}
If   $ |\{j_1, \cdots,  j_s\}|\geq 2$, $k_l\in  \{j_1, \cdots,  j_s\}$, we  assume w.l.g that $k_l=j_s\neq j_1$.
Then we have
$$%
	\beginpgfgraphicnamed{TikZit//addpipairmultiplycommutgprf2}
	\InputIfFileExists{TikZit//addpipairmultiplycommutgprf2.tikz}{}{\input{./figures/TikZit//addpipairmultiplycommutgprf2.tikz}}%
	\endpgfgraphicnamed
 $$
 \end{proof}

    \begin{proposition}\label{TR15}
$$ %
	\beginpgfgraphicnamed{TikZit//pimultiplyabsorbtion}
	\InputIfFileExists{TikZit//pimultiplyabsorbtion.tikz}{}{\input{./figures/TikZit//pimultiplyabsorbtion.tikz}}%
	\endpgfgraphicnamed
 $$
 \end{proposition}
 \begin{proof}
 If $ab=0$, it can be easily verified. Below we assume that $ab\neq 0$. Then we have
$$  %
	\beginpgfgraphicnamed{TikZit//pimultiplyabsorbtionprf}
	\InputIfFileExists{TikZit//pimultiplyabsorbtionprf.tikz}{}{\input{./figures/TikZit//pimultiplyabsorbtionprf.tikz}}%
	\endpgfgraphicnamed
 $$
 \end{proof}

    \begin{proposition}\label{pimultiaddcombinepro}
$$ %
	\beginpgfgraphicnamed{TikZit//pimultiaddcombine}
	\InputIfFileExists{TikZit//pimultiaddcombine.tikz}{}{\input{./figures/TikZit//pimultiaddcombine.tikz}}%
	\endpgfgraphicnamed
 $$
where the node $ab$ is connected to $ j_1, \cdots,  j_s $ via pink nodes.
 \end{proposition}
 \begin{proof}
$$  %
	\beginpgfgraphicnamed{TikZit//pimultiaddcombineprf}
	\InputIfFileExists{TikZit//pimultiaddcombineprf.tikz}{}{\input{./figures/TikZit//pimultiaddcombineprf.tikz}}%
	\endpgfgraphicnamed
 $$
 \end{proof}

  \begin{proposition}\label{pitopaddpipaircommutprop}
  Suppose  the node $a$ is connected to $ j_1, \cdots,  j_s $ via pink nodes, the node $b$ is connected to $ i_1, \cdots,  i_t $ via pink nodes,  pairs of red $\pi$ nodes separated by green nodes connected to $b$ are located on  $ j_1, \cdots,  j_s $. Furthermore,  $\emptyset\neq \{i_1, \cdots,  i_t\} \subseteq  \{1, \cdots,  n\}, \emptyset\neq \{j_1, \cdots,  j_s\} \subseteq  \{n+1, \cdots,  n+m\}$.
     \begin{equation}\label{pitopaddpipaircommuteq}
	\beginpgfgraphicnamed{TikZit//pitopaddpipaircommut}
	\InputIfFileExists{TikZit//pitopaddpipaircommut.tikz}{}{\input{./figures/TikZit//pitopaddpipaircommut.tikz}}%
	\endpgfgraphicnamed
 
    \end{equation}
  \end{proposition}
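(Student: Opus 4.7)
The plan is to exploit the disjointness of the two index sets: $\{i_1,\ldots,i_t\}$ lives in the top $n$ wires while $\{j_1,\ldots,j_s\}$ lives in the bottom $m$ wires, so the pink-node ``leg'' of $b$ attached to the $i$-wires is spatially separated from the red-$\pi$-pair structure of $b$ that reaches into the $j$-wires (where $a$ also lives). This separation is exactly what is needed to commute $b$'s pink legs past the $a$-structure without interference, and to apply the previously established addition-commutation rules on the bottom $m$ wires in isolation.

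First, I would use Proposition \ref{picntcommutesam} and Proposition \ref{picntcommutesamgrn} (together with Corollary \ref{picntcommutcro} and Corollary \ref{picntcommutcro2}) to move $b$'s pink connections on the top $n$ wires past any pink/green structure of $a$; since $a$ has no legs reaching the top $n$ wires, these commutations only rearrange wires rather than create new interactions. After this cleanup, the remaining piece on the bottom $m$ wires has exactly the form treated in Proposition \ref{addpipairmultiplycommutgp}: an $a$-node connected via pink nodes to $j_1,\ldots,j_s$ alongside a set of red-$\pi$-pairs (the ones coming from $b$) living on a subset of the $j$-wires. Applying that proposition converts the single combined expression into the required sum of two terms on the right-hand side, each involving a copy of $a$ with the appropriately repositioned red-$\pi$-pair structure.

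To finish, I would reattach $b$'s pink-leg structure on the top $n$ wires to each summand, using Proposition \ref{addcommutatgencont} (or directly Corollary \ref{propadprimecro}) to justify that the addition operation distributes over the $b$-configuration once its interaction with the $j$-wires has been rewritten. Depending on whether $|\{j_1,\ldots,j_s\}|=1$ or $\geq 2$, a small case split may be required when invoking Proposition \ref{addpipairmultiplycommutgp}, mirroring the case split in that proposition's own statement.

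The main obstacle will be the bookkeeping: tracking which red-$\pi$-pairs of $b$ attach to which $j$-wire through each rewriting step, and verifying that the spatial separation between the $i$-wires and $j$-wires is preserved throughout so that no spurious cross-wire interactions are introduced. In particular, one has to be careful that when $b$'s pink legs are commuted past the AND-gate-like combination of $a$ on the bottom wires, the commutation rules of Proposition \ref{picntcommutesam} genuinely apply — this is exactly where the disjointness hypothesis $\{i_1,\ldots,i_t\}\subseteq\{1,\ldots,n\}$, $\{j_1,\ldots,j_s\}\subseteq\{n+1,\ldots,n+m\}$ is used, and it guarantees that each step of the rewriting reduces cleanly to a previously established rule rather than introducing a genuinely new configuration.
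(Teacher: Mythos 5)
Your proposal misreads what the proposition asserts. This is not a commutation statement: the hypothesis places the red $\pi$ pairs of the $b$-box on \emph{exactly} the wires $j_1,\ldots,j_s$ that carry the pink nodes of the $a$-box, and the content of equation (\ref{pitopaddpipaircommuteq}) is that in this fully-overlapping configuration the two boxes \emph{interact}: the $b$-box is consumed and a new row-addition box labelled by the \emph{product} $ab$ is created (while the $a$-box survives). This is precisely how the proposition is used in step 2 of the tensor-product rewriting ("we get the diagram $a_0b_0$ \ldots while $b_0^{a_0}$ is consumed"). No chain of pure commutation lemmas can manufacture the multiplicative label $ab$, so your overall strategy cannot reach the right-hand side.

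Concretely, the workhorse you propose, Proposition \ref{addpipairmultiplycommutgp}, is inapplicable here by design: its hypotheses require the red $\pi$ pairs of one box to meet the pink-node set of the other in at most one wire (and even then only when that pink-node set has size at least $2$). In the present proposition the overlap is the entire set $\{j_1,\ldots,j_s\}$, which violates every branch of those hypotheses. That exclusion is not incidental --- the fully-overlapping case is exactly the one where the boxes fail to commute, which is why it needs its own proposition. The paper's actual proof does not route through commutation at all; it splits on $ab=0$ versus $ab\neq 0$ and, in the nonzero case, performs a direct diagrammatic computation (relying on invertibility of the scalars, in the style of (Siv) and the other lemmas proved under the assumption $ab\neq 0$). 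Your proposal contains neither this case split nor any step that could produce the product $ab$, so the gap is essential rather than a matter of bookkeeping.
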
  
   \begin{proof}
   If $ab=0$, then it is easy to prove. We assume that $ab\neq 0$. Then
$$  %
	\beginpgfgraphicnamed{TikZit//pitopaddpipaircommutprf}
	\InputIfFileExists{TikZit//pitopaddpipaircommutprf.tikz}{}{\input{./figures/TikZit//pitopaddpipaircommutprf.tikz}}%
	\endpgfgraphicnamed
 $$
$$  %
	\beginpgfgraphicnamed{TikZit//pitopaddpipaircommutprf2}
	\InputIfFileExists{TikZit//pitopaddpipaircommutprf2.tikz}{}{\input{./figures/TikZit//pitopaddpipaircommutprf2.tikz}}%
	\endpgfgraphicnamed
 $$
 \end{proof}

   \begin{lemma}\label{multiplypimulticommutesim}
	\beginpgfgraphicnamed{TikZit//multiplypimulticommutesimp}
	\InputIfFileExists{TikZit//multiplypimulticommutesimp.tikz}{}{\input{./figures/TikZit//multiplypimulticommutesimp.tikz}}%
	\endpgfgraphicnamed
 
         \end{lemma}
   \begin{proof}
      $$  %
	\beginpgfgraphicnamed{TikZit//multiplypimulticommutesimpprf}
	\InputIfFileExists{TikZit//multiplypimulticommutesimpprf.tikz}{}{\input{./figures/TikZit//multiplypimulticommutesimpprf.tikz}}%
	\endpgfgraphicnamed
 $$
       $$  %
	\beginpgfgraphicnamed{TikZit//multiplypimulticommutesimpprf2}
	\InputIfFileExists{TikZit//multiplypimulticommutesimpprf2.tikz}{}{\input{./figures/TikZit//multiplypimulticommutesimpprf2.tikz}}%
	\endpgfgraphicnamed
 $$

  \end{proof}

  \begin{proposition}\label{multiplypimulticommute}
	\beginpgfgraphicnamed{TikZit//multiplypimulticommute}
	\InputIfFileExists{TikZit//multiplypimulticommute.tikz}{}{\input{./figures/TikZit//multiplypimulticommute.tikz}}%
	\endpgfgraphicnamed
         \end{proposition}
            \begin{proof}
              If $a=0$ or $b=0$, then the equality holds trivially. Now we assume $ab\neq 0$. 
 First we note that the following equalities hold:
     \begin{equation}\label{multiplypimulticommuteeq}
	\beginpgfgraphicnamed{TikZit//addinverse}
	\InputIfFileExists{TikZit//addinverse.tikz}{}{\input{./figures/TikZit//addinverse.tikz}}%
	\endpgfgraphicnamed
, \quad\quad \quad\quad %
	\beginpgfgraphicnamed{TikZit//seperate}
	\InputIfFileExists{TikZit//seperate.tikz}{}{\input{./figures/TikZit//seperate.tikz}}%
	\endpgfgraphicnamed
 
            \end{equation}
           Since 
            $$   %
	\beginpgfgraphicnamed{TikZit//multiplypimulticommuteprf1}
	\InputIfFileExists{TikZit//multiplypimulticommuteprf1.tikz}{}{\input{./figures/TikZit//multiplypimulticommuteprf1.tikz}}%
	\endpgfgraphicnamed
 $$
             $$   %
	\beginpgfgraphicnamed{TikZit//multiplypimulticommuteprf2}
	\InputIfFileExists{TikZit//multiplypimulticommuteprf2.tikz}{}{\input{./figures/TikZit//multiplypimulticommuteprf2.tikz}}%
	\endpgfgraphicnamed
 $$
          Then   it is equivalent to prove
          $$    %
	\beginpgfgraphicnamed{TikZit//multiplypimulticommuteequiv1}
	\InputIfFileExists{TikZit//multiplypimulticommuteequiv1.tikz}{}{\input{./figures/TikZit//multiplypimulticommuteequiv1.tikz}}%
	\endpgfgraphicnamed
 $$
           
          Furthermore,
              $$    %
	\beginpgfgraphicnamed{TikZit//multiplypimulticommuteequiv2}
	\InputIfFileExists{TikZit//multiplypimulticommuteequiv2.tikz}{}{\input{./figures/TikZit//multiplypimulticommuteequiv2.tikz}}%
	\endpgfgraphicnamed
 $$
              Therefore it is equivalent to prove Lemma \ref{multiplypimulticommutesim} which has been done.
                \end{proof}

  \begin{corollary}\label{multiplypimulticommutecro} 
    $$    %
	\beginpgfgraphicnamed{TikZit//multiplypimulticommutecrory}
	\InputIfFileExists{TikZit//multiplypimulticommutecrory.tikz}{}{\input{./figures/TikZit//multiplypimulticommutecrory.tikz}}%
	\endpgfgraphicnamed
 $$
    \end{corollary}

 \begin{proposition}\label{addpipair2sidecommutprop}
   Suppose  the node $a$ is connected to $ j_1, \cdots,  j_s $ via pink nodes, the node $b$ is connected to $ i_1, \cdots,  i_t $ via pink nodes,  pairs of red $\pi$ nodes separated by green nodes connected to $a$ are located on  $ h_1, \cdots,  h_u $,  pairs of red $\pi$ nodes separated by green nodes connected to $b$ are located on  $ k_1, \cdots,  k_l $. Furthermore,  $\emptyset\neq \{i_1, \cdots,  i_t\} \subseteq  \{1, \cdots,  n\}, \emptyset\neq \{j_1, \cdots,  j_s\} \subseteq  \{1, \cdots,  n\},   \{h_1, \cdots,  h_u\} \subseteq  \{n+1, \cdots,  n+m\},  \{k_1, \cdots,  k_l\} \subseteq  \{n+1, \cdots,  n+m\}$. Then
 $$  %
	\beginpgfgraphicnamed{TikZit//addpipair2sidecommut}
	\InputIfFileExists{TikZit//addpipair2sidecommut.tikz}{}{\input{./figures/TikZit//addpipair2sidecommut.tikz}}%
	\endpgfgraphicnamed
 $$
  \end{proposition}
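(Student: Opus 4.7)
The plan is to reduce this two-sided commutation to a combination of the one-sided commutation results already established, by ``separating'' each labelled node into the part that couples to the top $n$ wires and the part that couples to the bottom $m$ wires. Since the pink-connection structure on top is governed by the additive rules (AD') and (Dis), whereas the pi-pair structure on the bottom is governed by the multiplicative rules behind Proposition \ref{multipidoublecom} and Corollary \ref{multiplypimulticommutecro}, these two roles are essentially independent and can be disentangled.

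First I would use Proposition \ref{pitopaddpipaircommutprop} (or an analogous manipulation with (Sml), (Dis) and Proposition \ref{propadprime}) to move the pi-pair block of the $a$-structure (the one supported on $\{h_1,\dots,h_u\}\subseteq\{n+1,\dots,n+m\}$) past the pink connections of the $b$-structure (supported on $\{i_1,\dots,i_t\}\subseteq\{1,\dots,n\}$); since these act on disjoint sets of wires, the only nontrivial content is how the scalar labels $a$ and $b$ interact through the central $a$-node, which is handled by the same absorption arguments used in Proposition \ref{pitopaddpipaircommutprop}. Symmetrically, I would slide the pi-pair block of the $b$-structure past the pink connections of the $a$-structure. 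After these two moves the diagram is arranged so that the two pink-connection blocks (on top) and the two pi-pair blocks (on bottom) each stand side by side.

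The remaining commutations are then genuinely one-sided: (i) the pink-connection block of $a$ commutes with the pink-connection block of $b$ on the top wires by Proposition \ref{addcommutatgencont}, treating the pi-pair apparatus below as an inert context that is transported along; and (ii) the pi-pair block of $a$ commutes with the pi-pair block of $b$ on the bottom wires by Proposition \ref{multipidoublecom} combined with Corollary \ref{multiplypimulticommutecro}. Composing these four moves gives the desired equality.

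The main obstacle is the bookkeeping in the separation step: splitting the single labelled node $a$ into two pieces that each carry the label $a$ and live on disjoint wire regions is only clean when performed via (AD'), Proposition \ref{propadprime} and (Sml) in a particular order, and one has to verify that the choice of order does not introduce spurious scalars. I anticipate handling this by induction on $|\{h_1,\dots,h_u\}|+|\{k_1,\dots,k_l\}|$, with the base case (both sets empty) being exactly Proposition \ref{addcommutatgencont}, and the inductive step pulling off a single pair of red $\pi$'s from either $a$ or $b$ using Proposition \ref{addpipairmultiplycommutgp}. A short case analysis is needed according to whether the pi-pair wire being pulled off lies in $\{j_1,\dots,j_s\}\cup\{i_1,\dots,i_t\}$ or not, mirroring the case distinctions that appear in Propositions \ref{addpimultiplycommutg} and \ref{addpipairmultiplycommutgp}.
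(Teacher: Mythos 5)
There is a genuine gap, and it is precisely at the step you yourself flag as the ``main obstacle'': the separation of the node $a$ into a piece living on the top $n$ wires and a piece living on the bottom $m$ wires. The pink connections on $\{j_1,\dots,j_s\}$ and the red-$\pi$ pairs on $\{h_1,\dots,h_u\}$ are both attached to the \emph{same} central node carrying the single label $a$; no rule in the paper lets you cut this gadget into two independent labelled pieces (and if both pieces carried the label $a$ the interpretation would be wrong). The lemmas you invoke do not supply this: Proposition \ref{pitopaddpipaircommutprop} is an absorption result that requires the $\pi$-pairs of $b$ to sit on \emph{exactly} the wires where $a$ has its pink connections and \emph{consumes} $b$ to produce a product $ab$ --- it is not a tool for sliding a $\pi$-pair block past an unrelated pink block. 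Likewise your proposed induction has no valid reduction step: Proposition \ref{addpipairmultiplycommutgp} commutes an intact addition gadget with an intact multiplication gadget; it does not ``pull off a single pair of red $\pi$'s'' from an addition gadget, and no other result in the paper performs such a peeling. So the disentanglement on which your whole plan rests is unsupported.

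For comparison, the paper's proof does not attempt any separation. It disposes of two degenerate cases by citation --- if $\{i_1,\dots,i_t\}=\{j_1,\dots,j_s\}$ it is Proposition \ref{addpidoublecom}, and if $\{h_1,\dots,h_u\}=\{k_1,\dots,k_l\}$ it is Proposition \ref{addcommutatgencont} --- and otherwise picks witnesses $k_1\notin\{h_1,\dots,h_u\}$ and $i_t\notin\{j_1,\dots,j_s\}$ and carries out a direct diagrammatic derivation on the two intact gadgets, exploiting the wires on which they differ. If you want to salvage your outline, you would need to replace the separation step by such a direct computation (or prove a genuine decomposition lemma first), since the commutation of the two gadgets really does depend on the joint top-and-bottom data and cannot be factored into independent top and bottom commutations.
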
   
    \begin{proof}
    If $\{i_1, \cdots,  i_t\} =\{j_1, \cdots,  j_s\} $, then it is just the case of Proposition \ref{addpidoublecom}.  If  $\{h_1, \cdots,  h_u\}=\{k_1, \cdots,  k_l\}$, then it is just the case of Proposition \ref{addcommutatgencont}. Therefore, we assume w.l.g that $k_1\notin \{h_1, \cdots,  h_u\},   i_t\notin \{j_1, \cdots,  j_s\} $. Then we have
$$  %
	\beginpgfgraphicnamed{TikZit//addpipair2sidecommutprf}
	\InputIfFileExists{TikZit//addpipair2sidecommutprf.tikz}{}{\input{./figures/TikZit//addpipair2sidecommutprf.tikz}}%
	\endpgfgraphicnamed
 $$
 \end{proof}

   \begin{corollary}\label{addpipair2sidecommutcro} 
    Suppose  the node $a$ is connected to $ h_1, \cdots,  h_u $ via pink nodes, the node $b$ is connected to $ k_1, \cdots,  k_l $ via pink nodes,  pairs of red $\pi$ nodes separated by green nodes connected to $a$ are located on  $ j_1, \cdots,  j_s $,  pairs of red $\pi$ nodes separated by green nodes connected to $b$ are located on  $ i_1, \cdots,  i_t $. Furthermore,  $\{i_1, \cdots,  i_t\} \subseteq  \{1, \cdots,  n\},  \{j_1, \cdots,  j_s\} \subseteq  \{1, \cdots,  n\},  \emptyset\neq  \{h_1, \cdots,  h_u\} \subseteq  \{n+1, \cdots,  n+m\},  \emptyset\neq\{k_1, \cdots,  k_l\} \subseteq  \{n+1, \cdots,  n+m\}$. Then

    $$    %
	\beginpgfgraphicnamed{TikZit//addpipair2sidecommutcroy}
	\InputIfFileExists{TikZit//addpipair2sidecommutcroy.tikz}{}{\input{./figures/TikZit//addpipair2sidecommutcroy.tikz}}%
	\endpgfgraphicnamed
 $$
    \end{corollary}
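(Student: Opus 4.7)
The plan is to derive this corollary directly from Proposition \ref{addpipair2sidecommutprop} by invoking the upside-down symmetry of the calculus. In the corollary, the pink-connection index sets $\{h_1, \ldots, h_u\}$ and $\{k_1, \ldots, k_l\}$ live in $\{n+1, \ldots, n+m\}$ (the bottom wires) while the red $\pi$-pair index sets $\{i_1, \ldots, i_t\}$ and $\{j_1, \ldots, j_s\}$ live in $\{1, \ldots, n\}$ (the top wires); this is precisely the vertical mirror of the setup of Proposition \ref{addpipair2sidecommutprop}, where the non-empty pink sets sat on top and the (possibly empty) red $\pi$-pair sets sat on the bottom. The paper explicitly assumes all axioms hold in their upside-down form, and since Proposition \ref{addpipair2sidecommutprop} is established by rewriting with those axioms, the same symmetry is inherited by it.

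Concretely, first I would bend every wire using the cups $C_a$ and caps $C_u$ of the compact structure (together with $\sigma$ to reorder) to flip both diagrams of the corollary upside down, sending the $m$ bottom wires to the top and the $n$ top wires to the bottom. Under this flip, the pink connections that were at the bottom of $a$ and $b$ become pink connections at the top, now indexed by subsets of what is effectively $\{1,\ldots,m\}$, and similarly the red $\pi$ pairs transfer to the bottom; after relabeling, the two flipped diagrams are an instance of Proposition \ref{addpipair2sidecommutprop}. Applying that proposition yields the desired equality for the flipped pair, and undoing the cups and caps via the snake equation (S3) returns to the original orientation, concluding the proof.

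The main obstacle is the degenerate case in which one or both of $\{i_1, \ldots, i_t\}$ and $\{j_1, \ldots, j_s\}$ is empty, a situation the hypotheses of the corollary permit but which lies outside the case analysis of Proposition \ref{addpipair2sidecommutprop}. When both are empty there are no red $\pi$ pairs at all on the bottom, and after flipping the claim collapses to (the upside-down form of) Proposition \ref{addcommutatgencont}, which already handles pure pink-connected addition commutation. When exactly one of the two is empty, the flipped diagrams still fall under Proposition \ref{addpipair2sidecommutprop}, since the branch of its hypotheses requiring a red $\pi$-pair index outside the corresponding pink-connection set is symmetric in $a$ and $b$ and is satisfied by the non-empty side; so one applies the proposition along that branch. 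In all sub-cases the compact-structure flip is the only non-routine step, and every other move is accounted for by previously derived equalities.
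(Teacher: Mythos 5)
There is a genuine gap: you have misidentified the symmetry relating this corollary to Proposition \ref{addpipair2sidecommutprop}. The index sets $\{1,\dots,n\}$ and $\{n+1,\dots,n+m\}$ do not label ``top wires'' and ``bottom wires''; they label the right $n$ and the left $m$ of the $n+m$ \emph{parallel} wires on which both gadgets act (compare the wording of Proposition \ref{addpipair2sidecommutprop29}, where $\{j_1,\dots,j_s\}\subseteq\{1,\dots,n\}$ sits ``on the right $n$ wires'' and $\{h_1,\dots,h_u\}\subseteq\{n+1,\dots,n+m\}$ ``on the left $m$ wires''). The corollary is therefore the proposition with the two wire \emph{blocks} interchanged, and the paper obtains it by conjugating both sides of the proposition's equality with the same wire permutation built from $\sigma$ (``swapping the left $m$ wires to the right''), after which the hypotheses match verbatim. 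In particular your worry about degenerate cases is moot: the proposition already permits its red $\pi$-pair sets $\{h_1,\dots,h_u\}$ and $\{k_1,\dots,k_l\}$ to be empty, so the empty cases of $\{i_1,\dots,i_t\}$ and $\{j_1,\dots,j_s\}$ in the corollary are covered without any appeal to Proposition \ref{addcommutatgencont}.

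Your proposed move --- bending all wires with the compact structure to flip the diagrams upside down --- does something different: it \emph{transposes} the diagrams. The transpose of a row-addition gadget of the form (\ref{rowaddrepresentation}) is not again a gadget of that form (it represents the transposed elementary matrix, which adds a multiple of the last row to row $j$ rather than a multiple of row $j$ to the last row), so the flipped diagrams are not an instance of Proposition \ref{addpipair2sidecommutprop}, and the claimed ``relabeling'' step fails. The statement is true and the repair is a one-line wire-permutation argument, but the proof as written does not establish it.
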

 This can be obtained from Proposition \ref{addpipair2sidecommutprop} by swapping the left $m$ wires to the right.

    \begin{lemma}\label{cnotscomutelm}
  $$  %
	\beginpgfgraphicnamed{TikZit//cnotscomute}
	\InputIfFileExists{TikZit//cnotscomute.tikz}{}{\input{./figures/TikZit//cnotscomute.tikz}}%
	\endpgfgraphicnamed
 $$
     \end{lemma}
   This can be easily obtained mainly by using the (B2) rule.
   
 \begin{proposition}\label{addpipair2sidecommutprop28}
   Suppose  the node $a$ is connected to $ h_1, \cdots,  h_u $ via pink nodes, the node $b$ is connected to $ i_1, \cdots,  i_t $ via pink nodes,  pairs of red $\pi$ nodes separated by green nodes connected to $a$ are located on  $ j_1, \cdots,  j_s $,  pairs of red $\pi$ nodes separated by green nodes connected to $b$ are located on  $ k_1, \cdots,  k_l $. Furthermore,  $\emptyset\neq \{i_1, \cdots,  i_t\} \subseteq  \{1, \cdots,  n\}, \{j_1, \cdots,  j_s\} \subseteq  \{1, \cdots,  n\},  \emptyset\neq \{h_1, \cdots,  h_u\} \subseteq  \{n+1, \cdots,  n+m\},  \emptyset\neq \{k_1, \cdots,  k_l\} \subseteq  \{n+1, \cdots,  n+m\}, \{h_1, \cdots,  h_u\}\neq \{k_1, \cdots,  k_l\} $. Then
   $$  %
	\beginpgfgraphicnamed{TikZit//addpipair2sidecommutprop28}
	\InputIfFileExists{TikZit//addpipair2sidecommutprop28.tikz}{}{\input{./figures/TikZit//addpipair2sidecommutprop28.tikz}}%
	\endpgfgraphicnamed
 $$
    \end{proposition}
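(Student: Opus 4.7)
My plan is to reduce Proposition~\ref{addpipair2sidecommutprop28} to the previously established Proposition~\ref{addpipair2sidecommutprop} (together with its Corollary~\ref{addpipair2sidecommutcro}) via a sequence of local rewrites that exploit the asymmetry between $a$'s and $b$'s roles. The two nodes play structurally different roles here ($a$'s pink connections live on the ``right'' wires $h_1,\dots,h_u$ while its red~$\pi$ pairs live on the ``left'' wires $j_1,\dots,j_s$, and symmetrically for $b$), so the strategy is to push the diagram into the ``parallel'' form of Proposition~\ref{addpipair2sidecommutprop} by commuting the mismatched pieces past each other.

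First I would dispose of the trivial case $ab=0$ by the same argument used at the beginning of Proposition~\ref{TR15} (or \ref{addcommutat}). Assuming $ab\ne 0$, I would split on whether $\{h_1,\dots,h_u\}$ and $\{k_1,\dots,k_l\}$ are disjoint, nested, or merely overlapping; the hypothesis $\{h_1,\dots,h_u\}\ne\{k_1,\dots,k_l\}$ guarantees WLOG an index (say $k_1$) that lies outside $\{h_1,\dots,h_u\}$. Using Proposition~\ref{addpipair2sidecommutprop} for the $k_1$-line part, plus Lemma~\ref{cnotscomutelm} to move CNOT-like structures through each other, I can isolate the overlap of the $h$-set and $k$-set on a common block of wires.

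On that common block I would apply Proposition~\ref{addpidoublecom} (which handles the ``same pink connections on the right, same $\pi$-pair positions'' case) to swap the two addition boxes, and then run the transport backwards: Corollary~\ref{addpipair2sidecommutcro} finishes the piece that sits on the indices in $\{h_1,\dots,h_u\}\setminus\{k_1,\dots,k_l\}$ and the one in $\{k_1,\dots,k_l\}\setminus\{h_1,\dots,h_u\}$. Along the way the commutations used to move a single $\pi$-pair through a pink-connected block come from Proposition~\ref{addpimultiplycommutg}, Proposition~\ref{multiplypimulticommute} and Proposition~\ref{pitopaddpipaircommutprop}, all of which have already been proved, so the individual steps are routine.

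The main obstacle, as in the proof of Proposition~\ref{addpipair2sidecommutprop}, will be the bookkeeping: the case analysis splits on which of the four index sets overlap, and one has to be careful that after each swap one still satisfies the hypotheses of the lemma being invoked next. I expect the ``nested'' subcase (where one of $\{h_1,\dots,h_u\},\{k_1,\dots,k_l\}$ is a proper subset of the other and similarly for $\{i_1,\dots,i_t\},\{j_1,\dots,j_s\}$) to require the most delicate argument, because there we cannot simply peel off a wire on which only one of $a,b$ acts; instead one has to introduce a cancelling pair of red $\pi$s on a shared wire (as done in the proofs of Propositions~\ref{addcommutatgencont} and~\ref{addpipairmultiplycommutgp}) to create the required asymmetry before invoking Proposition~\ref{addpipair2sidecommutprop}.
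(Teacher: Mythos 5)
Your plan hinges on reducing this ``crossed'' configuration (where $a$'s pink connections live on the right block $\{h_1,\dots,h_u\}$ while its $\pi$ pairs live on the left, and conversely for $b$) to Proposition~\ref{addpipair2sidecommutprop} and Proposition~\ref{addpidoublecom}. But both of those results concern the \emph{parallel} configuration, in which the two addition boxes have their pink connections on the same block of wires and their $\pi$ pairs on the other. You do not explain how to convert one configuration into the other, and no such rewrite is available: moving a row-addition box's pink connections from one block to the other changes the matrix it denotes, so ``pushing the diagram into the parallel form'' is not a sequence of sound local rewrites. The paper does not take this route at all --- it proves the commutation directly, fixing (as you do) a wire $k_1\notin\{h_1,\dots,h_u\}$ on the right, and then splitting on whether $\{i_1,\dots,i_t\}\cap\{j_1,\dots,j_s\}$ is empty, i.e.\ on the interaction \emph{on the left wires} between $b$'s pink connections and $a$'s $\pi$ pairs (in the nonempty case taking $j_1=i_1$ and resolving the clash diagrammatically).

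That left-wire interaction is exactly the part your plan never confronts: your case analysis is entirely on $\{h_1,\dots,h_u\}$ versus $\{k_1,\dots,k_l\}$, and the lemmas you cite to ``transport'' pieces past each other do not cover the situation where a $\pi$ pair of $a$ sits on the very wire to which $b$ is pink-connected. Worse, one of the tools you name, Proposition~\ref{pitopaddpipaircommutprop}, is a \emph{fusion} lemma --- it consumes the two boxes and produces a single $ab$ box --- so invoking it while trying to prove that the two boxes merely commute would derail the argument. The trick of inserting a cancelling pair of red $\pi$s is fine as far as it goes, but it is a substitute for neither the missing case split on $\{i\}\cap\{j\}$ nor the (nonexistent) reduction to the parallel case, so as written the proposal has a genuine gap.
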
   
 \begin{proof}
 We assume w.l.g that $k_1\notin \{h_1, \cdots,  h_u\} $. If $\{i_1, \cdots,  i_t\} \cap \{j_1, \cdots,  j_s\}=\emptyset$, then
   $$  %
	\beginpgfgraphicnamed{TikZit//addpipair2sidecommutprop28prf}
	\InputIfFileExists{TikZit//addpipair2sidecommutprop28prf.tikz}{}{\input{./figures/TikZit//addpipair2sidecommutprop28prf.tikz}}%
	\endpgfgraphicnamed
 $$
    If $\{i_1, \cdots,  i_t\} \cap \{j_1, \cdots,  j_s\}\neq\emptyset$,  we assume w.l.g that $j_1=i_1$,  then
   $$  %
	\beginpgfgraphicnamed{TikZit//addpipair2sidecommutprop28prf2}
	\InputIfFileExists{TikZit//addpipair2sidecommutprop28prf2.tikz}{}{\input{./figures/TikZit//addpipair2sidecommutprop28prf2.tikz}}%
	\endpgfgraphicnamed
 $$

   \end{proof}

 \begin{proposition}\label{addpipair2sidecommutprop29}  
   Suppose  the node $a$ is connected to $ h_1, \cdots,  h_u $ on the left $m$ wires via pink nodes, and  is connected to $ j_1, \cdots,  j_s $ on the right $n$ wires via pink nodes; the node $b$ is connected to $ k_1, \cdots,  k_l $ via pink nodes,  pairs of red $\pi$ nodes separated by green nodes connected to $b$ are located on  $ i_1, \cdots,  i_t $. Furthermore,  $ \{i_1, \cdots,  i_t\} \subseteq  \{1, \cdots,  n\}, \emptyset\neq\{j_1, \cdots,  j_s\} \subseteq  \{1, \cdots,  n\},  \emptyset\neq \{h_1, \cdots,  h_u\} \subseteq  \{n+1, \cdots,  n+m\},  \emptyset\neq \{k_1, \cdots,  k_l\} \subseteq  \{n+1, \cdots,  n+m\}, \{h_1, \cdots,  h_u\}\neq \{k_1, \cdots,  k_l\} $. Then
   $$  %
	\beginpgfgraphicnamed{TikZit//addpipair2sidecommuteprop29}
	\InputIfFileExists{TikZit//addpipair2sidecommuteprop29.tikz}{}{\input{./figures/TikZit//addpipair2sidecommuteprop29.tikz}}%
	\endpgfgraphicnamed
 $$
   \end{proposition}   
  
  \begin{proof} If  $\{h_1, \cdots,  h_u\}\nsubseteq \{k_1, \cdots,  k_l\} $,  we assume w.l.g that $h_1\notin \{k_1, \cdots,  k_l\} $. Then
   $$  %
	\beginpgfgraphicnamed{TikZit//addpipair2sidecommuteprop29prf0}
	\InputIfFileExists{TikZit//addpipair2sidecommuteprop29prf0.tikz}{}{\input{./figures/TikZit//addpipair2sidecommuteprop29prf0.tikz}}%
	\endpgfgraphicnamed
 $$
   
  If  $\{h_1, \cdots,  h_u\}\subseteq \{k_1, \cdots,  k_l\} $, we assume w.l.g that $k_1\notin \{h_1, \cdots,  h_u\} $. Here are two cases. (1)  
  $ \{i_1, \cdots,  i_t\} \nsubseteq \{j_1, \cdots,  j_s\}$, we assume w.l.g that $i_1\notin \{j_1, \cdots,  j_s\}$. Then
  $$  %
	\beginpgfgraphicnamed{TikZit//addpipair2sidecommuteprop29prf10}
	\InputIfFileExists{TikZit//addpipair2sidecommuteprop29prf10.tikz}{}{\input{./figures/TikZit//addpipair2sidecommuteprop29prf10.tikz}}%
	\endpgfgraphicnamed
 $$ 
   $$  %
	\beginpgfgraphicnamed{TikZit//addpipair2sidecommuteprop29prf1}
	\InputIfFileExists{TikZit//addpipair2sidecommuteprop29prf1.tikz}{}{\input{./figures/TikZit//addpipair2sidecommuteprop29prf1.tikz}}%
	\endpgfgraphicnamed
 $$ 
  (2)  
  $ \{i_1, \cdots,  i_t\} \subseteq \{j_1, \cdots,  j_s\}$. Then
   $$  %
	\beginpgfgraphicnamed{TikZit//addpipair2sidecommuteprop29prf2}
	\InputIfFileExists{TikZit//addpipair2sidecommuteprop29prf2.tikz}{}{\input{./figures/TikZit//addpipair2sidecommuteprop29prf2.tikz}}%
	\endpgfgraphicnamed
 $$  
    \end{proof}

   \begin{proposition}\label{addpipair2sidecommutprop29b}  
    Suppose  the node $a$ is connected to $ h_1, \cdots,  h_u $ on the left $m$ wires via pink nodes, and  is connected to $ j_1, \cdots,  j_s $ on the right $n$ wires via pink nodes; the node $b$ is connected to  $ i_1, \cdots,  i_t $ via pink nodes,  pairs of red $\pi$ nodes separated by green nodes connected to $b$ are located on $ k_1, \cdots,  k_l $ . Furthermore,  $\emptyset\neq \{i_1, \cdots,  i_t\} \subseteq  \{1, \cdots,  n\}, \emptyset\neq\{j_1, \cdots,  j_s\} \subseteq  \{1, \cdots,  n\},  \emptyset\neq \{h_1, \cdots,  h_u\} \subseteq  \{n+1, \cdots,  n+m\}, \{k_1, \cdots,  k_l\} \subseteq  \{n+1, \cdots,  n+m\}, \{h_1, \cdots,  h_u\}\neq \{k_1, \cdots,  k_l\} $. Then
   
 $$  %
	\beginpgfgraphicnamed{TikZit//addpipair2sidecommuteprop29b}
	\InputIfFileExists{TikZit//addpipair2sidecommuteprop29b.tikz}{}{\input{./figures/TikZit//addpipair2sidecommuteprop29b.tikz}}%
	\endpgfgraphicnamed
 $$
   \end{proposition}  
    \begin{proof} 
    If $\{k_1, \cdots,  k_l\} = \emptyset$, then it is just the case of Proposition \ref{addcommutatgencont}. Below we assume  that  $\{k_1, \cdots,  k_l\} \neq \emptyset$. Since $\emptyset\neq \{h_1, \cdots,  h_u\}$, we assume that  $\{h_1\} \neq \emptyset$. 
Then we have
$$  %
	\beginpgfgraphicnamed{TikZit//addpipair2sidecommuteprop29bprf00}
	\InputIfFileExists{TikZit//addpipair2sidecommuteprop29bprf00.tikz}{}{\input{./figures/TikZit//addpipair2sidecommuteprop29bprf00.tikz}}%
	\endpgfgraphicnamed
 $$
   \end{proof}

 \begin{proposition}\label{addpipairmulcommutprop30a}  
    Suppose  the node $a$ is connected to $ j_1, \cdots,  j_s $ on the  right $n$  wires via pink nodes,
 pairs of red $\pi$ nodes separated by green nodes connected to $a$ are located on $ k_1, \cdots,  k_l $,   pairs of red $\pi$ nodes separated by green nodes connected to $b$ are located on $ i_1, \cdots,  i_t $. Furthermore,  $ \{i_1, \cdots,  i_t\} \subseteq  \{1, \cdots,  n\}, \emptyset\neq\{j_1, \cdots,  j_s\} \subseteq  \{1, \cdots,  n\},  \emptyset\neq\{k_1, \cdots,  k_l\} \subseteq  \{n+1, \cdots,  n+m\}$. Then
    $$  %
	\beginpgfgraphicnamed{TikZit//addpipairmultcommutprop30a}
	\InputIfFileExists{TikZit//addpipairmultcommutprop30a.tikz}{}{\input{./figures/TikZit//addpipairmultcommutprop30a.tikz}}%
	\endpgfgraphicnamed
 $$
   \end{proposition}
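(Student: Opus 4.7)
The plan is to case-split on the intersection structure of the index sets $\{i_1,\ldots,i_t\}$, $\{j_1,\ldots,j_s\}$, $\{k_1,\ldots,k_l\}$, following the pattern established by Propositions \ref{addpipair2sidecommutprop}, \ref{addpipair2sidecommutprop28}, \ref{addpipair2sidecommutprop29} and \ref{addpipair2sidecommutprop29b}. The essential asymmetry here, compared to those earlier results, is that $a$ carries \emph{both} a pink attachment on the right $n$ wires (at positions $j_1,\ldots,j_s$) and a $\pi$-pair attachment on the left $m$ wires (at positions $k_1,\ldots,k_l$), while $b$ carries only a $\pi$-pair attachment on the right side (at positions $i_1,\ldots,i_t$) and nothing on the left $m$ wires. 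Because $b$ is trivial on the left, the $\pi$-pairs of $a$ at $k_1,\ldots,k_l$ need only slide past identity wires on the $b$-side, which is handled directly by Corollary \ref{picntcommutcro2} and the bialgebra Lemma \ref{andbial}. Hence the genuine content of the proof lives on the right $n$ wires.

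First I would dispose of the degenerate cases. If $ab=0$ both sides collapse to scalar multiples of the same diagram by (Zero), Corollary \ref{zeroiscalarempty}, (Sml) and (Sca), so we may assume $ab\neq 0$. If $\{i_1,\ldots,i_t\}=\emptyset$, node $b$ is a plain phase and the statement reduces to Proposition \ref{addpipair2sidecommutprop29b} applied with empty $i$-set, i.e.\ to the content of Proposition \ref{addcommutatgencont}.

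Next, split on whether $\{i_1,\ldots,i_t\}$ meets $\{j_1,\ldots,j_s\}$. In the disjoint case $\{i_1,\ldots,i_t\}\cap\{j_1,\ldots,j_s\}=\emptyset$, for each $r$ the hypothesis ``$i_r\notin\{j_1,\ldots,j_s\}$ and $|\{j_1,\ldots,j_s\}|\ge 1$'' of Proposition \ref{addpimultiplycommutg} is met, so one may commute $b$'s $\pi$-pair at $i_r$ past every pink connection of $a$ in turn; iterating over $r=1,\ldots,t$ moves the whole $b$-structure across, while Corollary \ref{picntcommutcro2} and Lemma \ref{ruletensorLsimpler} take care of the remaining interaction with the $k$-indexed $\pi$-pairs of $a$ on the left $m$ wires. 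In the overlap case $\{i_1,\ldots,i_t\}\cap\{j_1,\ldots,j_s\}\neq\emptyset$, assume without loss of generality $i_1=j_1$ and invoke Proposition \ref{pitopaddpipaircommutprop} to commute $b$'s coincident $\pi$-pair through $a$'s pink connection at $j_1$; then Proposition \ref{TR15} followed by Proposition \ref{pimultiaddcombinepro} absorbs the resulting scalar contribution and removes the coincidence. Repeating this step for every $i_r\in\{j_1,\ldots,j_s\}$ reduces us to the previously handled disjoint case.

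The main obstacle will be the bookkeeping in the overlap case: each application of Proposition \ref{pitopaddpipaircommutprop} temporarily introduces additional dots between $a$ and $b$, and one must verify that the subsequent absorption via (TR15) and Proposition \ref{pimultiaddcombinepro} returns a diagram of exactly the shape appearing on the right-hand side, without leaving a residual $\pi$-pair at some $j_p$ or spuriously altering the attachment set of $a$. I would control this by choosing the peeling order so that at each step either the cardinality $|\{i_1,\ldots,i_t\}|$ strictly decreases or $|\{i_1,\ldots,i_t\}\cap\{j_1,\ldots,j_s\}|$ strictly decreases, which guarantees termination and lets the final disjoint-case reduction close the argument.
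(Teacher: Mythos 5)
You have missed the short reduction the paper uses, and the longer route you propose has real gaps. The paper's entire proof is one line: because $a$'s $\pi$-pairs sit at $\{k_1,\ldots,k_l\}\subseteq\{n+1,\ldots,n+m\}$ while $a$'s pink connections sit at $\{j_1,\ldots,j_s\}\subseteq\{1,\ldots,n\}$, the intersection $\{k_1,\ldots,k_l\}\cap\{j_1,\ldots,j_s\}$ is automatically empty, and together with $\{k_1,\ldots,k_l\}\neq\emptyset$ and $|\{j_1,\ldots,j_s\}|\geq 1$ this places the statement squarely in one of the listed cases of Proposition \ref{addpipairmultiplycommutgp} (the case where the $\pi$-pairs attached to $a$ are nonempty and disjoint from $a$'s pink connections). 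No case analysis on $\{i_1,\ldots,i_t\}\cap\{j_1,\ldots,j_s\}$ is needed: Proposition \ref{addpipairmultiplycommutgp} covers every intersection pattern of $b$'s $\pi$-pairs with $a$'s pink attachments once one of its four side conditions holds. Your split is therefore organised around the wrong intersection.

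Beyond the missed reduction, two of your steps would not go through as written. First, iterating Proposition \ref{addpimultiplycommutg} ``over $r=1,\ldots,t$'' treats the $b$-diagram with $t$ $\pi$-pairs as if it factored into $t$ diagrams each carrying a single $\pi$-pair; it does not, since all the pairs hang off the same node $b$, so the single-pair lemma cannot simply be applied $t$ times. Second, in your overlap case you invoke Proposition \ref{pitopaddpipaircommutprop}, Proposition \ref{TR15} and Proposition \ref{pimultiaddcombinepro}; these are absorption/combination results that fuse $a$ and $b$ into a single node labelled $ab$, which is exactly what must \emph{not} happen in a pure commutation statement — the right-hand side of Proposition \ref{addpipairmulcommutprop30a} still contains $a$ and $b$ as separate diagrams in swapped order. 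You flag this bookkeeping worry yourself, but the termination argument you give does not repair it: the issue is not that the recursion might fail to terminate, but that each ``absorption'' step changes the diagram into something other than the claimed right-hand side.
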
  
 \begin{proof} 
 Since $\{k_1, \cdots,  k_l\}\cap \{j_1, \cdots,  j_s\}=  \emptyset$, it is just one of the cases of Proposition \ref{addpipairmultiplycommutgp}, so we are done.
   \end{proof}

 \begin{proposition}\label{addpipairmulcommutprop30b}  
    Suppose  the node $a$ is connected to $ j_1, \cdots,  j_s $ on the  right $n$  wires via pink nodes, and connected to $ k_1, \cdots,  k_l $ on the  left $m$  wires via pink nodes;
 pairs of red $\pi$ nodes separated by green nodes connected to $b$ are located on $ i_1, \cdots,  i_t $. Furthermore,  $ \emptyset\neq\{i_1, \cdots,  i_t\} \subseteq  \{1, \cdots,  n\}, \emptyset\neq\{j_1, \cdots,  j_s\} \subseteq  \{1, \cdots,  n\},  \emptyset\neq\{k_1, \cdots,  k_l\} \subseteq  \{n+1, \cdots,  n+m\}$. Then
    $$  %
	\beginpgfgraphicnamed{TikZit//addpipairmultcommutprop30b}
	\InputIfFileExists{TikZit//addpipairmultcommutprop30b.tikz}{}{\input{./figures/TikZit//addpipairmultcommutprop30b.tikz}}%
	\endpgfgraphicnamed
 $$
   \end{proposition}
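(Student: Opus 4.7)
The plan is to adapt the case-analysis strategy used in the proofs of Propositions \ref{addpipair2sidecommutprop29} and \ref{addpipair2sidecommutprop29b} to the present configuration. The key structural observation is that node $b$'s decorations live entirely on the right $n$ wires (the red-$\pi$ pairs at $i_1,\ldots,i_t$), whereas node $a$'s pink connections are split: some on the right at $j_1,\ldots,j_s$, the rest on the left at $k_1,\ldots,k_l$. Since $\{k_1,\ldots,k_l\} \subseteq \{n+1,\ldots,n+m\}$ is disjoint from every index set on which $b$ acts, the left-side pink connections of $a$ can be carried across $b$ essentially as a silent passenger, provided we first handle the right-side interaction.

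First I would split on whether $\{i_1,\ldots,i_t\}\cap\{j_1,\ldots,j_s\}=\emptyset$. In the disjoint case, we may freely move $b$'s red-$\pi$ pairs past the right-side pink connections of $a$ by iterated application of Proposition \ref{addpipairmultiplycommutgp} (noting its hypothesis $\{k_1,\ldots,k_l\}\cap\{j_1,\ldots,j_s\}=\emptyset$ is satisfied here with the roles adapted), while the left-side pink connections of $a$ commute trivially since $b$ has no feature on those wires. The scaffolding is then closed by invoking Proposition \ref{addcommutatgencont} on the surviving AND-gate structure, mirroring how Proposition \ref{addpipairmulcommutprop30a} was reduced to Proposition \ref{addpipairmultiplycommutgp}.

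In the overlapping case, pick some $i_p = j_q$ and use the copy rule (Pic) to push a red $\pi$ through the green node of $a$'s pink connection at $j_q$, splitting the pair; one half gets absorbed into the $a$-labelled triangle using Proposition \ref{TR15} (or the combined absorption of Proposition \ref{pimultiaddcombinepro}), reducing the number of overlapping indices by one. Iterating this reduces to either the disjoint case handled above or to a configuration in which all overlap has been converted into scalar/phase modifications on $a$, at which point Proposition \ref{addpipair2sidecommutprop29b} (with the pink-on-input connections of $b$ being empty) applies.

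The main obstacle will be bookkeeping in the overlap case: tracking the phase and scalar factors produced by repeated application of (Pic) and Proposition \ref{TR15}, and verifying that after the reductions the resulting decoration of $a$ matches the RHS of the target equality. As in the proof of Proposition \ref{addpipair2sidecommutprop29b}, I anticipate needing sub-subcases according to whether $\{i_1,\ldots,i_t\}\subseteq\{j_1,\ldots,j_s\}$ or $\{i_1,\ldots,i_t\}\nsubseteq\{j_1,\ldots,j_s\}$, with the latter cleaner because at least one $i_p\notin\{j_1,\ldots,j_s\}$ provides a wire on which $b$'s red-$\pi$ pair commutes with $a$'s pink structure by the disjoint-case argument, allowing induction on $|\{i_1,\ldots,i_t\}\cap\{j_1,\ldots,j_s\}|$.
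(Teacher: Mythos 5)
Your disjoint case is sound and matches the paper's opening move, but the paper's first case is actually broader: Proposition \ref{addpipairmultiplycommutgp} already covers $|\{i_1,\ldots,i_t\}\cap\{j_1,\ldots,j_s\}|=1$ as well as $=0$ (its second alternative permits an intersection of size exactly one when $a$ has at least two pink connections, which holds here since both $\{j_1,\ldots,j_s\}$ and $\{k_1,\ldots,k_l\}$ are nonempty). So the paper disposes of everything with $|\{i\}\cap\{j\}|\leq 1$ in one sentence and only needs genuinely new work when the intersection has size at least two, which it then splits into the subcases $\{i_1,\ldots,i_t\}\subseteq\{j_1,\ldots,j_s\}$ and $\nsubseteq$, giving an explicit diagrammatic derivation for the former and reducing the latter to it.

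The genuine gap is in your overlap case. The proposed reduction --- push one red $\pi$ of a $b$-pair through $a$'s pink node at $j_q$ via (Pic) and absorb a half into the triangle via Proposition \ref{TR15} --- does not ``reduce the number of overlapping indices by one.'' The copy rule (Pic) sends a red $\pi$ entering the green dot of $a$'s row-addition gadget to a red $\pi$ on \emph{every} leg of that green dot, i.e.\ onto all the other wires $j_1,\ldots,j_s$ \emph{and} onto the left-side wires $k_1,\ldots,k_l$ and into the triangle itself. This globally rewrites $a$'s diagram and creates new $\pi$'s precisely on the wires you were counting on to be inert, so there is no well-founded induction on $|\{i\}\cap\{j\}|$ without a substantial argument that the scattered $\pi$'s recombine --- which is exactly the content the proof needs and which you defer to ``bookkeeping.'' Additionally, your terminal appeal to Proposition \ref{addpipair2sidecommutprop29b} ``with the pink-on-input connections of $b$ being empty'' violates that proposition's hypothesis $\emptyset\neq\{i_1,\ldots,i_t\}$, so it cannot serve as the base of the reduction. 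As written, the overlap case is not established.
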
  
 \begin{proof} 
 If $|\{i_1, \cdots,  i_t\}\cap \{j_1, \cdots,  j_s\}| \leq 1$, then it is just one of the cases of Proposition \ref{addpipairmultiplycommutgp}, so we are done. Below we assume that $|\{i_1, \cdots,  i_t\}\cap \{j_1, \cdots,  j_s\}| \geq 2$. 
 If 
 $\{i_1, \cdots,  i_t\}\subseteq \{j_1, \cdots,  j_s\}$, we assume  w.l.g that $i_1=j_1, i_t=j_u$. Then 
  we have
    \begin{equation}\label{addpipairmulcommutprop30beq} 
	\beginpgfgraphicnamed{TikZit//addpipairmultcommutprop30bprf1}
	\InputIfFileExists{TikZit//addpipairmultcommutprop30bprf1.tikz}{}{\input{./figures/TikZit//addpipairmultcommutprop30bprf1.tikz}}%
	\endpgfgraphicnamed
 
    \end{equation}
  If $\{i_1, \cdots,  i_t\}\nsubseteq \{j_1, \cdots,  j_s\}$, we assume  w.l.g that $i_1=j_1, i_t=j_s, i_2\notin \{j_1, \cdots,  j_s\}$. Then by the proof of the last case we have
  $$  %
	\beginpgfgraphicnamed{TikZit//addpipairmultcommutprop30bprf2}
	\InputIfFileExists{TikZit//addpipairmultcommutprop30bprf2.tikz}{}{\input{./figures/TikZit//addpipairmultcommutprop30bprf2.tikz}}%
	\endpgfgraphicnamed
 $$
  \end{proof}

    \begin{corollary}\label{addpipairmulcommutprop30bcro} 
      Suppose  the node $a$ is connected to $ j_1, \cdots,  j_s $ on the  right $n$  wires via pink nodes, and connected to $ h_1, \cdots,  h_u $ on the  left $m$  wires via pink nodes;
   pairs of red $\pi$ nodes separated by green nodes connected to $b$ are located on $ k_1, \cdots,  k_l $. Furthermore,  $ \emptyset\neq\{j_1, \cdots,  j_s\} \subseteq  \{1, \cdots,  n\},  \emptyset\neq\{k_1, \cdots,  k_l\} \subseteq  \{n+1, \cdots,  n+m\}, \emptyset\neq\{h_1, \cdots,  h_u\} \subseteq  \{n+1, \cdots,  n+m\}$. Then
       $$  %
	\beginpgfgraphicnamed{TikZit//addpipairmultcommutprop30bcro}
	\InputIfFileExists{TikZit//addpipairmultcommutprop30bcro.tikz}{}{\input{./figures/TikZit//addpipairmultcommutprop30bcro.tikz}}%
	\endpgfgraphicnamed
 $$
   \end{corollary}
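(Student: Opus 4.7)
The corollary is precisely the left--right mirror of Proposition \ref{addpipairmulcommutprop30b}. In that proposition the red-$\pi$ pairs of $b$ sit on the right $n$ wires, at positions $\{i_1,\ldots,i_t\}\subseteq\{1,\ldots,n\}$, whereas in the corollary they sit on the left $m$ wires, at positions $\{k_1,\ldots,k_l\}\subseteq\{n+1,\ldots,n+m\}$. The pink-node connections of $a$ get reassigned correspondingly: its pink legs on $\{h_1,\ldots,h_u\}$ play here the role played by the pink legs on $\{k_1,\ldots,k_l\}$ in the proposition, and its pink legs on $\{j_1,\ldots,j_s\}$ play the role played there by $\{j_1,\ldots,j_s\}$ as well. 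So the plan is to reduce the corollary to the preceding proposition by a global swap of the left block of $m$ wires past the right block of $n$ wires, in exactly the spirit in which Corollary \ref{addpipair2sidecommutcro} was obtained from Proposition \ref{addpipair2sidecommutprop}.

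Concretely, I would pre- and post-compose both sides of the claimed equation with the block swap $\sigma_{m,n}\colon m+n\to n+m$ supplied by the symmetric structure of $\mathfrak{C}$. Since $\sigma_{m,n}\circ\sigma_{n,m}=\mathrm{id}$, inserting this swap at the top and at the bottom leaves the morphism unchanged, but it interchanges the roles of the left and right blocks. Under this relabeling the pink legs of $a$ on $\{h_1,\ldots,h_u\}$ now sit on the new right $m$ wires, the pink legs of $a$ on $\{j_1,\ldots,j_s\}$ now sit on the new left $n$ wires, and $b$'s red-$\pi$ pairs on $\{k_1,\ldots,k_l\}$ now sit on the new right $m$ wires. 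This is exactly the hypothesis of Proposition \ref{addpipairmulcommutprop30b} after the substitution $(n,m,j,k,i)\mapsto(m,n,h,j,k)$.

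Applying Proposition \ref{addpipairmulcommutprop30b} in the swapped configuration yields the swapped version of the desired right-hand side; peeling the two outer copies of $\sigma_{m,n}$ back off then recovers the corollary's original right-hand side. The only real work is the index bookkeeping: one must verify that the three non-emptiness assumptions $\emptyset\neq\{j_1,\ldots,j_s\}$, $\emptyset\neq\{h_1,\ldots,h_u\}$, $\emptyset\neq\{k_1,\ldots,k_l\}$ of the corollary translate under the relabeling to the three non-emptiness assumptions of the proposition, and that the inclusions $\{h_1,\ldots,h_u\},\{k_1,\ldots,k_l\}\subseteq\{n+1,\ldots,n+m\}$ translate to inclusions in $\{1,\ldots,m\}$ after the swap. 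I expect this bookkeeping to be the main (and only) obstacle; the diagrammatic step itself is just a naturality argument using $\sigma$ and requires no further rewriting.
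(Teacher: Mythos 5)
Your proposal is correct and is essentially the paper's own argument: the paper likewise disposes of this corollary in one line as "the symmetric case of Proposition \ref{addpipairmulcommutprop30b}, obtained by swapping wires." Your additional bookkeeping of the relabeling $(n,m,j,k,i)\mapsto(m,n,h,j,k)$ and of the non-emptiness hypotheses is a correct elaboration of what the paper leaves implicit.
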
  
   This is just the symmetric case of Proposition \ref{addpipairmulcommutprop30b}, thus can be obtained by swapping wires.

  \begin{proposition}\label{addpipairmulcommutprop30c}  
    Suppose  the node $a$ is connected to $ j_1, \cdots,  j_s $ on the  right $n$  wires via pink nodes, 
 pairs of red $\pi$ nodes separated by green nodes connected to $b$ are located on $ i_1, \cdots,  i_t $. Furthermore,  $ \emptyset\neq\{i_1, \cdots,  i_t\} \subseteq  \{1, \cdots,  n\}, \emptyset\neq\{j_1, \cdots,  j_s\} \subseteq  \{1, \cdots,  n\},  \{i_1, \cdots,  i_t\}\neq\{j_1, \cdots,  j_s\}  $. Then
    $$  %
	\beginpgfgraphicnamed{TikZit//addpipairmultcommutprop30c}
	\InputIfFileExists{TikZit//addpipairmultcommutprop30c.tikz}{}{\input{./figures/TikZit//addpipairmultcommutprop30c.tikz}}%
	\endpgfgraphicnamed
 $$
   \end{proposition}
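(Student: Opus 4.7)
The plan is to derive Proposition \ref{addpipairmulcommutprop30c} from Proposition \ref{addpipairmulcommutprop30b} (its immediate predecessor, in which $a$ was additionally required to be connected to the left $m$ wires) by inserting an auxiliary connection on one of the left wires, applying \ref{addpipairmulcommutprop30b}, and then removing the auxiliary connection. All book-keeping will then rest on the commutations already established: Corollaries \ref{picntcommutcro} and \ref{picntcommutcro2}, Lemma \ref{cnotscomutelm}, and (as a last resort) Proposition \ref{addpipair2sidecommutprop29b} for the asymmetric sub-cases.

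First, I would split on the relationship between $\{i_1,\dots,i_t\}$ and $\{j_1,\dots,j_s\}$, since the hypothesis $\{i_1,\dots,i_t\}\neq\{j_1,\dots,j_s\}$ excludes only the equal case. If $\{j_1,\dots,j_s\}\not\subseteq\{i_1,\dots,i_t\}$, assume w.l.g.\ $j_s\notin\{i_1,\dots,i_t\}$; then I would introduce on one of the left $m$ wires an identity loop that fuses, via (S1), with the node $a$ to produce an auxiliary pink leg of $a$ into the left region. This exactly matches the hypothesis of Proposition \ref{addpipairmulcommutprop30b} (with $\{h_1\}$ nonempty), so that result may be invoked directly. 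Finally I would remove the auxiliary leg by undoing the fusion, again via (S1). The symmetric sub-case $\{i_1,\dots,i_t\}\not\subseteq\{j_1,\dots,j_s\}$ proceeds identically after first swapping the roles of $a$ and $b$ using Proposition \ref{addcommutat} (or its generalisation \ref{addcommutatgen}).

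The remaining sub-case is $\{i_1,\dots,i_t\}\subsetneq\{j_1,\dots,j_s\}$ (or symmetrically the reverse strict containment). Here I would not introduce an auxiliary left leg, but instead use (Pic) to copy a single red $\pi$ through the green dot that separates the $b$-phase from the wires on which the pi-pairs are located; this converts one pi-pair of $b$ into a configuration that, after applying Corollary \ref{picntcommuteandcr1} and Lemma \ref{hopfvar2}, can be absorbed into $a$ up to a scalar controlled by (Sca) and (Siv). At that point the remaining pi-pairs satisfy the hypothesis of \ref{addpipairmulcommutprop30b} and the result concludes as in the first case.

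The main obstacle I expect is keeping the global scalars consistent as the auxiliary leg is introduced and discarded: each use of (S1), (Pic), or (Sca) generates a factor, and in the containment sub-case these factors must cancel against those produced when removing the auxiliary pi-pair via Corollary \ref{halfinverse}. A secondary difficulty, purely cosmetic but easy to get wrong in a diagrammatic argument, is ordering the wire swaps so that the auxiliary leg of $a$ and the pi-pairs of $b$ never collide on the same wire simultaneously, since otherwise one would need an extra appeal to Proposition \ref{addpipair2sidecommutprop29b} rather than the simpler \ref{addpipairmulcommutprop30b}. Once these two book-keeping issues are handled, the argument is a short routing of earlier results rather than a fresh derivation.
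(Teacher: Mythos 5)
There is a genuine gap at the heart of your argument. Your central move is to graft an auxiliary pink leg of the node $a$ onto one of the left $m$ wires so that the hypotheses of Proposition \ref{addpipairmulcommutprop30b} (which requires $\emptyset\neq\{k_1,\cdots,k_l\}\subseteq\{n+1,\cdots,n+m\}$ for $a$) become satisfied, and to remove that leg afterwards. But connecting the green dot of the $a$-gadget to an open wire via a pink node is not a semantically neutral operation: in these diagrams the set of wires to which $a$ is attached determines \emph{which} row-addition matrix $A_j$ is represented (via $j=2^m-1-(2^{j_1}+\cdots+2^{j_s})$), so adding a leg changes the denotation, and no rule in Figure \ref{figurealgebra} --- certainly not (S1), which only fuses spiders already joined by a wire --- licenses introducing or deleting such a connection. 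You would need either a fresh ancilla wire prepared in a state that renders the extra control trivial (together with a lemma proving that triviality, and a justification that \ref{addpipairmulcommutprop30b} still applies to a capped wire), or some other mechanism; as written the step is simply not an equality. The treatment of the containment sub-case via (Pic) has a similar problem: (Pic) copies a red $\pi$ \emph{state} plugged into a green spider, whereas here the obstruction is a pair of red $\pi$ nodes interleaved on a wire, so the cited rule does not engage with the actual structure.

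For comparison, the paper's proof does share your case split --- $\{i_1,\cdots,i_t\}\nsubseteq\{j_1,\cdots,j_s\}$ (pick $i_1\notin\{j_1,\cdots,j_s\}$) versus strict containment (pick $j_2\notin\{i_1,\cdots,i_t\}$) --- but in both cases it concludes directly from Proposition \ref{addpipairmultiplycommutgp}, which is exactly the commutation lemma tailored to the situation where one gadget has pink connections and the other has red $\pi$ pairs overlapping in at most a controlled way. No detour through \ref{addpipairmulcommutprop30b} and no modification of the diagrams' connectivity is needed. If you want to salvage your approach, the honest fix is to abandon the auxiliary-leg insertion and instead isolate a distinguishing wire (one in the symmetric difference of $\{i_1,\cdots,i_t\}$ and $\{j_1,\cdots,j_s\}$) so that the hypotheses of \ref{addpipairmultiplycommutgp} apply, which is precisely what the paper does.
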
    
  \begin{proof} 
  If $\{i_1, \cdots,  i_t\}\nsubseteq \{j_1, \cdots,  j_s\}$, we assume  w.l.g that $i_1\notin \{j_1, \cdots,  j_s\}$. Then by Proposition \ref{addpipairmultiplycommutgp} we have
  $$  %
	\beginpgfgraphicnamed{TikZit//addpipairmultcommutprop30cprf1}
	\InputIfFileExists{TikZit//addpipairmultcommutprop30cprf1.tikz}{}{\input{./figures/TikZit//addpipairmultcommutprop30cprf1.tikz}}%
	\endpgfgraphicnamed
 $$
   If $\{i_1, \cdots,  i_t\}\subseteq \{j_1, \cdots,  j_s\}$, since $\{i_1, \cdots,  i_t\}\neq\{j_1, \cdots,  j_s\}  $, we assume  w.l.g that $i_1=j_1, i_t=j_s, j_2\notin \{i_1, \cdots,  i_t\}$. Then by Proposition \ref{addpipairmultiplycommutgp} we have
  $$  %
	\beginpgfgraphicnamed{TikZit//addpipairmultcommutprop30cprf2}
	\InputIfFileExists{TikZit//addpipairmultcommutprop30cprf2.tikz}{}{\input{./figures/TikZit//addpipairmultcommutprop30cprf2.tikz}}%
	\endpgfgraphicnamed
 $$ 
     \end{proof}

    \begin{corollary}\label{addpipairmulcommutprop30ccro}  
    Suppose  the node $a$ is connected to $ j_1, \cdots,  j_s $ on the  left $m$  wires via pink nodes, 
 pairs of red $\pi$ nodes separated by green nodes connected to $b$ are located on $ i_1, \cdots,  i_t $. Furthermore,  $ \emptyset\neq\{i_1, \cdots,  i_t\} \subseteq  \{1+n, \cdots,  m+n\}, \emptyset\neq\{j_1, \cdots,  j_s\} \subseteq  \{1+n, \cdots,  m+n\},  \{i_1, \cdots,  i_t\}\neq\{j_1, \cdots,  j_s\}  $. Then
    $$  %
	\beginpgfgraphicnamed{TikZit//addpipairmultcommutprop30ccro}
	\InputIfFileExists{TikZit//addpipairmultcommutprop30ccro.tikz}{}{\input{./figures/TikZit//addpipairmultcommutprop30ccro.tikz}}%
	\endpgfgraphicnamed
 $$
   \end{corollary}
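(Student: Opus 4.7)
The plan is to reduce this corollary directly to Proposition~\ref{addpipairmulcommutprop30c} by a wire-swapping (symmetry) argument, in exactly the same style as the short derivations already used for Corollary~\ref{addpipair2sidecommutcro} from Proposition~\ref{addpipair2sidecommutprop} and for Corollary~\ref{addpipairmulcommutprop30bcro} from Proposition~\ref{addpipairmulcommutprop30b}. The configuration in the corollary differs from that of Proposition~\ref{addpipairmulcommutprop30c} only in that the wires on which $a$ attaches (via pink nodes) and the wires on which the red-$\pi$ pairs of $b$ are placed now live in the left block $\{n+1,\dots,n+m\}$ rather than in the right block $\{1,\dots,n\}$, while the ``other side'' has no active data attached at all.

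First, I would apply the sequence of swaps $\sigma$ that interchanges the left $m$ wires with the right $n$ wires, globally on both the LHS and RHS of the claimed equality. This is a legal diagrammatic manipulation since $\sigma$ is a generator and the symmetric monoidal structure is part of $\mathfrak{C}$. Under this relabelling the indices $\{i_1,\dots,i_t\},\{j_1,\dots,j_s\} \subseteq \{n+1,\dots,n+m\}$ become subsets of $\{1,\dots,m\}$, which, after renaming $m \leftrightarrow n$, puts the diagram exactly in the form covered by Proposition~\ref{addpipairmulcommutprop30c}.

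Second, I would apply Proposition~\ref{addpipairmulcommutprop30c} to rewrite the swapped LHS into the swapped RHS, and then apply the inverse swap $\sigma^{-1}$ to restore the wires to their original positions, producing the equality claimed in the corollary. The hypotheses transfer cleanly: nonemptiness of $\{i_1,\dots,i_t\}$ and $\{j_1,\dots,j_s\}$ and the inequality $\{i_1,\dots,i_t\}\neq\{j_1,\dots,j_s\}$ are all preserved by the reindexing bijection $k \mapsto k-n$.

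There is essentially no obstacle; the only point that requires even a moment of thought is checking that no unwanted scalar or structural artefact is introduced by the compound swap, which is immediate since $\sigma$ is self-inverse and strict naturality of the symmetry in $\mathfrak{C}$ lets the swaps slide freely past the diagrammatic content. Thus the corollary reduces to ``apply Proposition~\ref{addpipairmulcommutprop30c} after swapping wires,'' matching the one-line justifications used elsewhere in this section.
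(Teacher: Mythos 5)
Your proposal is correct and coincides with the paper's own justification, which states that this is just the symmetric case of Proposition~\ref{addpipairmulcommutprop30c} obtained by swapping wires. You have merely spelled out the same swap-apply-swap-back argument in more detail.
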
   
     This is just the symmetric case of Proposition \ref{addpipairmulcommutprop30c}, thus can be obtained by swapping wires.

 \section{Completeness of ZX-calculus via elementary transformations}
 In this section, we prove the following main theorem based on a normal form of an arbitrary vector which is obtained via elementary transformations.

 \begin{theorem} \label{main}  
 The ZX-calculus is complete with respect to the rules in Figure \ref{figurealgebra}.
   \end{theorem}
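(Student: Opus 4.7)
The plan is to establish completeness via a normal form obtained through elementary matrix operations, following the strategy advertised in the introduction. First I would invoke the map-state duality afforded by the compact closed structure (the cups $C_a$ and caps $C_u$) to reduce the problem: proving that two parallel morphisms $D_1, D_2 : n \to m$ with $\llbracket D_1 \rrbracket = \llbracket D_2 \rrbracket$ are derivably equal reduces, by bending all input wires down, to the same statement for the corresponding states $0 \to n+m$. Thus it suffices to prove completeness for vectors, i.e.\ diagrams of type $0 \to k$.

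Next I would define a normal form for an arbitrary vector $v \in \mathbb{C}^{2^k}$. Writing $v = \sum_{i \in \{0,1\}^k} v_i \ket{i}$, the normal form expresses each coefficient $v_i$ as a green triangle labelled by $v_i \in \mathbb{C}$ whose single output line connects to the $k$ free wires through an AND-gate-like pattern of pink $\pi$ nodes that singles out the basis vector $\ket{i}$, with the $2^k$ resulting summands combined via the addition structure (green co-multiplication / W-state), exactly along the lines of Corollaries \ref{nlinestensornormalform}, \ref{nlinestensornormalformadd} and \ref{nlinestensormmultiply}. Soundness of the rules, already recorded, immediately forces two such normal forms to be syntactically equal whenever their interpretations agree, since the coefficient $v_i$ is read off directly from the $i$-indexed summand.

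The bulk of the work is showing that every ZX diagram $D : 0 \to k$ rewrites to normal form using the rules of Figure \ref{figurealgebra}. I would proceed by induction on the structure of $D$. Each generator ($R_{Z,a}^{(n,m)}$, $H$, $\sigma$, $C_a$, $C_u$, $T$, $T^{-1}$) admits a direct normal form obtained from the derived lemmas of Section~3 (in particular the addition and scalar rules, (H), (Sca), (Sml), (Pic), (Ivt), (BiA), (Dis)). For the inductive step on tensor product, the decomposition in Corollary \ref{nlinestensornormalform} combined with the addition-commutation propositions of Section~4 lets one merge two normal forms side-by-side into a single normal form of the combined arity. For the inductive step on composition, connecting $m$ output wires of one normal form to $m$ input wires of another forces a pairing of basis patterns on the two sides; each pairing collapses via (B1), (B2), (B3), (Hopf), (Brk), and the triangle/AND identities to a scalar together with a residual basis pattern on the remaining wires.

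The main obstacle I anticipate is this composition case. Once a composition has been expanded, one is left with a large sum of terms in which green triangles labelled by products $v_i w_j$ must be re-attached via pink $\pi$ decorations spread across mixed sets of wires, and the resulting sum of $2^{n+m}$ terms must be reshaped into the canonical normal-form layout. This is precisely what the extensive commutation lemmas of Section~4 are designed to accomplish — Propositions \ref{addcommutatgencont}, \ref{addpipair2sidecommutprop}, \ref{addpipair2sidecommutprop28}, \ref{addpipair2sidecommutprop29}, \ref{addpipair2sidecommutprop29b}, \ref{addpipairmulcommutprop30a}, \ref{addpipairmulcommutprop30b}, \ref{addpipairmulcommutprop30c}, together with Propositions \ref{TR15} and \ref{pimultiaddcombinepro} for combining coefficients — and the challenge is a careful bookkeeping argument that organises the case analysis on how the attachment sets of two AND-like patterns can overlap so that every occurring configuration is covered by one of these propositions. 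Once the normal-form procedure is established and seen to be sound, completeness follows: given $\llbracket D_1 \rrbracket = \llbracket D_2 \rrbracket$, bend both into states, reduce each to normal form using the inductive procedure, compare coefficient-indexed summands, and unbend the resulting equality back to obtain $D_1 = D_2$ in the calculus.
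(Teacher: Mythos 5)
Your proposal follows essentially the same route as the paper: map--state duality to reduce to state diagrams, a normal form built from diagrammatic elementary row operations (one green triangle labelled by each coefficient, attached to the wires via pink nodes), and a reduction of every state diagram to normal form by treating the generators, the tensor product of two normal forms, and the contraction of outputs, with the Section~4 commutation propositions supplying the bookkeeping. The only presentational difference is that the paper packages your composition case explicitly as tensor-then-self-plugging (Theorem~\ref{selfplug}, proved by a mod-$4$ analysis of which coefficients get summed), which is precisely the ``pairings collapsing to scalars plus residual patterns'' mechanism you sketch.
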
      
     
  \subsection{Normal form}
  
   
  
  Suppose $\{e_{k}| 0\leq k \leq 2^{m}-1\}$ are the $2^m$-dimensional standard unit  column vectors (with entries all 0s except for one 1):
 \[
e_k=\begin{blockarray}{cl}
\begin{block}{(c)l}
     0 &r_0\\
      \vdots    &  \\
      1&r_k\\
     \vdots &  \\
       0&r_{2^m-1}\\
\end{block}
\end{blockarray}
 \]
 where $r_i$ denote the i-th row, $0\leq i \leq 2^{m}-1,  m\geq 1$.
 
 Let 
 $$ \ket{0}= \begin{pmatrix}
        1  \\
        0  \\
 \end{pmatrix}, \ket{1}= \begin{pmatrix}
        0  \\
        1  \\
 \end{pmatrix}.$$
Then 
\begin{equation}\label{qbitstovector}
$$\ket{a_{m-1}\cdots a_i \cdots a_0}=e_{\sum_{i=0}^{m-1}a_i2^i},$$
\end{equation}
where $a_i\in \{0, 1\}, 0\leq i \leq m-1$ \cite{Mermin}.

 To have the normal form, we first need to have a graphical representation of elementary row addition. Suppose that $0\leq j_1< \cdots < j_s \leq m-1, 1\leq s \leq m$. We claim that the following diagram
 \begin{equation}\label{rowaddrepresentation}
	\beginpgfgraphicnamed{TikZit//rowaddrepresent}
	\InputIfFileExists{TikZit//rowaddrepresent.tikz}{}{\input{./figures/TikZit//rowaddrepresent.tikz}}%
	\endpgfgraphicnamed

 \end{equation}
where $a_j$ connects to  $ j_1, \cdots,  j_s $ via red dots represents the $2^m\times 2^m$ row-addition elementary matrix: 

 \[
A_j=\begin{blockarray}{cccccl}
\begin{block}{(ccccc)l}
     1 & \cdots & 0 &\cdots & 0 &r_0\\
     \vdots    & \ddots & &&  \vdots&  \\
        0   & \cdots & 1 & \cdots& a_j&r_j\\
       \vdots    & &  & \ddots&  \vdots &  \\
        0   & \cdots &0 & \cdots& 1&r_{2^m-1}\\
\end{block}
\end{blockarray}
 \]
 where $a_j$ lies in the $r_j$ row, $j=2^m-1-(2^{j_1}+\cdots+2^{j_s})$.
 We demonstrate this in the following.  Note that all the columns $\{A_{jk}| 0\leq k \leq 2^{m-1}\}$ of $A_j$ are standard unit vectors except the last  (counted from left to right) column. This means 
 $$A_{jk}=A_je_{k}=\left\{
\begin{array}{ll}
e_k, 0\leq k \leq 2^{m}-2\\
e_{2^m-1}+a_je_j, k = 2^{m}-1\\
\end{array}
\right.$$ 
where 
 \[
A_{j2^{m}-1}=\begin{blockarray}{cl}
\begin{block}{(c)l}
     0 &r_0\\
      \vdots    &  \\
      a_j&r_j\\
     \vdots &  \\
       1&r_{2^m-1}\\
\end{block}
\end{blockarray}
 \]
By the equality (\ref{qbitstovector}), we let  $e_k=\ket{a_{m-1}\cdots a_i \cdots a_0}$. Clearly, if $0\leq k \leq 2^{m}-2$, then some $a_i$ must be $0, 0\leq i \leq m-1$. Suppose $a_{j_1}=0$, then $e_k$ can be represented by the  following diagram:
 $$ %
	\beginpgfgraphicnamed{TikZit//rowaddverify0}
	\InputIfFileExists{TikZit//rowaddverify0.tikz}{}{\input{./figures/TikZit//rowaddverify0.tikz}}%
	\endpgfgraphicnamed
$$
 where $\alpha_i \in \{0, \pi \}, 0\leq i \leq m-1.$ Then for $0\leq k \leq 2^{m}-2$, we have
\[
\left\llbracket%
	\beginpgfgraphicnamed{TikZit//rowaddverify1}
	\InputIfFileExists{TikZit//rowaddverify1.tikz}{}{\input{./figures/TikZit//rowaddverify1.tikz}}%
	\endpgfgraphicnamed
\right\rrbracket =e_k
\]

For $k = 2^{m}-1$, we have

 $$\left\llbracket%
	\beginpgfgraphicnamed{TikZit//standardunit}
	\InputIfFileExists{TikZit//standardunit.tikz}{}{\input{./figures/TikZit//standardunit.tikz}}%
	\endpgfgraphicnamed
\right\rrbracket=e_{2^m-1}$$  
\[
\begin{array}{rl}
\left\llbracket%
	\beginpgfgraphicnamed{TikZit//rowaddverify2}
	\InputIfFileExists{TikZit//rowaddverify2.tikz}{}{\input{./figures/TikZit//rowaddverify2.tikz}}%
	\endpgfgraphicnamed
\right\rrbracket &=\ket{\overbrace{1\cdots 1}^m}+a_j\ket{\underset{m-1}{1}\cdots \underset{j_s}{0}\cdots 1\cdots \underset{j_1}{0}\cdots \underset{0}{1}}\\
&=e_{2^m-1}+a_je_j
\end{array}
\]
where we used 
( \ref{qbitstovector}) for the last equality. Therefore, we have shown that the diagram (\ref{rowaddrepresentation}) represents the matrix $A_j$.

Similarly we can check that the following diagram 
\[
	\beginpgfgraphicnamed{TikZit//rowmultring}
	\InputIfFileExists{TikZit//rowmultring.tikz}{}{\input{./figures/TikZit//rowmultring.tikz}}%
	\endpgfgraphicnamed

\]
represents the row-multiplication matrix:
 \[
M=\begin{blockarray}{cccccl}
\begin{block}{(ccccc)l}
     1 & \cdots & 0 &\cdots & 0 &r_0\\
     \vdots    & \ddots & &&  \vdots&  \\
        0   & \cdots & 1 & \cdots& 0&r_k\\
       \vdots    & &  & \ddots&  \vdots &  \\
        0   & \cdots &0 & \cdots& a_{2^m-1}&r_{2^m-1}\\
\end{block}
\end{blockarray}
 \]

Given an arbitrary  vector $(a_0, a_1, \cdots, a_{2^m-1})^T$ with $a_i \in \mathbb{C}, m\geq 1$, we claim that it can be uniquely represented by the following normal form:
 \begin{equation}\label{normalfring}
	\beginpgfgraphicnamed{TikZit//normalform3}
	\InputIfFileExists{TikZit//normalform3.tikz}{}{\input{./figures/TikZit//normalform3.tikz}}%
	\endpgfgraphicnamed

 \end{equation}	
where for those diagrams which represent row additions, $a_i$ connects to wires with pink nodes depending on $i$, and all possible connections are included in the normal form. Actually, one can check that there are $\binom{m}{1}+\binom{m}{2} +\cdots + \binom{m}{m}=2^m-1$ row additions in the normal form. By Proposition \ref{addcommutatgencont}, all the  row addition diagrams are commutative with each other.

The normal form (\ref{normalfring}) is actually obtained via the following processes:
\[
 \begin{pmatrix}
        0  \\ 0\\
        \vdots \\
        1 \end{pmatrix} \xrightarrow[\text{addition}]{\text{row}}  \begin{pmatrix}
        a_0  \\ 0\\
        \vdots \\
        1 \end{pmatrix} \xrightarrow[\text{addition}]{\text{row}}  \begin{pmatrix}
        a_0  \\  a_1\\
        \vdots \\ a_{2^m-2}\\
        1 \end{pmatrix}  \xrightarrow[\text{multiplication}]{\text{row}}   \begin{pmatrix}
        a_0  \\  a_1\\
        \vdots \\ a_{2^m-2}\\
        a_{2^m-1} \end{pmatrix}
\]
In the case of $m=0$, for any complex number $a$, its normal form is defined as
  $$ %
	\beginpgfgraphicnamed{TikZit//scalarnorm}
	\begin{tikzpicture}
	\begin{pgfonlayer}{nodelayer}
		\node [style=rn] (0) at (0, 0.25) {$\pi$};
		\node [style=gbox, scale=1] (1) at (0, -0.25) {$a$};
	\end{pgfonlayer}
	\begin{pgfonlayer}{edgelayer}
		\draw (0) to (1);
	\end{pgfonlayer}
\end{tikzpicture}}%
	\endpgfgraphicnamed
$$
where 
 $$\left\llbracket%
	\beginpgfgraphicnamed{TikZit//scalarnorm}
	}%
	\endpgfgraphicnamed
\right\rrbracket=a$$  
By the map-state duality as given in (\ref{maptostate}), we obtain the universality of ZX-calculus over  $\mathbb{C}$:  any $2^m \times 2^n$ matrix $A$ with $m, n \geq 0$ can be represented by a ZX diagram. 

Now we make the following conventions due to the rule (Aso).
  $$ %
	\beginpgfgraphicnamed{TikZit//wntoalgzx}
	\InputIfFileExists{TikZit//wntoalgzx.tikz}{}{\input{./figures/TikZit//wntoalgzx.tikz}}%
	\endpgfgraphicnamed
 \quad\quad\quad\quad  %
	\beginpgfgraphicnamed{TikZit//mlegsblackspider}
	\InputIfFileExists{TikZit//mlegsblackspider.tikz}{}{\input{./figures/TikZit//mlegsblackspider.tikz}}%
	\endpgfgraphicnamed
$$

At the end of this subsection, we give another form to the  normal form (\ref{normalfring}) as follows. 

\begin{proposition}\label{normalformwithwpro}
 $$ %
	\beginpgfgraphicnamed{TikZit//normalformwithw}
	\InputIfFileExists{TikZit//normalformwithw.tikz}{}{\input{./figures/TikZit//normalformwithw.tikz}}%
	\endpgfgraphicnamed
$$
 where each green box $a_j$ connects to pink $\pi$ nodes at the locations labelled as  $j_1,\cdots, j_s$ with $j=2^m-1-(2^{j_1}+\cdots+2^{j_s}), 0\leq j_1< \cdots < j_s \leq m-1$. In another words,  $j_1,\cdots, j_s$ are exactly the locations where the coefficients are $0$ in the binary expansion $j=\sum_{i=0}^{m-1}a_i2^i$.
 \end{proposition}
 \begin{proof}
  \[
	\beginpgfgraphicnamed{TikZit//normalformwithwprf}
	\InputIfFileExists{TikZit//normalformwithwprf.tikz}{}{\input{./figures/TikZit//normalformwithwprf.tikz}}%
	\endpgfgraphicnamed

   \]
     \[
	\beginpgfgraphicnamed{TikZit//normalformwithwprf2}
	\InputIfFileExists{TikZit//normalformwithwprf2.tikz}{}{\input{./figures/TikZit//normalformwithwprf2.tikz}}%
	\endpgfgraphicnamed

   \]
   Repeat this processes and by the associative rule (Aso) and the $\pi$ copy rule (B3) we then finish the proof.
  \end{proof}

  \subsection{Completeness}
  
 Completeness means for any two diagrams  $D_1$ and $D_2$, if $\left\llbracket D_1 \right\rrbracket = \left\llbracket D_2 \right\rrbracket $, then $D_1=D_2$ can be derived from the ZX rules. Because of the following map-state duality, we can assume that each diagram is a  state diagram (diagram without any input):  
   \begin{equation}\label{maptostate}
	\beginpgfgraphicnamed{TikZit/mapstatedual2}
	\InputIfFileExists{TikZit/mapstatedual2.tikz}{}{\input{./figures/TikZit/mapstatedual2.tikz}}%
	\endpgfgraphicnamed

\end{equation}
For proof of completeness, assume that  $\left\llbracket D_1 \right\rrbracket = \left\llbracket D_2 \right\rrbracket $. 
The proof strategy is to show that each state diagram (including scalars) can be written into a normal form. Then $D_1$ and $D_2$ must have the same normal form, since $\left\llbracket D_1 \right\rrbracket = \left\llbracket D_2 \right\rrbracket $. So we can rewrite $D_1$ into $D_2$ by first rewriting $D_1$ into its normal form then inverting the rewriting (which is a series of equalities) from $D_2$ to its normal form, which means $D_1=D_2$ can be derived from the ZX rules. 
  
  In general, each state diagram has the following form:
     \begin{equation}\label{generalstatediam}
	\beginpgfgraphicnamed{TikZit/generalstatedm}
	\InputIfFileExists{TikZit/generalstatedm.tikz}{}{\input{./figures/TikZit/generalstatedm.tikz}}%
	\endpgfgraphicnamed

\end{equation}
  where $A_1, A_2, \cdots, A_n$ are diagrams composed of generators in parallel.
 Since $A_1$ is a state diagram, it must be composed of cap or  %
	\beginpgfgraphicnamed{TikZit/greendot}
	\begin{tikzpicture}
	\begin{pgfonlayer}{nodelayer}
		\node [style=gn] (0) at (0, 0) {};
		\node [style=none] (1) at (0, -0.25) {};
	\end{pgfonlayer}
	\begin{pgfonlayer}{edgelayer}
		\draw (0) to (1.center);
	\end{pgfonlayer}
\end{tikzpicture}}%
	\endpgfgraphicnamed
. Also, we have 
 
 $$  %
	\beginpgfgraphicnamed{TikZit/sequentialbend}
	\InputIfFileExists{TikZit/sequentialbend.tikz}{}{\input{./figures/TikZit/sequentialbend.tikz}}%
	\endpgfgraphicnamed
$$
 which means if the tensor of two normal forms can be rewritten into a normal form, a bending of any generator can be rewritten into a normal form, and a self-plugging on a normal form (connecting two outputs of a normal form with a cup) can  be rewritten into a normal form, then a normal form sequentially connected with a diagram composed of tensors of generators can be rewritten into a normal form. As a consequence, any state diagram as depicted in (\ref{generalstatediam}) can be rewritten into a normal form. 
  

To summarise, we need to prove the following statements:
 \begin{enumerate}
 \item the juxtaposition of any two diagrams in normal form can be rewritten
into a normal form.
 \item a self-plugging on a diagram in normal
form can be rewritten into a normal form.
 \item all generators bended in state diagrams or already being state diagrams can be rewritten into normal forms.
  \end{enumerate}
    
  \subsubsection{ Rewrite the tensor product of  two normal forms 
into a normal form}  
First, we note that for $m=0$,  the tensor of two scalar diagrams in normal form is still a normal form:
   $$ %
	\beginpgfgraphicnamed{TikZit//scalarsum}
	\InputIfFileExists{TikZit//scalarsum.tikz}{}{\input{./figures/TikZit//scalarsum.tikz}}%
	\endpgfgraphicnamed
$$
  \begin{proposition}\label{piredonpairpidm}  
 \[
	\beginpgfgraphicnamed{TikZit//piredonpairpidmdm}
	\InputIfFileExists{TikZit//piredonpairpidmdm.tikz}{}{\input{./figures/TikZit//piredonpairpidmdm.tikz}}%
	\endpgfgraphicnamed

\]
    \end{proposition}    
 

Given two normal forms such that

 \begin{equation}\label{normalformeq}
\left\llbracket%
	\beginpgfgraphicnamed{TikZit//normalform3}
	\InputIfFileExists{TikZit//normalform3.tikz}{}{\input{./figures/TikZit//normalform3.tikz}}%
	\endpgfgraphicnamed
\right\rrbracket = \begin{pmatrix}
        a_0  \\  a_1\\
        \vdots \\ a_{2^m-2}\\
        a_{2^m-1} \end{pmatrix}
 \end{equation}	
 and
  \begin{equation}\label{normalformbeq}
	\left\llbracket%
	\beginpgfgraphicnamed{TikZit//normalform3b}
	\InputIfFileExists{TikZit//normalform3b.tikz}{}{\input{./figures/TikZit//normalform3b.tikz}}%
	\endpgfgraphicnamed
\right\rrbracket = \begin{pmatrix}
        b_0  \\  b_1\\
        \vdots \\ b_{2^n-2}\\
        b_{2^n-1} \end{pmatrix}
 \end{equation}	

where $m, n $ are positive integers, $a_i, b_j \in \mathbb{C}$. We want to rewrite the following tensor product of two normal forms into a single normal form:
 \begin{equation}\label{normalformtensoreq}
	\beginpgfgraphicnamed{TikZit//normalform3}
	\InputIfFileExists{TikZit//normalform3.tikz}{}{\input{./figures/TikZit//normalform3.tikz}}%
	\endpgfgraphicnamed
\quad %
	\beginpgfgraphicnamed{TikZit//normalform3b}
	\InputIfFileExists{TikZit//normalform3b.tikz}{}{\input{./figures/TikZit//normalform3b.tikz}}%
	\endpgfgraphicnamed

 \end{equation}	
By naturality of the ambient category for ZX-calculus,  each diagram can move  freely upside and down without going across any other diagrams. Furthermore, the row addition diagrams can commute with each other. Therefore, we can rearrange the order of the diagrams of  elementary transformations in the following way:

 \begin{equation}\label{normalformreplaceeq}
	\beginpgfgraphicnamed{TikZit//normalformreplacedm}
	\InputIfFileExists{TikZit//normalformreplacedm.tikz}{}{\input{./figures/TikZit//normalformreplacedm.tikz}}%
	\endpgfgraphicnamed

  \end{equation}	
where for simplicity we use labelled boxes to represent the corresponding diagrams of  elementary transformations. Then we apply the equalities (\ref{nlinestensornormalformeq}), (\ref{normalformtensornlineseq}), (\ref{nlinestensornormalformaddeq}), (\ref{nlinestensormmultiplywq}) which turn the tensor of diagrammatic elementary transformation and parallel wires into a connected diagram, and we get 
 \begin{equation}\label{normalformreplaceexpeq}
	\beginpgfgraphicnamed{TikZit//normalformreplaceexpdm}
	\InputIfFileExists{TikZit//normalformreplaceexpdm.tikz}{}{\input{./figures/TikZit//normalformreplaceexpdm.tikz}}%
	\endpgfgraphicnamed

  \end{equation}	
where for $0\leq i \leq 2^{m}-1$, $\{a_i\}$ represents the family of $2^n$ elementary transformations labelled by $a_i$ but with different distribution of red $\pi$ pairs on the right-most $n$ wires; and for $0\leq j \leq 2^{n}-1$, $\{b_j\}$ represents the family of $2^m$ elementary transformations labelled by $b_j$ but with different distribution of red $\pi$ pairs on the left-most $m$ wires. By Proposition \ref{addpidoublecom} and  Proposition \ref{multipidoublecom}, diagrams in any  $\{a_i\}$ family or  $\{b_j\}$ family are commutative to each other.  First let  the $a_{2^m-1}$ diagram without any red $\pi$ pair commute with other diagrams in the  $\{a_{2^m-1}\}$ family until it is placed at the bottom of the diagram in (\ref{normalformreplaceexpeq}). Then let the  $b_{2^n-1}$ diagram without any  red $\pi$ pair  first commute with other diagrams in the  $\{b_{2^n-1}\}$ family, and furthermore  commute with all the $a_{2^m-1}$ diagrams with red $\pi$ pairs  by Proposition \ref{addpipairmultiplycommutgp}, finally reaches the place right above the $a_{2^m-1}$ diagram without any red $\pi$ pair. By Corollary \ref{andadditionco},  these two diagrams which have no  red $\pi$ pairs can combined into a single  $a_{2^m-1}b_{2^n-1}$ diagram   without any red $\pi$ pair. 

Below we show step by step how to get all the $a_{i}b_{j}$ diagrams ($0\leq i \leq 2^{m}-2, 0\leq j \leq 2^{n}-2$) without any red $\pi$ pair while cancelling out other diagrams with red $\pi$ pairs.

\begin{enumerate}
\item  \label{it0}   In the $\{a_0\}$ family, by Proposition \ref{piredonpairpidm}, each diagram with  red $\pi$ pair  will be annihilated by the red $\pi$ nodes on top, thus only  $\{a_0\}$ with no  red $\pi$ pair left, which will be denoted as $A_0$.

\item  \label{itd}  In the $\{b_0\}$ family, there are $2^m-1$ diagrams with red $\pi$ pairs, but only one of them has the same location distribution (on the left-most $m$ wires) as that of the pink nodes connected with the output of the $a_0$ box in $A_0$, we denote this diagram by $b_0^{a_0}$.  Then  $b_0^{a_0}$ commutes with other diagrams of  the $\{b_0\}$ family. It also commutes with each diagram of the families $\{a_1\}, \{a_2\}, \cdots, \{a_{2^m-2}\}$, following Proposition \ref{addpipair2sidecommutprop28}. Then  $b_0^{a_0}$ can commute to be right below $A_0$.  Now by  Proposition \ref{pitopaddpipaircommutprop}, we get the diagram $a_0b_0$  (without any  red $\pi$ pair)  and $A_0$ again, while $b_0^{a_0}$ is consumed. By Proposition \ref{addcommutatgencont},  $a_0b_0$ commutes with $A_0$. Furthermore,  $a_0b_0$ commutes with each diagram in the $\{a_i\}$ family, $0\leq i \leq 2^{m}-2$, by Proposition \ref{addpipair2sidecommutprop29};  commutes with each diagram in the $\{b_i\}$ family, $0\leq i \leq 2^{n}-2$, by Proposition \ref{addpipair2sidecommutprop29b};   commutes with each diagram in the $\{b_{2^n-1}\}$ family (with red $\pi$ pairs), by Corollary \ref{addpipairmulcommutprop30bcro};  and commutes with each diagram in the $\{a_{2^m-1}\}$ family (with red $\pi$ pairs), by Proposition \ref{addpipairmulcommutprop30b}. Therefore $a_0b_0$ can be commuted to be right above the diagram  $a_{2^m-1}b_{2^n-1}$.

\item \label{it2}   Similarly, we can choose  $b_i^{a_0}$ from the $\{b_i\}$ family, for $1\leq i \leq 2^{n}-2$, where  $b_i^{a_0}$  has  the same location distribution (on the left-most $m$ wires) of red $\pi$ pairs as that of the pink nodes connected with the output of the $a_0$ box in $A_0$. Repeating the method of step  \ref{itd}, we get diagrams  $a_0b_i$ located sequentially right above $a_0b_0$,  for $1\leq i \leq 2^{n}-2$, with all those $b_i^{a_0}$ eliminated. 

\item \label{it3}  Now choose $b_{2^n-1}^{a_0}$ from the $\{b_{2^n-1}\}$ family, where $b_{2^n-1}^{a_0}$ has  the same location distribution (on the left-most $m$ wires) of red $\pi$ pairs as that of the pink nodes connected with the output of the $a_0$ box in $A_0$. Since all the $b_i^{a_0}, 0\leq i \leq 2^{n}-2$ have been eliminated, by  Proposition \ref{addpipairmultiplycommutgp}, $b_{2^n-1}^{a_0}$ can commute with all the remaining diagrams of the families $b_i, 0\leq i \leq 2^{n}-2$, and commute with all the diagrams with red $\pi$ pairs in the $\{a_i\}$ families, for $1\leq i \leq 2^{m}-2$. Furthermore,   the location distribution (on the left-most $m$ wires) of red $\pi$ pairs in $b_{2^n-1}^{a_0}$ is different from the  location distribution (on the left-most $m$ wires) of pink nodes connected with the output of the $a_i$ for  $1\leq i \leq 2^{m}-2$. Then  by  Corollary \ref{addpipairmulcommutprop30ccro},   $b_{2^n-1}^{a_0}$  commutes with each diagram with no red $\pi$ pairs in the families $\{a_i\}$  for $1\leq i \leq 2^{m}-2$. 
Therefore,  $b_{2^n-1}^{a_0}$ can commute to be right below $A_0$. By  Proposition \ref{pimultiaddcombinepro}, we get $a_0b_{2^n-1}$  with no red $\pi$ pairs, while $b_{2^n-1}^{a_0}$ and $A_0$ are eliminated. Then following the step  \label{itd}, $a_0b_{2^n-1}$ can be commuted to the location right above $a_0b_i, 0\leq i \leq 2^{n}-2$.    So far, we get the products  $a_0b_i, 0\leq i \leq 2^{n}-1$, and all the diagrams in the  $\{a_0\}$ family have been eliminated,  the diagrams $b_i^{a_0},$ $1\leq i \leq 2^{n}-1$, are also eliminated, while all the diagrams in the families $\{a_i \}, 1\leq i \leq 2^{m}-2, $ remain unchanged. 

\item Repeat the steps  \ref{it0},  \ref{itd},  \ref{it2},  \ref{it3} for the families $\{a_i \}, 1\leq i \leq 2^{m}-2, $ and $\{b_j \}, 0\leq j \leq 2^{n}-1, $ we  get the products  $a_ib_j, 1\leq i \leq 2^{m}-2, 0\leq j \leq 2^{n}-1$ at the bottom part. Now what remains is the following diagrams (from top to bottom):  families $\{b_j \}, 0\leq j \leq 2^{n}-2, $ without red $\pi$ pairs; family $\{a_{2^m-1}\}$  with red $\pi$ pairs; the products  $a_ib_j, 0\leq i \leq 2^{m}-2, 0\leq j \leq 2^{n}-1$; and the product  $a_{2^m-1}b_{2^n-1}$. We still need to get the products $a_{2^m-1}b_j,  0\leq j \leq 2^{n}-2$ and eliminate the non-product diagrams.  

\item \label{it4} Now each  $\{b_i \}$ family without red $\pi$ pairs just has one diagram $b_i, 0\leq i \leq 2^{n}-2$. Denote by $a_{2^m-1}^{b_i}$  the diagram in the  family $\{a_{2^m-1}\}$  that has the same location distribution (on the right-most $n$ wires) of red $\pi$ pairs as that of the pink nodes connected with the output of  $b_i, 0\leq i \leq 2^{n}-2$.  By  Proposition \ref{addpipairmulcommutprop30c}, $a_{2^m-1}^{b_0}$ can commute with  each $b_i, 1\leq i \leq 2^{n}-2$ to the location just below $b_0$. Then by Proposition \ref{pimultiaddcombinepro}, we get $a_{2^m-1}b_0$ with no red $\pi$ pairs, while $a_{2^m-1}^{b_0}$ and $b_0$ are eliminated. $a_{2^m-1}b_0$ can commute with each $b_i, 1\leq i \leq 2^{n}-2$, and commute with the remaining diagrams in family $\{a_{2^m-1}\}$  with red $\pi$ pairs, by Proposition \ref{addpipairmulcommutprop30c}. Therefore, $a_{2^m-1}b_0$  can commute to the location just above the diagrams with products. 

\item Repeat the step \ref{it4} for $b_i, 1\leq i \leq 2^{n}-2$, we can eliminate 
$a_{2^m-1}^{b_i}$ and $b_i$, while get $a_{2^m-1}b_i$ for $1\leq i \leq 2^{n}-2$. Now what remains is the red $\pi$ nodes on top and all the products $a_ib_j, 0\leq i \leq 2^{m}-1, 0\leq j \leq 2^{n}-1$, which is a normal form.
\end{enumerate}

\subsubsection{ Self-plugging on a normal form}
  \begin{lemma}\label{3and3gdotcirc}
	\beginpgfgraphicnamed{TikZit//3and3gdotcircs}
	\InputIfFileExists{TikZit//3and3gdotcircs.tikz}{}{\input{./figures/TikZit//3and3gdotcircs.tikz}}%
	\endpgfgraphicnamed

 \end{lemma}
  \begin{proof}
   $$  %
	\beginpgfgraphicnamed{TikZit//3and3gdotcircsprf}
	\InputIfFileExists{TikZit//3and3gdotcircsprf.tikz}{}{\input{./figures/TikZit//3and3gdotcircsprf.tikz}}%
	\endpgfgraphicnamed
 $$
  \end{proof}

   \begin{lemma}\label{3and3gdotcircsimp}
	\beginpgfgraphicnamed{TikZit//3and3gdotcircssimp}
	\InputIfFileExists{TikZit//3and3gdotcircssimp.tikz}{}{\input{./figures/TikZit//3and3gdotcircssimp.tikz}}%
	\endpgfgraphicnamed

 \end{lemma}
  \begin{proof}
 $$  %
	\beginpgfgraphicnamed{TikZit//3and3gdotcircssimpprf}
	\InputIfFileExists{TikZit//3and3gdotcircssimpprf.tikz}{}{\input{./figures/TikZit//3and3gdotcircssimpprf.tikz}}%
	\endpgfgraphicnamed
 $$
  \end{proof}  

 \begin{proposition}\label{rule10}  
  \begin{equation}\label{TR17}
	\beginpgfgraphicnamed{TikZit//phaseaddition}
	\InputIfFileExists{TikZit//phaseaddition.tikz}{}{\input{./figures/TikZit//phaseaddition.tikz}}%
	\endpgfgraphicnamed
 \end{equation}
  \end{proposition}
 \begin{proof}
\[  %
	\beginpgfgraphicnamed{TikZit//phaseadditionprf}
	\InputIfFileExists{TikZit//phaseadditionprf.tikz}{}{\input{./figures/TikZit//phaseadditionprf.tikz}}%
	\endpgfgraphicnamed
 \]
 \end{proof}

   \begin{corollary}\label{rule10exten}
\[    %
	\beginpgfgraphicnamed{TikZit//phaseadditionexten}
	\InputIfFileExists{TikZit//phaseadditionexten.tikz}{}{\input{./figures/TikZit//phaseadditionexten.tikz}}%
	\endpgfgraphicnamed
 \]
  \end{corollary}

 \begin{proof}
\[  %
	\beginpgfgraphicnamed{TikZit//phaseadditionextenprf}
	\InputIfFileExists{TikZit//phaseadditionextenprf.tikz}{}{\input{./figures/TikZit//phaseadditionextenprf.tikz}}%
	\endpgfgraphicnamed
 \]
 \end{proof}


    \begin{proposition}\label{rule12th}
    \[ %
	\beginpgfgraphicnamed{TikZit//rule12th}
	\InputIfFileExists{TikZit//rule12th.tikz}{}{\input{./figures/TikZit//rule12th.tikz}}%
	\endpgfgraphicnamed
     \]
         \end{proposition}
      \begin{proof}
     If $a=0$, then the equality holds trivially. Now we assume $a\neq 0$. Then
   $$  %
	\beginpgfgraphicnamed{TikZit//rule12thprf}
	\InputIfFileExists{TikZit//rule12thprf.tikz}{}{\input{./figures/TikZit//rule12thprf.tikz}}%
	\endpgfgraphicnamed
 $$
    $$  %
	\beginpgfgraphicnamed{TikZit//rule12thprf2}
	\InputIfFileExists{TikZit//rule12thprf2.tikz}{}{\input{./figures/TikZit//rule12thprf2.tikz}}%
	\endpgfgraphicnamed
 $$
  \end{proof}
  
    \begin{corollary}\label{rule12thexten}
    Suppose $a$ is connected to the $i$-th line via a pink node ($i>1$). Then
      \begin{equation}\label{rule12thexteneq}
	\beginpgfgraphicnamed{TikZit//rule12thextendm}
	\InputIfFileExists{TikZit//rule12thextendm.tikz}{}{\input{./figures/TikZit//rule12thextendm.tikz}}%
	\endpgfgraphicnamed
 
     \end{equation}
    where on the left side of  (\ref{rule12thexteneq}) the node $a$ is connected to the $i$-th line and the 1st line via two pink nodes.
         \end{corollary}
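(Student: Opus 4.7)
The plan is to reduce Corollary \ref{rule12thexten} to the already-proved Proposition \ref{rule12th} by exploiting the symmetric structure of $\mathfrak{C}$, i.e.\ by permuting wires with $\sigma$ until the $i$-th wire is placed into the position of the $1$-st wire, applying the proposition there, and then undoing the permutation.

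First, I would rewrite the left-hand side of (\ref{rule12thexteneq}) by conjugating with a block of swap gates that transposes the $1$-st and $i$-th wires. Since the diagrams on the two sides of Proposition \ref{rule12th} are connected only through the pink nodes sitting on the two affected wires, and swaps slide freely past identity wires by naturality, this conjugation puts the configuration into exactly the form handled by Proposition \ref{rule12th} (with the node $a$ now connected through two pink nodes to what is locally the $1$-st and $2$-nd wires). At that point I would apply Proposition \ref{rule12th} verbatim to rewrite the inner block, and then push the swaps back through to their original positions.

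The steps in order: (i) insert a pair of canceling swaps around the $i$-th wire, using $\sigma\sigma=\id{}$; (ii) use naturality of swaps together with the commutation results for pink-controlled configurations (Corollary \ref{picntcommutcro} and Proposition \ref{picntcommutesam}) to slide the inner swap block so that the $a$ node becomes attached to adjacent wires; (iii) apply Proposition \ref{rule12th} to the inner subdiagram; (iv) slide the swaps back and cancel them again with $\sigma\sigma=\id{}$, recovering the right-hand side of (\ref{rule12thexteneq}).

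The main obstacle is bookkeeping rather than genuine mathematical content: I need to be careful that when the swap block is moved across the diagram, the red $\pi$ pairs and pink nodes on all the other wires are either unaffected or transported in a way compatible with the statement of the corollary. In particular, the two pink nodes on the $1$-st and $i$-th lines in the hypothesis must get permuted consistently so that after undoing the swap the resulting diagram has exactly the connection pattern claimed on the right-hand side. If any difficulty arises it will be at this stage, but the commutation lemmas already established (in particular Propositions \ref{picntcommutesam} and \ref{picntcommutesamgrn}) allow all the local rearrangements required, so the reduction goes through without needing any new axiom.
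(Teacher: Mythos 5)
Your first step---permuting wires so that the $i$-th line becomes adjacent, by naturality of the symmetry---is exactly the paper's opening move: its proof assumes w.l.o.g.\ that $i=2$, ``otherwise we could swap the $i$-th line and the 2nd line on both sides of (\ref{rule12thexteneq})''. Up to that point the two arguments coincide; note that you do not need Corollary \ref{picntcommutcro} or Proposition \ref{picntcommutesam} for this part, since naturality of $\sigma$ alone transports the whole configuration, pink nodes and red $\pi$ pairs included.

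The gap is in your step (iii). After the swap the configuration is \emph{not} ``exactly the form handled by Proposition \ref{rule12th}'': in Proposition \ref{rule12th} the node $a$ carries no additional pink-node leg, whereas in the corollary, even after the reduction to $i=2$, the node $a$ is still attached to a further line through an extra pink node. That extra leg is the entire content of the corollary and cannot be eliminated by any wire permutation, so Proposition \ref{rule12th} cannot be applied ``verbatim''. The paper's proof, after the w.l.o.g.\ reduction, carries out a further diagrammatic derivation in the short notation of (\ref{andshortnotationeq}) whose purpose is precisely to thread that extra connection through the identity of Proposition \ref{rule12th}; this is also why Proposition \ref{rule12extengen}, the many-leg version, needs its own separate argument rather than following by permutation from the same base case. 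Your proposal omits this derivation entirely, so as written it reduces the corollary to a statement that is strictly weaker than what is needed and leaves the essential step unproved.
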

   \begin{proof}
  For simplicity, we prove with the short notation as shown in  (\ref{andshortnotationeq}).  We assume w.l.g that $i=2$, otherwise we could swap the $i$-th line and the 2nd line on both sides of  (\ref{rule12thexteneq}).
   Therefore
    \[ %
	\beginpgfgraphicnamed{TikZit//rule12thextendmprf}
	\InputIfFileExists{TikZit//rule12thextendmprf.tikz}{}{\input{./figures/TikZit//rule12thextendmprf.tikz}}%
	\endpgfgraphicnamed
 \]
    \end{proof}

      \begin{proposition}\label{rule12extengen}
       Suppose $a$ is connected to the $i_1, \cdots, i_s$  via  pink nodes ($i_1>1$). Then one more connection can be added on the right most line:
         \begin{equation}\label{rule12extengeneq}
	\beginpgfgraphicnamed{TikZit//rule12extengendm}
	\InputIfFileExists{TikZit//rule12extengendm.tikz}{}{\input{./figures/TikZit//rule12extengendm.tikz}}%
	\endpgfgraphicnamed
 
       \end{equation}
         \end{proposition}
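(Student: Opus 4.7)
The plan is to proceed by induction on $s$, the number of existing pink connections of the node $a$. The base case $s = 1$ is essentially Corollary \ref{rule12thexten} after a wire permutation: with a single pink connection on line $i_1 > 1$ and the goal of adding one on the right-most line, the configuration can be mapped, via the symmetric structure of $\mathfrak{C}$ (swaps) and the symmetry of the green spider, to the configuration where Corollary \ref{rule12thexten} adds a pink connection on the $1$st line to an $a$-node already connected to some position $i > 1$. Applying Corollary \ref{rule12thexten} and undoing the permutation closes the base case.

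For the inductive step $s \geq 2$, the strategy is to peel off one of the existing pink connections, apply the inductive hypothesis to the resulting smaller configuration, and then reabsorb the peeled connection. Concretely, using an AND-gate expansion based on Corollary \ref{andadditionco} together with the bialgebra variants in Corollary \ref{generalbialgebra}, the $a$-node with connections on $i_1,\ldots,i_s$ can be rewritten as a composition in which one connection (say on $i_s$) is factored off, leaving an $a$-like subdiagram with $s-1$ pink connections on $i_1,\ldots,i_{s-1}$. By the inductive hypothesis, a pink connection can be inserted on the right-most line of this smaller subdiagram; the factored connection on $i_s$ is then reabsorbed using the same rules in reverse, yielding the right-hand side of \eqref{rule12extengeneq}.

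The main obstacle will be the case analysis driven by (i) whether the right-most line already belongs to $\{i_1,\ldots,i_s\}$ and (ii) which index is chosen to peel. If the right-most line is already in the set, one must use the absorption rules from Proposition \ref{pimultiaddcombinepro} and Proposition \ref{TR15} to merge duplicate pink connections cleanly without leaving residual red $\pi$ pairs. If it is not in the set, the commutation results (Propositions \ref{addpipair2sidecommutprop28}, \ref{addpipair2sidecommutprop29}, together with Corollaries \ref{addpipairmulcommutprop30bcro} and \ref{addpipairmulcommutprop30ccro}) are needed to slide the new pink node past the existing connections without generating stray $\pi$-pair residues. Additional care is required when the peeled index coincides with the right-most line: there one should instead peel from $i_1$ (or any $i_j$ distinct from the right-most line) so that the inductive hypothesis still applies with a nonempty index set that remains disjoint from line $1$.
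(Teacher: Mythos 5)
Your base case is fine: the paper's own proof of Corollary \ref{rule12thexten} is exactly the $s=1$ instance up to a swap. But the inductive step has a genuine gap. After you ``peel off'' the connection on $i_s$ via an AND-gate/bialgebra expansion, the remaining subdiagram is not an instance of the left-hand side of (\ref{rule12extengeneq}): it is an $a$-labelled gadget \emph{composed with} the factored-off structure on line $i_s$, and the inductive hypothesis is stated only for the bare gadget, not for the gadget in this composite context. Nothing in your plan shows that the hypothesis can be applied ``under'' the peeled-off piece, and the subsequent ``reabsorb using the same rules in reverse'' is not a valid move either, because after inserting the new pink connection on line $1$ the intermediate diagram is no longer the one produced by running the factorisation forwards. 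In addition, your case (i) --- whether the right-most line already belongs to $\{i_1,\ldots,i_s\}$ --- is vacuous: the hypothesis $i_1>1$ (with the indices listed in increasing order, as everywhere else in the paper) already guarantees that line $1$ is not among the existing connections, and the commutation machinery you invoke (Propositions \ref{addpipair2sidecommutprop28}, \ref{addpipair2sidecommutprop29} and their corollaries) concerns sliding \emph{distinct} labelled gadgets past one another, which is not what is needed here.

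The paper's proof is much more direct and does not use induction at all: it normalises by swapping so that w.l.o.g.\ $i_1=2$, and then performs a single diagrammatic computation, essentially fusing the pink nodes on the remaining lines $i_2,\ldots,i_s$ into the two-line configuration already handled by Proposition \ref{rule12th} and Corollary \ref{rule12thexten} via spider fusion (S1), so that the extra connections ride along unchanged. If you want to keep an inductive structure, you would need to strengthen the induction hypothesis to a statement that is stable under composition with the peeled-off AND/copy structure, or else abandon the peeling and argue, as the paper does, that the additional pink legs are inert spectators in the two-line calculation.
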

      \begin{proof}
 We assume w.l.g that $i_1=2$, otherwise we could swap the $i_1$-th line and the 2nd line on both sides of  (\ref{rule12extengeneq}).
   Therefore
    \[ %
	\beginpgfgraphicnamed{TikZit//rule12extengendmprf}
	\InputIfFileExists{TikZit//rule12extengendmprf.tikz}{}{\input{./figures/TikZit//rule12extengendmprf.tikz}}%
	\endpgfgraphicnamed
 \]
    \end{proof}
     
 \begin{theorem} \label{selfplug}  
 A self-plugged normal form can be rewritten into a normal form. 
   \end{theorem}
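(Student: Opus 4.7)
The plan is to reduce a self-plugged normal form to a juxtaposition of pieces in normal form, and then invoke the result from the previous subsection that a juxtaposition of two normal forms can be rewritten into a single normal form. By naturality in the compact closed category $\mathfrak{C}$ together with the symmetry supplied by $\sigma$, it suffices to treat the case where the self-plugging is applied to the top two output wires of a normal form with $m+1$ outputs; the resulting diagram should represent the $2^{m-1}$-dimensional vector whose $w$-th entry is $v_{00w}+v_{11w}$ for $w$ a bit string of length $m-1$.

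The core step is to push the cup through the stack of row-addition diagrams at the top of the normal form. Each row-addition labelled $a_j$ is attached via pink nodes to some subset $S_j\subseteq\{1,\ldots,m+1\}$, and I would case-split on $S_j\cap\{1,2\}$. If $S_j\cap\{1,2\}=\emptyset$, naturality carries $a_j$ through the cup unchanged. If $|S_j\cap\{1,2\}|=1$, then Corollary~\ref{rule12thexten} rewrites the cup across the single pink connection, replacing the plugged-wire attachment by an attachment on one of the remaining wires. If $S_j\cap\{1,2\}=\{1,2\}$, I would first apply Proposition~\ref{rule12extengen} to absorb one of the two attachments, then use Lemma~\ref{3and3gdotcirc} or Lemma~\ref{3and3gdotcircsimp} to collapse the remaining doubly-connected structure. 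The bottom row-multiplication diagram, which touches the plugged wires, is handled analogously by a direct application of Proposition~\ref{rule12th}. After these local rewrites, I would use Proposition~\ref{rule10} and Corollary~\ref{rule10exten} to combine the scalar labels arising from the different case splits, so that each subset $T\subseteq\{3,\ldots,m+1\}$ of the remaining wires ends up carrying a single label --- namely the sum of the labels contributed by its four \emph{parent} subsets $T$, $T\cup\{1\}$, $T\cup\{2\}$, $T\cup\{1,2\}$ in the original normal form.

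At this point the diagram is a juxtaposition of normal-form fragments on the $m-1$ remaining wires. I would reorganize these fragments using Proposition~\ref{addcommutatgencont} and its corollaries, and finally collapse them into a single normal form by invoking the juxtaposition-to-normal-form procedure established just before this theorem. The hard part will be the combinatorial bookkeeping: verifying that the collapsing identities and the phase-addition identities together realize exactly the coefficient arithmetic $v_{00w}+v_{11w}$, and in particular that the label for the row-addition indexed by $T$ in the target normal form really is the sum of its four parent labels in the source normal form. A secondary technical point is ensuring that every intermediate fragment produced by the case analysis actually fits the normal-form template, so that the tensor-of-normal-forms result is applicable without further ad hoc rewrites.
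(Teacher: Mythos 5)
There is a genuine gap, and it is in the coefficient arithmetic at the heart of your argument. Plugging a cup onto two outputs implements the partial contraction $\bra{00}+\bra{11}$, so the surviving entry on the residual index $w$ is $v_{00w}+v_{11w}$ --- exactly as you state at the outset --- which means the row-addition boxes whose index has bits $01$ or $10$ on the plugged wires (equivalently, those attached to \emph{exactly one} of the two plugged wires) must be \emph{annihilated}, contributing nothing. Your handling of the case $|S_j\cap\{1,2\}|=1$ instead ``replaces the plugged-wire attachment by an attachment on one of the remaining wires,'' and your final bookkeeping sums the labels of all \emph{four} parents $T$, $T\cup\{1\}$, $T\cup\{2\}$, $T\cup\{1,2\}$. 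That computes $v_{00w}+v_{01w}+v_{10w}+v_{11w}$, which is not the partial trace and contradicts your own stated target. The paper's proof makes the correct split explicit: the terms $a_{4k+1}$ and $a_{4k+2}$ (one plugged wire attached) are eliminated outright, and only the two parents $a_{4k}$ (both attached) and $a_{4k+3}$ (neither attached) combine into the new coefficient $b_k=a_{4k}+a_{4k+3}$, with the special pair $a_{2^m-4},a_{2^m-1}$ producing the residual scalar $c$.

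A secondary problem is your closing move. After the cup has been pushed through, the surviving fragments are row operations \emph{sequentially composed on the same} $m-2$ wires; they are not a tensor product of normal forms on disjoint wire sets, so the juxtaposition-to-normal-form procedure from the previous subsection does not apply. The paper instead finishes internally: it reorders the surviving row additions with the commutation results (Proposition \ref{addcommutatgencont} and relatives) and merges coefficients on identical connection sets with the addition lemmas (Proposition \ref{rule10}, Corollary \ref{rule10exten}), arriving directly at the normal form on $m-2$ wires. Your choice of local rewriting tools (Propositions \ref{rule12th}, \ref{rule12extengen}, Corollary \ref{rule12thexten}, Lemmas \ref{3and3gdotcirc} and \ref{3and3gdotcircsimp}) is broadly the right toolkit, but the case analysis must be corrected as above before the argument goes through.
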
      
    \begin{proof} 
    Given a normal form such that

 \begin{equation}\label{normalform3snteq}
\left\llbracket%
	\beginpgfgraphicnamed{TikZit//normalform3snt}
	\InputIfFileExists{TikZit//normalform3snt.tikz}{}{\input{./figures/TikZit//normalform3snt.tikz}}%
	\endpgfgraphicnamed
\right\rrbracket = \begin{pmatrix}
        a_0  \\  a_1\\
        \vdots \\ a_{2^m-2}\\
        a_{2^m-1} \end{pmatrix}
 \end{equation}	
Because of symmetry, we can assume that  the normal form is self-plugged on the right-most two lines and $m\geq 2$:
 \begin{equation}\label{normalform3snteq2}
	\beginpgfgraphicnamed{TikZit//normalform3sntplug}
	\InputIfFileExists{TikZit//normalform3sntplug.tikz}{}{\input{./figures/TikZit//normalform3sntplug.tikz}}%
	\endpgfgraphicnamed

       \end{equation}	
We do the rewriting in an order from bottom to top.
\begin{enumerate}
\item \label{2m1}
Consider the following 
rewriting:
 \begin{equation}\label{normalformptrace1eq}
	\beginpgfgraphicnamed{TikZit//normalformptrace1dm}
	\InputIfFileExists{TikZit//normalformptrace1dm.tikz}{}{\input{./figures/TikZit//normalformptrace1dm.tikz}}%
	\endpgfgraphicnamed

       \end{equation}	
       
\item \label{2m4} Consider the $a_{2^m-4}$ term in addition. Since $2^m-4=2^m-1-(2^0+2^1)$, $a_{2^m-4}$ must connect to the 0-th and 1-st lines on the right-most. Then based on (\ref{normalformptrace1eq}), we have 
 \begin{equation}\label{normalformptrace2eq}
	\beginpgfgraphicnamed{TikZit//normalformptrace2dm}
	\InputIfFileExists{TikZit//normalformptrace2dm.tikz}{}{\input{./figures/TikZit//normalformptrace2dm.tikz}}%
	\endpgfgraphicnamed

       \end{equation}
where $c=a_{2^m-4}+a_{2^m-1}$.
 
\item Consider $a_{4k+1}, 0\leq k \leq 2^{m-2}-1$.   Since $4k+1=2^m-1-(2^{j_1}+\cdots+2^{j_s})=4\cdot2^{m-2}+1-(2+2^{j_1}+\cdots+2^{j_s}), j_1< \cdots < j_s$, there must exist some $j_i$ such that $j_i=1$, and $j_k\neq 0, k=1, \cdots, s$. This means $a_{4k+1}$ must connect to the 1-st line via a pink node while not connected to the  0-th line. Then we have
\begin{equation}\label{normalformptrace4eq}
	\beginpgfgraphicnamed{TikZit//normalformptrace4dm}
	\InputIfFileExists{TikZit//normalformptrace4dm.tikz}{}{\input{./figures/TikZit//normalformptrace4dm.tikz}}%
	\endpgfgraphicnamed

       \end{equation}
where $c=a_{2^m-4}+a_{2^m-1}$. So $a_{4k+1}$ are all eliminated.

\item Consider $a_{4k+2}, 0\leq k \leq 2^{m-2}-1$.   Since $4k+2=2^m-1-(2^{j_1}+\cdots+2^{j_s})=4\cdot2^{m-2}+2-(3+2^{j_1}+\cdots+2^{j_s}), j_1< \cdots < j_s$, there must exist some $j_i$ such that $j_i=0$, and $j_k\neq 1, k=1, \cdots, s$. This means $a_{4k+2}$ must connect to the 0-th line via a pink node while not connected to the  1-st  line. Then we have
\begin{equation}\label{normalformptrace5eq}
	\beginpgfgraphicnamed{TikZit//normalformptrace5dm}
	\InputIfFileExists{TikZit//normalformptrace5dm.tikz}{}{\input{./figures/TikZit//normalformptrace5dm.tikz}}%
	\endpgfgraphicnamed

       \end{equation}
where $c=a_{2^m-4}+a_{2^m-1}$. So $a_{4k+2}$ are all eliminated.

\item Now we consider the case $m=2$. From the previous cases, $a_1$ and $a_2$ are eliminated, $a_0$ and $a_3$ are combined as in (\ref{normalformptrace2eq}). Therefor we have
\begin{equation}\label{normalformptracemis2}
	\beginpgfgraphicnamed{TikZit//normalformptracemis2dm}
	\InputIfFileExists{TikZit//normalformptracemis2dm.tikz}{}{\input{./figures/TikZit//normalformptracemis2dm.tikz}}%
	\endpgfgraphicnamed

       \end{equation}
 where $c=a_{0}+a_{3}$.  Thus for $m=2$, we get a normal form.
 
\item \label{4k} Consider  $a_{4k}, 0\leq k \leq 2^{m-2}-1$. If $m=2$, it is just the case \ref{2m4} which we already considered. Now we consider $m\geq 3$ and $4k\neq 2^m-4$.     Since $4k=2^m-1-(2^{j_1}+\cdots+2^{j_s})=4\cdot2^{m-2}-(1+2^{j_1}+\cdots+2^{j_s}), j_1< \cdots < j_s$, there must be that $j_1=0, j_2=1$, and there is another $j_k \geq 2$, i.e., $4k=2^m-4-(2^{j_{k_1}}+\cdots+2^{j_{k_t}}), j_{k_i} \geq 2$. This means  $a_{4k}$ is connected with the right-most two wires and at least one wire among the left $m-2$ wires via pink nodes. Therefore,
   \begin{equation}\label{normalformptrace3eq}
	\beginpgfgraphicnamed{TikZit//normalformptrace3dm}
	\InputIfFileExists{TikZit//normalformptrace3dm.tikz}{}{\input{./figures/TikZit//normalformptrace3dm.tikz}}%
	\endpgfgraphicnamed

       \end{equation}
where $c=a_{2^m-4}+a_{2^m-1}$.

\item  Consider  $a_{4k+3}, 0\leq k \leq 2^{m-2}-1$. If $m=2$, it is just the case \ref{2m1} which we already considered. Now we consider $m\geq 3$ and $0\leq k < 2^{m-2}-1$.   By case \ref{4k}, $4k+3=2^m-4-(2^{j_{k_1}}+\cdots+2^{j_{k_t}})+3=2^m-1-(2^{j_{k_1}}+\cdots+2^{j_{k_t}}), j_{k_i} \geq 2$. 
This means  $a_{4k+3}$ is not connected with the right-most two wires but with at least one wire among the left $m-2$ wires via pink nodes in the same way as the case of $a_{4k}$. Therefore,
   \begin{equation}\label{normalformptrace6eq}
	\beginpgfgraphicnamed{TikZit//normalformptrace6dm}
	\InputIfFileExists{TikZit//normalformptrace6dm.tikz}{}{\input{./figures/TikZit//normalformptrace6dm.tikz}}%
	\endpgfgraphicnamed

       \end{equation}
where $c=a_{2^m-4}+a_{2^m-1}, b_k=a_{4k+3}+a_{4k},  0\leq k \leq 2^{m-2}-2$.

\item Finally the self-plugged normal form (\ref{normalform3snteq2}) becomes
 \begin{equation}\label{normalformptrace7eq}
	\beginpgfgraphicnamed{TikZit//normalformptrace7dm}
	\InputIfFileExists{TikZit//normalformptrace7dm.tikz}{}{\input{./figures/TikZit//normalformptrace7dm.tikz}}%
	\endpgfgraphicnamed

       \end{equation}
where $c=a_{2^m-4}+a_{2^m-1}, b_k=a_{4k+3}+a_{4k},  0\leq k \leq 2^{m-2}-2, t=2^{m-2}-2$. Clearly, the RHS of (\ref{normalformptrace7eq}) is a normal form now.

\end{enumerate}

      \end{proof}

   \subsubsection{ Rewriting generators into normal form }
   In this section, we prove that all the generators bended in state diagrams or already being state diagrams can be rewritten into normal forms.
  
   \begin{enumerate}
   \item  For the generator  %
	\beginpgfgraphicnamed{TikZit//generalgreenspider}
	\InputIfFileExists{TikZit//generalgreenspider.tikz}{}{\input{./figures/TikZit//generalgreenspider.tikz}}%
	\endpgfgraphicnamed
, we only need consider the diagrams in (\ref{finitegenerators})  according to   Remark \ref{finitegeneratorsrk}: 
 \[   %
	\beginpgfgraphicnamed{TikZit//gcopytonormalfm}
	\InputIfFileExists{TikZit//gcopytonormalfm.tikz}{}{\input{./figures/TikZit//gcopytonormalfm.tikz}}%
	\endpgfgraphicnamed
\]
    \[   %
	\beginpgfgraphicnamed{TikZit//gcodottonormalfm}
	\InputIfFileExists{TikZit//gcodottonormalfm.tikz}{}{\input{./figures/TikZit//gcodottonormalfm.tikz}}%
	\endpgfgraphicnamed
\]
       \[   %
	\beginpgfgraphicnamed{TikZit//singleanodetonormalfm}
	\InputIfFileExists{TikZit//singleanodetonormalfm.tikz}{}{\input{./figures/TikZit//singleanodetonormalfm.tikz}}%
	\endpgfgraphicnamed
\]
     
    \item  For the generators %
	\beginpgfgraphicnamed{TikZit//Id}
	}%
	\endpgfgraphicnamed
,   %
	\beginpgfgraphicnamed{TikZit//cap}
	}%
	\endpgfgraphicnamed
 and  %
	\beginpgfgraphicnamed{TikZit//cup}
	}%
	\endpgfgraphicnamed
 , we have 
       \begin{equation}\label{idtonormalfmeq}
	\beginpgfgraphicnamed{TikZit//idtonormalfm}
	\InputIfFileExists{TikZit//idtonormalfm.tikz}{}{\input{./figures/TikZit//idtonormalfm.tikz}}%
	\endpgfgraphicnamed

     \end{equation}
 \item  For the generator %
	\beginpgfgraphicnamed{TikZit//newhadamard}
	}%
	\endpgfgraphicnamed
, we have 
   \[   %
	\beginpgfgraphicnamed{TikZit//hadtonormalfm}
	\InputIfFileExists{TikZit//hadtonormalfm.tikz}{}{\input{./figures/TikZit//hadtonormalfm.tikz}}%
	\endpgfgraphicnamed
\]
          
    \item Since %
	\beginpgfgraphicnamed{TikZit//triangleinvandtriangle}
	\InputIfFileExists{TikZit//triangleinvandtriangle.tikz}{}{\input{./figures/TikZit//triangleinvandtriangle.tikz}}%
	\endpgfgraphicnamed
 , we only need to consider   %
	\beginpgfgraphicnamed{TikZit//triangle}
	}%
	\endpgfgraphicnamed
:
  \[   %
	\beginpgfgraphicnamed{TikZit//triangletonormalfm}
	\InputIfFileExists{TikZit//triangletonormalfm.tikz}{}{\input{./figures/TikZit//triangletonormalfm.tikz}}%
	\endpgfgraphicnamed
\]
  
  \item  For the generator %
	\beginpgfgraphicnamed{TikZit//swap}
	\InputIfFileExists{TikZit//swap.tikz}{}{\input{./figures/TikZit//swap.tikz}}%
	\endpgfgraphicnamed
, first we have 
  \[   %
	\beginpgfgraphicnamed{TikZit//swaptonormalfm}
	\InputIfFileExists{TikZit//swaptonormalfm.tikz}{}{\input{./figures/TikZit//swaptonormalfm.tikz}}%
	\endpgfgraphicnamed
\]
   Then
   \[   %
	\beginpgfgraphicnamed{TikZit//swaptonormalfm2}
	\InputIfFileExists{TikZit//swaptonormalfm2.tikz}{}{\input{./figures/TikZit//swaptonormalfm2.tikz}}%
	\endpgfgraphicnamed
\]  
   \end{enumerate}
   
\subsubsection{ Rewriting scalars into normal form }
Given an arbitrary scalar diagram, it can be seen as a state diagram on top plugged with cups or %
	\beginpgfgraphicnamed{TikZit//greencodot}
	}%
	\endpgfgraphicnamed
 from the bottom. Such a state diagram has been shown that it can be rewritten into a normal form, and a normal form plugged with  %
	\beginpgfgraphicnamed{TikZit//greencodot}
	}%
	\endpgfgraphicnamed
 can be bended up so that we have can still have a normal form with outputs connected by just cups. 
Since normal form with more than 2 outputs plugged with cups can be reduced to  a normal form with outputs as proved in Theorem \ref{selfplug},
 we just need to show that a normal form with exactly 2 outputs plugged with a cup can be written into a scalar normal form  %
	\beginpgfgraphicnamed{TikZit//scalarnorm}
	}%
	\endpgfgraphicnamed
, which has also been obtained in Theorem \ref{selfplug}.


\section{Conclusion and further work}    
In this paper, we give an algebraic axiomatisation of qubit ZX-calculus and prove its completeness by constructing a normal form of an arbitrary state diagram via elementary transformations in linear algebra. In comparing to previous complete axiomatisation of ZX-calculus, this axiomatisation  has no need to resort to either rules with trigonometry functions or other completeness results. Also the techniques used to achieve completeness here are very useful for application of this formalism. 

 There is lot of work following this direction. First, by the same method of this paper, we will fill all proof details in \cite{qwangrsmring}  for ZX-calculus over arbitrary commutative rings and semi-rings.  Second, we can generalise the completeness of qubit  ZX-calculus to qudit versions which have similar normal form as presented in  \cite{qwangslides}. Furthermore, since all different dimensional qudit ZX-calculus can be put together in a single framework called qufinite ZX-calculus as described in \cite{qwangslides},  the completeness could be obtained similarly as proved in this paper. As a consequence, one could reconstruct quantum theory using the qufinite ZX-calculus.
 
  On the application side,  we can utilise the techniques invented in this paper to simplify quantum circuits. Also with the diagrammatic representation of elementary matrices, we can solve linear systems in ZX-calculus  \cite{wanglinearstm}, which would be helpful for  tackling problems in quantum machine learning.

 \section*{Acknowledgements} 

This work is supported by AFOSR grant FA2386-18-1-4028. The author would like to thank Bob Coecke, KangFeng  Ng, and Aleks Kissinger for useful discussions. 

\bibliographystyle{eptcs}
\bibliography{generic}

\begin{thebibliography}{10}
\providecommand{\bibitemdeclare}[2]{}
\providecommand{\surnamestart}{}
\providecommand{\surnameend}{}
\providecommand{\urlprefix}{Available at }
\providecommand{\url}[1]{\texttt{#1}}
\providecommand{\href}[2]{\texttt{#2}}
\providecommand{\urlalt}[2]{\href{#1}{#2}}
\providecommand{\doi}[1]{doi:\urlalt{http://dx.doi.org/#1}{#1}}
\providecommand{\bibinfo}[2]{#2}

\bibitemdeclare{article}{msw2017}
\bibitem{msw2017}
\bibinfo{author}{Miriam \surnamestart Backens\surnameend},
  \bibinfo{author}{Simon \surnamestart Perdrix\surnameend} \&
  \bibinfo{author}{Quanlong \surnamestart Wang\surnameend}
  (\bibinfo{year}{2017}): \emph{\bibinfo{title}{A Simplified Stabilizer
  ZX-calculus}}.
\newblock {\sl \bibinfo{journal}{Electronic Proceedings in Theoretical Computer
  Science}} \bibinfo{volume}{236}, pp. \bibinfo{pages}{1--20},
  \doi{10.4204/eptcs.236.1}.
\newblock \urlprefix\url{http://dx.doi.org/10.4204/EPTCS.236.1}.

\bibitemdeclare{inproceedings}{debeaudrapbianwang}
\bibitem{debeaudrapbianwang}
\bibinfo{author}{Niel \surnamestart de~Beaudrap\surnameend},
  \bibinfo{author}{Xiaoning \surnamestart Bian\surnameend} \&
  \bibinfo{author}{Quanlong \surnamestart Wang\surnameend}
  (\bibinfo{year}{2020}): \emph{\bibinfo{title}{{Fast and Effective Techniques
  for T-Count Reduction via Spider Nest Identities}}}.
\newblock In \bibinfo{editor}{Steven~T. \surnamestart Flammia\surnameend},
  editor: {\sl \bibinfo{booktitle}{15th Conference on the Theory of Quantum
  Computation, Communication and Cryptography (TQC 2020)}}, {\sl
  \bibinfo{series}{Leibniz International Proceedings in Informatics (LIPIcs)}}
  \bibinfo{volume}{158}, \bibinfo{publisher}{Schloss Dagstuhl--Leibniz-Zentrum
  f{\"u}r Informatik}, \bibinfo{address}{Dagstuhl, Germany}, pp.
  \bibinfo{pages}{11:1--11:23}, \doi{10.4230/LIPIcs.TQC.2020.11}.

\bibitemdeclare{inproceedings}{tdsscalarble}
\bibitem{tdsscalarble}
\bibinfo{author}{Titouan \surnamestart Carette\surnameend},
  \bibinfo{author}{Dominic \surnamestart Horsman\surnameend} \&
  \bibinfo{author}{Simon \surnamestart Perdrix\surnameend}
  (\bibinfo{year}{2019}): \emph{\bibinfo{title}{SZX-Calculus: Scalable
  Graphical Quantum Reasoning}}.
\newblock In: {\sl \bibinfo{booktitle}{44th International Symposium on
  Mathematical Foundations of Computer Science, {MFCS} 2019, August 26-30,
  2019, Aachen, Germany}}, pp. \bibinfo{pages}{55:1--55:15},
  \doi{10.4230/LIPIcs.MFCS.2019.55}.

\bibitemdeclare{article}{CoeckeDuncan}
\bibitem{CoeckeDuncan}
\bibinfo{author}{Bob \surnamestart Coecke\surnameend} \& \bibinfo{author}{Ross
  \surnamestart Duncan\surnameend} (\bibinfo{year}{2011}):
  \emph{\bibinfo{title}{Interacting quantum observables: categorical algebra
  and diagrammatics}}.
\newblock {\sl \bibinfo{journal}{New Journal of Physics}}
  \bibinfo{volume}{13}(\bibinfo{number}{4}), p. \bibinfo{pages}{043016}.
\newblock \urlprefix\url{http://stacks.iop.org/1367-2630/13/i=4/a=043016}.
\newblock \bibinfo{note}{\doi {10.1088/1367-2630/13/4/043016}}.

\bibitemdeclare{article}{Cowtan_2020}
\bibitem{Cowtan_2020}
\bibinfo{author}{Alexander \surnamestart Cowtan\surnameend},
  \bibinfo{author}{Silas \surnamestart Dilkes\surnameend},
  \bibinfo{author}{Ross \surnamestart Duncan\surnameend}, \bibinfo{author}{Will
  \surnamestart Simmons\surnameend} \& \bibinfo{author}{Seyon \surnamestart
  Sivarajah\surnameend} (\bibinfo{year}{2020}): \emph{\bibinfo{title}{Phase
  Gadget Synthesis for Shallow Circuits}}.
\newblock {\sl \bibinfo{journal}{Electronic Proceedings in Theoretical Computer
  Science}} \bibinfo{volume}{318}, p. \bibinfo{pages}{213?228},
  \doi{10.4204/eptcs.318.13}.

\bibitemdeclare{inproceedings}{amarngwang}
\bibitem{amarngwang}
\bibinfo{author}{Amar \surnamestart Hadzihasanovic\surnameend},
  \bibinfo{author}{Kang~Feng \surnamestart Ng\surnameend} \&
  \bibinfo{author}{Quanlong \surnamestart Wang\surnameend}
  (\bibinfo{year}{2018}): \emph{\bibinfo{title}{Two Complete Axiomatisations of
  Pure-State Qubit Quantum Computing}}.
\newblock In: {\sl \bibinfo{booktitle}{Proceedings of the 33rd Annual ACM/IEEE
  Symposium on Logic in Computer Science}}, \bibinfo{series}{LICS ?18}, p.
  \bibinfo{pages}{502?511}, \doi{10.1145/3209108.3209128}.

\bibitemdeclare{inproceedings}{jpvbeyondlics}
\bibitem{jpvbeyondlics}
\bibinfo{author}{Emmanuel \surnamestart Jeandel\surnameend},
  \bibinfo{author}{Simon \surnamestart Perdrix\surnameend} \&
  \bibinfo{author}{Renaud \surnamestart Vilmart\surnameend}
  (\bibinfo{year}{2018}): \emph{\bibinfo{title}{Diagrammatic Reasoning Beyond
  Clifford+T Quantum Mechanics}}.
\newblock In: {\sl \bibinfo{booktitle}{Proceedings of the 33rd Annual ACM/IEEE
  Symposium on Logic in Computer Science}}, \bibinfo{series}{LICS '18},
  \bibinfo{publisher}{ACM}, \bibinfo{address}{New York, NY, USA}, pp.
  \bibinfo{pages}{569--578}, \doi{10.1145/3209108.3209139}.
\newblock \urlprefix\url{http://doi.acm.org/10.1145/3209108.3209139}.

\bibitemdeclare{inproceedings}{jpvnormfmlics}
\bibitem{jpvnormfmlics}
\bibinfo{author}{Emmanuel \surnamestart Jeandel\surnameend},
  \bibinfo{author}{Simon \surnamestart Perdrix\surnameend} \&
  \bibinfo{author}{Renaud \surnamestart Vilmart\surnameend}
  (\bibinfo{year}{2019}): \emph{\bibinfo{title}{A Generic Normal Form for
  ZX-Diagrams and Application to the Rational Angle Completeness}}.
\newblock In: {\sl \bibinfo{booktitle}{34th Annual {ACM/IEEE} Symposium on
  Logic in Computer Science, {LICS} 2019, Vancouver, BC, Canada, June 24-27,
  2019}}, pp. \bibinfo{pages}{1--10}, \doi{10.1109/LICS.2019.8785754}.
\newblock \urlprefix\url{https://doi.org/10.1109/LICS.2019.8785754}.

\bibitemdeclare{article}{KissingerG20}
\bibitem{KissingerG20}
\bibinfo{author}{Aleks \surnamestart Kissinger\surnameend} \&
  \bibinfo{author}{Arianne~Meijer{-}van \surnamestart de~Griend\surnameend}
  (\bibinfo{year}{2020}): \emph{\bibinfo{title}{{CNOT} circuit extraction for
  topologically-constrained quantum memories}}.
\newblock {\sl \bibinfo{journal}{Quantum Inf. Comput.}}
  \bibinfo{volume}{20}(\bibinfo{number}{7{\&}8}), pp.
  \bibinfo{pages}{581--596}.

\bibitemdeclare{book}{Mermin}
\bibitem{Mermin}
\bibinfo{author}{N.~David \surnamestart Mermin\surnameend}
  (\bibinfo{year}{2007}): \emph{\bibinfo{title}{Quantum Computer Science: An
  Introduction}}.
\newblock \bibinfo{publisher}{Cambridge University Press}.

\bibitemdeclare{article}{bobanthonywang}
\bibitem{bobanthonywang}
\bibinfo{author}{Anthony \surnamestart Munson\surnameend}, \bibinfo{author}{Bob
  \surnamestart Coecke\surnameend} \& \bibinfo{author}{Quanlong \surnamestart
  Wang\surnameend} (\bibinfo{year}{2020}): \emph{\bibinfo{title}{AND-gates in
  ZX-calculus: spider nest identities and QBC-completeness}}.
\newblock {\sl \bibinfo{journal}{accepted to QPL 2020}}.
\newblock \bibinfo{note}{ArXiv:1910.06818}.

\bibitemdeclare{article}{ngwang}
\bibitem{ngwang}
\bibinfo{author}{Kang~Feng \surnamestart Ng\surnameend} \&
  \bibinfo{author}{Quanlong \surnamestart Wang\surnameend}
  (\bibinfo{year}{2017}): \emph{\bibinfo{title}{A universal completion of the
  {ZX}-calculus}}.
\newblock \bibinfo{note}{ArXiv:1706.09877}.

\bibitemdeclare{article}{ngwang2}
\bibitem{ngwang2}
\bibinfo{author}{Kang~Feng \surnamestart Ng\surnameend} \&
  \bibinfo{author}{Quanlong \surnamestart Wang\surnameend}
  (\bibinfo{year}{2018}): \emph{\bibinfo{title}{Completeness of the
  {ZX}-calculus for Pure Qubit Clifford+T Quantum Mechanics}}.
\newblock
  \bibinfo{note}{\href{http://arxiv.org/abs/1801.07993}{arXiv:1801.07993}}.

\bibitemdeclare{book}{Nielsen}
\bibitem{Nielsen}
\bibinfo{author}{Michael~A. \surnamestart Nielsen\surnameend} \&
  \bibinfo{author}{Isaac~L. \surnamestart Chuang\surnameend}
  (\bibinfo{year}{2010}): \emph{\bibinfo{title}{{Quantum Computation and
  Quantum Information}}}.
\newblock \bibinfo{publisher}{Cambridge University Press},
  \bibinfo{address}{Cambridge}, \doi{10.1017/CBO9780511976667}.
\newblock \bibinfo{note}{\doi{ 10.1017/CBO9780511976667}}.

\bibitemdeclare{inproceedings}{Renaudprulelics}
\bibitem{Renaudprulelics}
\bibinfo{author}{Renaud \surnamestart Vilmart\surnameend}
  (\bibinfo{year}{2019}): \emph{\bibinfo{title}{A Near-Minimal Axiomatisation
  of ZX-Calculus for Pure Qubit Quantum Mechanics}}.
\newblock In: {\sl \bibinfo{booktitle}{34th Annual {ACM/IEEE} Symposium on
  Logic in Computer Science, {LICS} 2019, Vancouver, BC, Canada, June 24-27,
  2019}}, pp. \bibinfo{pages}{1--10}, \doi{10.1109/LICS.2019.8785765}.
\newblock \urlprefix\url{https://doi.org/10.1109/LICS.2019.8785765}.

\bibitemdeclare{article}{wanglinearstm}
\bibitem{wanglinearstm}
\bibinfo{author}{Quanlong \surnamestart Wang\surnameend}:
  \emph{\bibinfo{title}{Solving linear systems in ZX-calculus}}.
\newblock {\sl \bibinfo{journal}{In preparation}}.

\bibitemdeclare{article}{qwangrsmring}
\bibitem{qwangrsmring}
\bibinfo{author}{Quanlong \surnamestart Wang\surnameend}
  (\bibinfo{year}{2019}): \emph{\bibinfo{title}{ZX-calculus over arbitrary
  commutative rings and semirings (extended abstract)}}.
\newblock \bibinfo{note}{ArXiv:1912.01003}.

\bibitemdeclare{article}{wangalg2020}
\bibitem{wangalg2020}
\bibinfo{author}{Quanlong \surnamestart Wang\surnameend}
  (\bibinfo{year}{2020}): \emph{\bibinfo{title}{An algebraic axiomatisation of
  ZX-calculus}}.
\newblock {\sl \bibinfo{journal}{Proceedings of the 17th International
  Conference on Quantum Physics and Logic (QPL) 2020}}.
\newblock \bibinfo{note}{ArXiv:1911.06752}.

\bibitemdeclare{book}{qwangslides}
\bibitem{qwangslides}
\bibinfo{author}{Quanlong \surnamestart Wang\surnameend}
  (\bibinfo{year}{2020}): \emph{\bibinfo{title}{Enter a visual era: process
  theory embodied in ZX-calculus}}.
\newblock \bibinfo{publisher}{Presentation at 17th International Conference on
  Quantum Physics and Logic (QPL)}, \doi{10.13140/RG.2.2.17289.67682}.

\end{thebibliography}





\end{document}